\mathchardef\mhyphen="2D
\newcommand*{\eps}{\varepsilon}
\DeclarePairedDelimiter\norm{\lVert}{\rVert}
\newcommand*{\cF}{\mathcal{F}}
\newcommand*{\cY}{\mathcal{Y}}
\newcommand*{\FF}{\mathbb{F}}
\newcommand*{\GG}{\mathbb{G}}
\newcommand*{\NN}{\mathbb{N}}
\newcommand*{\ZZ}{\mathbb{Z}}
\newcommand*{\naturals}{\mathbb{N}}
\newcommand*{\rx}{\mathsf{x}}
\newcommand*{\ry}{\mathsf{y}}
\newcommand{\bigoh}{\mathcal{O}}
\newcommand{\bigohs}{\mathcal{O}^*}
\newcommand{\cc}[1]{{\mbox{\textnormal{\textsf{#1}}}}\xspace}  %
\newcommand{\SB}{\{\,}
\newcommand{\SM}{\;{|}\;}
\newcommand{\SE}{\,\}}
\newcommand{\NP}{\cc{NP}}
\newcommand{\FPT}{\cc{FPT}}
\newcommand{\Weft}{{\cc{W}}}
\newcommand{\W}[1]{{\Weft}{{[#1]}}}
\newcommand*{\UGC}{\textsc{UGC}\xspace}
\newcommand*{\lin}[2]{\textsc{\ensuremath{#1}-Lin(\ensuremath{#2})}}
\newcommand*{\minlin}[2]{\textsc{Min-\ensuremath{#1}-Lin(\ensuremath{#2})}}
\newcommand*{\mintwolin}{\textsc{Min-2-Lin}}
\newcommand*{\minthreelin}{\textsc{Min-3-Lin}}
\newcommand*{\minrlin}{\textsc{Min-$r$-Lin}}
\newcommand*{\mintwolinz}[1]{\textsc{Min-2-Lin(\ensuremath{\ZZ_{#1}})}}
\newcommand*{\twolinz}[1]{\textsc{2-Lin(\ensuremath{\ZZ_{#1}})}}
\newcommand*{\rep}[1]{\ensuremath{#1^{\sim}}}
\newcommand{\EQ}{\Gamma}
\newcommand*{\range}[2]{\ensuremath{\{#1,\dots,#2\}}}
\newcommand{\BA}[1]{#1_G}
\newcommand{\con}{\textsf{eqn}}
\newcommand{\sep}{\textsf{sep}}
\newcommand{\ed}{\textsf{ed}}
\newcommand{\comp}[1]{\overline{#1}}
\newcommand{\CCC}{\mathcal{C}}
\newcommand{\clasn}[1]{\tau_{#1}}
\DeclareMathOperator{\cost}{cost}
\DeclareMathOperator{\mincost}{mincost}
\DeclareMathOperator{\ord}{ord}
\DeclareMathOperator{\lsu}{lsu}
\DeclareMathOperator{\nxt}{next}
\newcommand{\pbDefP}[4]{%
  \noindent
  \begin{center}
  \begin{boxedminipage}{0.98 \columnwidth}
  {\sc #1}\\[5pt]
  \begin{tabular}{l p{0.70 \columnwidth}}
  {\sc Instance}: & #2\\
  {\sc Parameter:} & #3\\
  {\sc Question}: & #4
  \end{tabular}
  \end{boxedminipage}
  \end{center}
}
\newcommand*{\MLD}{\ensuremath{\textsc{MLD}_{p}}}
\algnewcommand\algorithmicinput{\textbf{Input:}}
\algnewcommand\INPUT{\item[\algorithmicinput]}
\algnewcommand\algorithmicoutput{\textbf{Output:}}
\algnewcommand\OUTPUT{\item[\algorithmicoutput]}
\title{Parameterized Approximability for Modular Linear Equations} 
\author{Konrad K. Dabrowski}{Newcastle University,
  UK}{konrad.dabrowski@newcastle.ac.uk}{https://orcid.org/0000-0001-9515-6945}{}
\author{Peter Jonsson}{Link{\"o}ping University,
  Sweden}{peter.jonsson@liu.se}{https://orcid.org/0000-0002-5288-3330}
  {Supported by the Swedish Research Council (VR) under grant 2021-04371.}
\author{Sebastian Ordyniak}{University of Leeds, Leeds, United Kingdom}
{s.ordyniak@leeds.ac.uk}{https://orcid.org/0000-0003-1935-651X}
  {Supported by the
  Engineering and Physical Sciences Research Council (EPSRC, project EP/V00252X/1).}
\author{George Osipov}{Link{\"o}ping University,
  Sweden, and University of Oxford, UK}{george.osipov@pm.me}{https://orcid.org/0000-0002-2884-9837}
  {Supported by the Swedish Research Council (VR) under grant 2024-00274.}
\author{Magnus Wahlstr{\"o}m}{Royal Holloway, University of London, UK}{Magnus.Wahlstrom@rhul.ac.uk}{https://orcid.org/0000-0002-0933-4504}{}
\authorrunning{K.\,K. Dabrowski, P. Jonsson, S. Ordyniak, G. Osipov, M. Wahlstr{\"o}m} 
\keywords{parameterized complexity, approximation algorithms, linear equations} 
\begin{document}

\maketitle

\begin{abstract}
  We consider the $\minlin{r}{\ZZ_m}$ problem: given a system $S$ of 
  length-$r$ linear equations modulo $m$, find $Z \subseteq S$ of 
  minimum cardinality such that $S-Z$ is satisfiable. 
  The problem is \NP-hard and \UGC-hard to approximate in polynomial time
  within any constant factor even when $r = m = 2$.
  We focus on parameterized approximation with solution size as the parameter.
  Dabrowski, Jonsson, Ordyniak, Osipov and Wahlström [SODA-2023] showed that
  $\mintwolinz{m}$ is in \FPT if $m$ is prime
  (i.e.\ $\ZZ_m$ is a field),
  and it is \W{1}-hard if $m$ is not
  a prime power.
  We show that $\mintwolinz{p^n}$ is \FPT-approximable within a factor of $2$
  for every prime $p$ and integer $n \geq 2$. This implies that
  $\mintwolinz{m}$, $m \in \ZZ^+$, is \FPT-approximable within a factor of $2\omega(m)$ where $\omega(m)$ counts 
  the number of distinct prime divisors of $m$.
  The high-level idea behind the algorithm is to solve
  tighter and tighter relaxations of the problem,
  decreasing the set of 
  possible values for the variables at each step.
  When working over $\ZZ_{p^n}$ and
  viewing the values in base-$p$,
  one can roughly think of a relaxation as fixing 
  the number of trailing zeros and the
  least significant nonzero digits of 
  the values assigned to the variables.
  To solve the relaxed problem, 
  we construct a certain graph
  where solutions can be identified with 
  a particular collection of cuts.
  The relaxation may hide obstructions that will
  only become visible in the next iteration of the algorithm,
  which makes it difficult to find optimal solutions.
  To deal with this, we use a strategy based on shadow removal 
  [Marx \& Razgon, STOC-2011] to compute solutions that
  (1) cost at most twice as much as the optimum and 
  (2) allow us to reduce the set of values for all variables 
  simultaneously.
  We complement the algorithmic result with two lower bounds,
  ruling out constant-factor FPT-approximation for $\minlin{3}{R}$
  over any nontrivial ring $R$ and 
  for $\minlin{2}{R}$ over some finite commutative rings $R$.
\end{abstract}

\newpage

\section{Introduction}
\label{sec:intro}

Systems of linear equations are ubiquitous in computer science and 
mathematics~\cite{grcar2011ordinary} and
methods like Gaussian elimination 
can efficiently solve linear systems over various rings.
Equations and congruences over 
the ring of integers modulo $m$ ($\ZZ_m$)
are of central importance in number theory, but also have many
applications in computer science, including complexity theory, coding theory, cryptography, hash functions and pseudorandom generators, see e.g.~\cite{Blake:ic72,Dawar:etal:lmcs2013,Ding:etal:CRT,Yan:numbertheory}.
Linear equations over modular rings can be solved in polynomial time, but the methods are not suited to dealing with 
inconsistent systems of equations.
We consider the  $\minlin{r}{R}$ problem which asks to find an assignment to a system of linear equation over the ring $R$
that violates the minimum number of equations, and where each equation contains at most $r$
distinct variables. 
 This problem is \NP-hard even when $r = 2$ and $R$ is the simplest
 nontrivial ring $\ZZ_2$~\cite{Kolmogorov:etal:sicomp2017}.
We note that \minrlin($R$) for $r \in \NN$ and finite ring $R$ is a special case of
\textsc{MinCSP$(\Gamma)$} for a finite constraint language~$\Gamma$.
This, and the more general \textsc{Valued CSP}, have been widely studied
from many different perspectives,
e.g.~\cite{Bonnet:etal:esa2016,dabrowski2023ipec,Kim:etal:soda2023,Kolmogorov:etal:sicomp2017,OsipovW23equality,raghavendra2008optimal}.

Two common ways of coping with \NP-hardness 
are approximation and parameterized algorithms,
but neither of them seems sufficient in isolation
to deal with $\minlin{r}{\ZZ_m}$.
Even \mintwolin{} over finite fields such as $\ZZ_2$
is conjectured to be \NP-hard to approximate within any constant factor under
the Unique Games Conjecture (UGC)~\cite{khot2002power}:
see Definition~3 in~\cite{khot2016candidate} and the discussion that follows.
The natural parameter for  $\minlin{r}{\ZZ_m}$ is the cost of the optimal solution
(i.e.\ the number of equations not satisfied by it),
which we denote by $k$.
Under this parameterization, 
$\mintwolinz{m}$ is fixed-parameter tractable 
when $m$ is a prime, i.e.\ $\ZZ_m$ is a field, but \W{1}-hard when $m$ is not a prime power. Moreover,
the problem \minthreelin{} is \W{1}-hard
for every nontrivial ring~\cite{Dabrowski:etal:soda2023}.
This motivates us to study \emph{parameterized approximation} 
algorithms~\cite{Feldmann:etal:algorithms2020,Marx:tcj2008}.
This approach has received rapidly increasing interest (see, for instance,~\cite{EibenRW22ipec,Guruswami:etal:stoc2024,GuruswamiRS24ccc,Lokshtanov:etal:soda2021,lokshtanov2020parameterized,OsipovPW24pointalgebra}). 
Let $c \geq 1$ be a constant. 
A factor-$c$ {\em FPT-approximation
algorithm} takes an instance $(I, k)$, runs in
$O^*(f(k))$\footnote{
 The notation $O^*(\cdot)$ hides polynomial factors in the input size.
} time for an arbitrary computable function $f$,
either returns that there is no solution of size at most $k$ or returns that there is 
a solution of size at most $c \cdot k$. 
Thus, there is more
time to compute the solution (compared to polynomial-time
approximation) and the algorithm may output 
an oversized solution (unlike an exact FPT algorithm).\footnote{
  A decision $c$-approximation procedure for 
  $\mintwolinz{m}$ can be turned into an algorithm
  that returns a $c$-approximate solution using
  self-reducibility; see Section~\ref{sec:prelims}.}
Our main result is the following.
Let $\omega(m)$ be the number of 
distinct prime factors of $m$.

\begin{restatable}{theorem}{theoremalgo}
  \label{thm:algo} 
  For every $m \in \ZZ_+$, 
  $\mintwolinz{m}$ is \FPT-approximable within 
  $2\omega(m)$.
\end{restatable}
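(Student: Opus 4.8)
The plan is to reduce $\mintwolinz{m}$ for an arbitrary modulus to the prime-power case, for which we will establish a factor-$2$ FPT-approximation, and then combine the outputs of the per-prime subroutines. Write $m = p_1^{n_1}\cdots p_t^{n_t}$ with $t = \omega(m)$ distinct primes. By the Chinese Remainder Theorem (CRT), $\ZZ_m \cong \ZZ_{p_1^{n_1}}\times\cdots\times\ZZ_{p_t^{n_t}}$ as rings, so a length-$2$ equation $ax+by=c$ over $\ZZ_m$ is equivalent to the conjunction, over $i\in\{1,\dots,t\}$, of the same equation read modulo $p_i^{n_i}$, and an assignment over $\ZZ_m$ corresponds to an arbitrary tuple consisting of one assignment over each $\ZZ_{p_i^{n_i}}$. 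Given an instance $(S,k)$, I would form for each $i$ the system $S^{(i)}$ obtained by projecting every equation of $S$ modulo $p_i^{n_i}$, on the same set of variables. Then, if $S-Z$ is satisfiable for some $Z\subseteq S$, deleting the projections of the equations of $Z$ makes every $S^{(i)}$ satisfiable, so each $S^{(i)}$ has optimum at most $|Z|$; conversely, given deletion sets $Z_i$ for the systems $S^{(i)}$, deleting from $S$ every equation that has a projection in $\bigcup_i Z_i$ leaves all the projected systems satisfiable and hence, by CRT, leaves $S$ satisfiable, at a cost of at most $\sum_i|Z_i|$ deletions.

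The algorithm then runs, for each $i\in\{1,\dots,t\}$, the factor-$2$ FPT-approximation for $\mintwolinz{p_i^{n_i}}$ on $(S^{(i)},k)$. If some call reports that $S^{(i)}$ has no deletion set of size at most $k$, then neither does $S$, and we report this. Otherwise we obtain sets $Z_i$ with $|Z_i|\leq 2k$, and lifting back as above yields $Z\subseteq S$ with $|Z|\leq\sum_{i=1}^{t}|Z_i|\leq 2tk = 2\omega(m)\cdot k$ and $S-Z$ satisfiable. Since each of the $t$ subroutine calls runs in FPT time, so does the whole procedure. For $n_i=1$ the subroutine can be taken to be exact, since $\mintwolinz{p_i}$ is fixed-parameter tractable by~\cite{Dabrowski:etal:soda2023}; thus the only genuinely new ingredient needed is a factor-$2$ FPT-approximation for $\mintwolinz{p^n}$ with $n\geq 2$, which I would prove as a separate lemma.

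That lemma is the main obstacle, and I would prove it along the lines indicated in the abstract. Working over $\ZZ_{p^n}$, write values in base $p$ and solve a chain of roughly $n$ successively tighter relaxations; at each level the relaxation fixes more of the low-order structure of each variable's value — informally, its number of trailing zeros together with its least significant nonzero digit at that level — which shrinks the set of admissible values per variable, and after all $n$ levels have been processed one is left with a residual system that can be solved directly. Each individual relaxation I would model as a cut problem in an auxiliary graph built from the equations: a feasible way of refining the current per-variable domains corresponds to a family of cuts separating prescribed terminal pairs, and the cost of such a refinement is the number of equations that must be discarded to realise it.

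The delicate point is that even solving one relaxation optimally is hard, because a relaxation may hide obstructions that only surface at the next level, so the relaxed optimum need not extend to a global optimum. I would deal with this in the spirit of the shadow-removal technique of Marx and Razgon: by declaring the ``hidden'' part of a candidate solution to be undeletable, one can compute, in FPT time, a solution to the relaxation that costs at most twice the true optimum and that simultaneously refines every variable's domain in a mutually consistent manner. Two things then require care. First, the factor-$2$ loss must be incurred once overall rather than once per level; I expect this to follow from bounding each level's cost against the global optimum and from maintaining, rather than accumulating, the candidate deletion set as the levels progress. Second, the cut formulation and the domain refinement must remain consistent from one level to the next, so that the obstructions revealed later are genuinely absent from the structure produced at the current level. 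Combining this lemma with the CRT reduction above yields Theorem~\ref{thm:algo}.
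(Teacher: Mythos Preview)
Your proposal is correct and follows essentially the same approach as the paper: the CRT reduction to prime powers is exactly Proposition~\ref{prop:product-approx}, and your outline for $\ZZ_{p^n}$---the equivalence relation based on trailing zeros and least significant nonzero digit, the class-assignment graph with conformal cuts, shadow removal to handle obstructions hidden by the relaxation, and induction on $n$---matches the paper's Lemmas~\ref{lem:simple-instances}--\ref{lem:next-level}. The one place where you are vague is precisely the delicate point: the paper avoids compounding the factor~$2$ not by ``maintaining rather than accumulating'' the deletion set, but by a telescoping budget argument---at each level the computed deletion set $Y$ satisfies $|Y|\le 2q$ while the residual instance $S-Y$ has optimum at most $k-|Y|/2$ compatible with the chosen class assignment (Lemma~\ref{lem:get-class-assignment}), so recursing with parameter $k'=k-|Y|/2$ and taking the union $Y\cup Z$ gives $|Y|+2k'=2k$.
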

We complement the result with two lower bounds.
First, we show that allowing three or more
variables per equation leads to \W{1}-hardness
of constant-factor approximation.

\begin{restatable}{theorem}{theoremhard}
  \label{thm:hard}
  $\minlin{3}{R}$ over every nontrivial ring $R$
  is \W{1}-hard to approximate within any constant factor.
\end{restatable}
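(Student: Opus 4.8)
I would obtain the result by transporting the constant-factor \FPT-inapproximability of a coding-theoretic problem into $\minlin{3}{R}$.
Observe first that, with arbitrarily long equations allowed, the problem $\minlin{r}{\FF}$ over a finite field~$\FF$ and with unbounded~$r$ is nothing but the Nearest Codeword Problem (Maximum-Likelihood Decoding): representing the code by a generator matrix~$G$ and the received word by~$y$, an assignment violating at most~$k$ of the equations $(Gx)_j=y_j$ is exactly a message within Hamming distance~$k$ of~$y$.
The Minimum Distance Problem over~$\FF$ is known to be \W{1}-hard to approximate within every constant factor, and it reduces to Nearest Codeword by the usual ``guess a nonzero pivot and scale it to~$1$'' trick (over a field), preserving both the value and the parameter.
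It then remains to (a)~shorten all equations to length~$3$ and (b)~pass from a finite field to an arbitrary nontrivial ring~$R$ --- both \emph{without altering the optimum or the parameter}, so that the gap is inherited unchanged.

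\textbf{Shortening equations.}
I would replace an equation $a_1x_1+\dots+a_\ell x_\ell=b$ with $\ell\ge4$ by a chain of $\ell-1$ length-$3$ equations with fresh ``running sum'' variables $y_1,\dots,y_{\ell-2}$: take $a_1x_1+a_2x_2-y_1=0$, then $y_{i-1}+a_{i+1}x_{i+1}-y_i=0$ for $i=1,\dots,\ell-3$, and finally $y_{\ell-2}+a_\ell x_\ell=b$.
For any assignment to the $x_j$, setting each $y_i$ to the honest partial sum satisfies every chain equation except possibly the last, which holds exactly when the original equation did; and summing a fully satisfied chain telescopes back to the original equation.
Hence over any ring the minimum number of violated chain equations equals the indicator that the original equation is violated, and since distinct equations use disjoint auxiliary variables the optimum of the whole system is unchanged.
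All coefficients used are original coefficients or~$-1$, so this is valid over every ring.

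\textbf{From a field to an arbitrary ring.}
Fix a maximal ideal $\mathfrak{m}$ of~$R$; the residue field $F=R/\mathfrak{m}$ has characteristic~$0$ or a prime~$p$.
Assume it is~$p$ (the zero-characteristic case is analogous, starting from the corresponding hardness for integer-coefficient rational linear systems, whose hard instances may be taken to have an integral optimum so that it lifts into~$R$).
Starting from a \W{1}-hard gap instance over $\FF_p$, I would keep its variables, now ranging over~$R$; lift each $\FF_p$-coefficient to its representative in $\{0,\dots,p-1\}$ and map it into~$R$ along $\ZZ\to R$; and append the term $p\cdot w_i$ to the $i$-th equation, with $w_i$ fresh.
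As $p\in\mathfrak m$, an $\FF_p$-equation is satisfied by a given $\FF_p$-assignment iff the corresponding $R$-equation is satisfiable for a suitable~$w_i$, so lifting an optimal $\FF_p$-solution shows the $R$-optimum is at most the $\FF_p$-optimum.
Conversely, reducing an $R$-solution modulo~$\mathfrak m$ yields (since $p\equiv0$) an assignment over~$F$ satisfying at least as many of the coefficient-in-$\FF_p$ equations; decomposing each value over an $\FF_p$-basis of~$F$ whose first member is~$1$ and keeping only the first coordinate gives an $\FF_p$-assignment violating no more equations.
So the optima agree.
(The same step also shows $\minlin{r}{\FF}$ is at least as hard over a field~$\FF$ as over its prime subfield, which lets one restrict attention to the prime fields and~$\QQ$.)

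\textbf{Where the difficulty lies.}
The two cost-preserving reductions are routine; the genuine obstacle is securing the \emph{constant} (indeed arbitrarily large) gap.
Linear-equation deletion problems do not seem to support internal gap amplification: the conjunction of two violated equations is not an affine condition, so the product/tensor constructions that multiply gaps for the homogeneous minimum-distance problem do not apply to the inhomogeneous deletion problem we need, and the gap has to be imported wholesale.
Moreover, the one field-specific step in the pipeline --- reducing minimum distance to nearest codeword --- relies on dividing by a pivot coefficient, which fails over a general ring; this is precisely why the detour through a residue field is forced rather than merely convenient.
Making all of this run uniformly over every prime field and over~$\QQ$, and threading the gap unharmed through the chaining and the change of ring, is the crux of the argument.
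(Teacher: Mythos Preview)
Your approach is broadly correct but follows a genuinely different route from the paper. Both start from the parameterized inapproximability of Maximum Likelihood Decoding over $\FF_p$; you take the generator-matrix (Nearest Codeword) view, where every row is a soft equation and the cost is directly the number violated, whereas the paper uses the parity-check view, encoding $Ax=b$ with \emph{crisp} ternary equations (via the same telescoping trick you describe) and placing the objective entirely in soft unaries $x_j=0$. The substantive divergence is the passage to a general ring~$R$: the paper works purely in the additive group, proving hardness first for the cyclic group $C_p$ and then embedding $C_p$ into $(R,+)$ via the structure theorem for finite Abelian groups --- no ring multiplication is ever used, so the argument covers rings without unity and non-commutative rings uniformly. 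Your residue-field route is more ring-theoretic and equally valid for finite commutative rings with~$1$, but it needs the unit (for the map $\ZZ\to R$) and some extra care in the non-commutative case, where $R/\mathfrak m$ is only a simple ring rather than a field (your $\FF_p$-basis projection still goes through, but this should be said).

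Two points to tighten. First, appending $p\cdot w_i$ to a length-$3$ equation yields length~$4$, so as written step~(b) lands in $\minlin{4}{R}$; the fix is to lift \emph{before} shortening and shorten once over~$R$. Second, the characteristic-$0$ branch is a genuine hand-wave: the analogous constant-factor \FPT-inapproximability over~$\QQ$ is not supplied by the reference you invoke, and your ``integral optimum'' remark does not by itself produce it. For finite~$R$ every residue field has prime characteristic and this branch never arises, so either restrict explicitly to finite rings (as the paper's detailed statement does) or provide the missing argument.
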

This result strengthens two previously known hardness results:
(i) $\minlin{3}{R}$ is \W{1}-hard~\cite{Dabrowski:etal:soda2023} and (ii) $\minlin{3}{R}$ is \NP-hard to
approximate within any constant (which can easily be derived
from~\cite{haastad2001some}).
While we focus on rings of the form $\ZZ_m$,
the result of Theorem~\ref{thm:algo} begs the questions
whether $\minlin{2}{R}$ is \FPT-approximable within a constant factor
for every finite commutative ring $R$.
We answer this question in the negative.

\begin{theorem}[See Theorem~\ref{thm:min2lin-hard} for a more detailed statement.]
  \label{thm:hard2}
  There exist finite commutative rings $R$ such that
  $\minlin{2}{R}$ is \W{1}-hard to approximate within any constant factor.
\end{theorem}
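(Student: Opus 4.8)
The plan is to exhibit a concrete finite commutative ring $R$ together with a parameterized reduction from a problem that is already known to be \W1-hard to approximate within every constant factor, for which the natural source is $\minlin{3}{R_0}$ over a fixed nontrivial ring $R_0$ (Theorem~\ref{thm:hard}). The ring $R$ must be chosen with care: products of chain rings — in particular all of the $\ZZ_m$ — admit a constant-factor \FPT-approximation via the coordinate-wise argument underlying Theorem~\ref{thm:algo}, so the only candidates for a lower bound are rings possessing a local factor whose maximal ideal needs at least two generators. A minimal such witness is $R=\FF_p[x,y]/(x,y)^2$ (or its mixed-characteristic analogue $\ZZ_{p^2}[x]/(px,x^2)$), and the feature we exploit is that for zero-divisor coefficients a binary equation such as $xu+yv=\gamma$ is equivalent to a \emph{bundle} of two independent coset-membership constraints — it holds exactly when the residues $\pi(u),\pi(v)$ take two prescribed values — while multiplication by elements of the maximal ideal $\mathfrak m$ can route the residue of one variable into a distinguished nilpotent slot of another. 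This routing capability is what replaces the arity-reduction machinery that is unavailable here.

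The reduction proceeds in three steps. First, represent each source variable by an $R$-variable whose residue carries its value, and introduce, for every ternary equation of the source, a bounded number of auxiliary "hub" $R$-variables. Second, translate each ternary source equation into $O(1)$ binary $R$-equations: the bundled zero-divisor equations above force the hubs to agree with the variable residues (using the extra slots of $R$ to hold the several bits that one ternary equation needs to inspect), and a final unit-coefficient equation among the hubs performs the arithmetic check; the encoding must be arranged so that exactly one violated binary equation is charged for each violated source equation. Third, replace every "structural" (hub-consistency) equation by $N=N(k)$ parallel copies for a suitably large polynomial $N$, making them effectively crisp, so that any \FPT-approximate solution of the constructed instance may be assumed to violate only "arithmetic" equations; such a solution then rounds back to a source assignment of cost within a constant factor. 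Since each ternary equation blows up to $O(1)$ binary ones, a cost-$\le k$ source solution yields a cost-$\le O(k)$ instance, and the rounding is cost-linear in the other direction; hence the arbitrarily large inapproximability gap of $\minlin{3}{R_0}$ transfers, giving that $\minlin{2}{R}$ has no constant-factor \FPT-approximation unless $\FPT=\W1$. Together with Theorem~\ref{thm:algo} this isolates the boundary between rings of the form $\ZZ_m$ and general finite commutative rings.

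The main obstacle is the gadget together with gap preservation, and it has two faces. First, a local ring carries, up to scalar, a single linear functional onto its residue field (every map $u\mapsto\pi(\alpha u)$ equals $\pi(\alpha)\cdot\pi(\cdot)$), so the textbook "projection" gadget for lowering arity simply does not exist; the hub variables must therefore carry the needed information through the nilpotent part of $R$, and one has to verify that the bundle of $\FF_p$-equations arising from a single $R$-equation is charged correctly — that violating a genuine ternary source equation always costs at least one unit and that duplicating the structural equations leaves no cheaper way to cheat. Second, one must check that the whole construction is a \emph{linear} \FPT-reduction, so that the unbounded gap is not merely preserved but survives both $O(1)$-factor blow-ups. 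Making the first point precise amounts to a finite case analysis over the multiplication table of $R$, split according to whether coefficients lie in $\mathfrak m$ or are units, and that is where the bulk of the technical work will lie.
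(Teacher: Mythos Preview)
Your overall strategy---reduce from $\minlin{3}{\ZZ_p}$ to $\minlin{2}{R}$ for a carefully chosen finite local ring $R$, using auxiliary ``hub'' variables and equations with nilpotent coefficients to simulate each ternary equation---is exactly the paper's approach. But the concrete ring you pick is problematic, and since you never exhibit the gadget, this is a genuine gap.

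You take $R=\FF_p[x,y]/(x,y)^2$, so $\mathfrak m^2=0$. The paper instead uses $R=\ZZ_p[\rx,\ry]/(\rx^2,\ry^2)$, where $\mathfrak m^2=(\rx\ry)\neq 0$, and the surviving monomial $\rx\ry$ is precisely the ``target slot'' that makes the gadget a three-line calculation: for a crisp equation $a+b+c=0$ one introduces a single hub $v$ and the crisp binary equations
\[
\rx\, v = \rx\ry\, b,\qquad \ry\, v = \rx\ry\, a,\qquad (\rx+\ry)\, v = -\rx\ry\, c,
\]
which together force $\pi(a)+\pi(b)+\pi(c)=0$. In your ring these equations collapse to $0=0$ because $\rx\ry=0$. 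More structurally, in any ring with $\mathfrak m^2=0$: (i) a binary equation with both coefficients in $\mathfrak m$ constrains only the pair of residues $(\pi(u),\pi(v))$, and (ii) a binary equation with a unit coefficient on the hub determines the hub completely from the other variable's residue, so there is no free slot left to receive a second residue. Hence the ``routing'' you describe---placing $\pi(a)$ in one nilpotent slot of a hub and $\pi(b)$ in another---cannot be done with a single hub, and a chain of hubs does not help either, because combining two hubs requires a ternary step. The fix is minimal: replace $(x,y)^2$ by $(x^2,y^2)$.

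Two smaller points. First, you do not need the duplication trick to make structural equations crisp, nor the delicate ``charge exactly one binary equation per violated source equation'' accounting: the paper reduces from \emph{almost homogeneous} instances (Lemma~\ref{lem:min3lin-special}), where every ternary equation is already crisp and the only soft equations are unary $x=0$, which translate one-for-one to soft unary $R$-equations. Second, reducing from Theorem~\ref{thm:hard} over an arbitrary $R_0$ is unnecessary overhead---$R_0=\ZZ_p$ is all you need, and it matches the residue field of $R$.
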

Theorems 5.2 and 6.2 in~\cite{Dabrowski:etal:soda2023}
leave open the question of whether $\mintwolinz{p^n}$ is \FPT or \W{1}-hard
for a prime $p$ and $n \geq 2$.
The answer is unknown even for the smallest such ring -- $\ZZ_4$.
While our result implies that $\mintwolinz{p^n}$ 
is \FPT-approximable within a factor of 2,
its exact parameterized complexity remains an intriguing open problem. 
Another open problem concerns the \FPT-approximability for
rings that are not modular. We have studied broader classes
of rings in a paper
available on arXiv~\cite{Dabrowski:etal:arxiv2025} but the results are inconclusive.

\subsection{FPT-approximation algorithm}
\label{ssec:intro-algo}

It is well known that every ring $\ZZ_m$ is a 
direct sum of rings of the form $\ZZ_{p^n}$,
where $p$ is a prime and $n$ is a positive integer.
To prove that $\mintwolinz{m}$ is constant-factor \FPT-approximable,
it suffices to show this for $\mintwolinz{p^n}$ (see Proposition~\ref{prop:product-approx}).
By iterative compression and branching, 
we can assume without loss of generality that
our input instances are \emph{simple}, i.e.
all binary equations are of the form
$x = r \cdot y$ with $r \in R$ and
all unary equations are crisp (i.e.\ undeletable) and of the form
$x = r$ for some $r \in R$.
Observe that the binary equations are homogeneous,
so the all-zero assignment satisfies them,
with the only obstacle being the crisp unary equations.

As a warm-up, let us describe a constant-approximation
algorithm for $\minlin{2}{R}$ when $R$ is a field, because 
it nicely illustrates our approach.\footnote{We remark that $\minlin{2}{\FF}$ for finite fields $\FF$
can be solved in FPT time~\cite{ChitnisCHPP16contract,Dabrowski:etal:soda2023,iwata2016half}.}
Let $(S, k)$ be a simple instance and
construct a graph $G = G(S)$ with vertices $x_i$ 
for every $x \in V(G)$ and $i \in R \setminus \{0\}$
and special vertices $s$ and $t$.
For unary equations in $S$,
add undeletable edges $sx_r$ if the equation
is the form $x = r$ for some $r \neq 0$,
and undeletable edges $x_rt$ for every $r \in R \setminus \{0\}$
if the equation is of the form $x = 0$ 
(to make an edge undeletable, 
it is sufficient to add many parallel copies).
For every binary equation $e$ of the form $x = r \cdot y$,
construct a bundle of edges $B_e = \{x_{ri} y_{i} : i \in R, ri \neq 0\}$
and add these edges to $G$.
This completes the construction.
There is a useful correspondence between certain
$st$-cuts in $G(S)$ and assignments to $S$.
Formally, for a set of vertices $X \subseteq V(G)$,
let $\delta(X)$ be the set of edges with exactly one
endpoint in $X$, i.e.\ $\delta(X)$ is the cut
separating $X$ and $V(S) \setminus X$.
If $U \subseteq V(G)$ is such that
$s \in U$, $t \notin U$ and there is at most one 
vertex $x_i \in U$ for any $x \in V(S)$,
we say that $\delta(U)$ is a \emph{conformal $st$-cut}.
Conformal cuts in $G$ correspond to assignments to $S$:
to construct an assignment for a conformal cut $U$,
set $x$ to $i$ if $x_i \in U$ for some $i \in R \setminus \{0\}$,
and set $x$ to $0$ otherwise;
the reverse direction of the correspondence is obvious.

Let $b(U)$ be the number of edge bundles $B_e$
intersected by $\delta(U)$.
Then $R$ being a field implies that
$b(U)$ is exactly the cost of the assignment corresponding to $U$.
More specifically, consider an equation $e = (x = r \cdot y)$,
suppose $x_i, y_j \in U$ and $B_e \cap \delta(U) = \emptyset$.
Then $i = rj$ because the edge $x_{rj} y_{j}$ is uncut
and both its endpoints are reachable from $s$, 
so the assignment corresponding to $U$ satisfies $e$.
This guarantee is represented by $B_e$ being a matching:
one may think of an edge $x_i y_j$ as 
an encoding of the constraint 
``$x = i$ if and only if $y = j$''.
To complete the algorithm for fields, it suffices
to compute a conformal cut $U$ with $b(U) \leq k$.
Observe that $b(U) \leq k$ implies $|\delta(U)| \leq (|R|-1)k$
because the bundles $B_e$ are of size $|R|-1$ and disjoint.
By a more careful analysis, one can show that, in fact,
$|\delta(U)| \leq 2k$ because a conformal cut
may only contain at most two edges from any bundle.
One obtains a $2$-approximation in 
\FPT time by branching in the style of
\textsc{Digraph Pair Cut}~\cite{kratsch2020representative}.

\begin{figure}
 \centering
 \begin{tikzpicture}[node distance=0.5cm]
    \tikzstyle{no}=[draw,circle, inner sep=1pt, fill]
    \tikzstyle{ed}=[draw,color=black, line width=1pt]

    \draw
    node[no, label=left:$x_1$] (l1) {}
    node[no, below of=l1, label=left:$x_2$] (l2) {}
    node[no, below of=l2, label=left:$x_3$] (l3) {}
    node[no, below of=l3, label=left:$x_4$] (l4) {}
    node[no, below of=l4, label=left:$x_5$] (l5) {}
    node[no, below of=l5, label=left:$x_6$] (l6) {}
    node[no, below of=l6, label=left:$x_7$] (l7) {}
    ;
    \draw
    (l1) +(2cm,0cm) node[no, label=right:$y_1$] (r1) {}
    node[no, below of=r1, label=right:$y_2$] (r2) {}
    node[no, below of=r2, label=right:$y_3$] (r3) {}
    node[no, below of=r3, label=right:$y_4$] (r4) {}
    node[no, below of=r4, label=right:$y_5$] (r5) {}
    node[no, below of=r5, label=right:$y_6$] (r6) {}
    node[no, below of=r6, label=right:$y_7$] (r7) {}
    ;

    \draw
    (l1) edge[ed] (r1)
    (l2) edge[ed] (r2)
    (l3) edge[ed] (r3)
    (l4) edge[ed] (r4)
    (l5) edge[ed] (r5)
    (l6) edge[ed] (r6)
    (l7) edge[ed] (r7)
    ;
\end{tikzpicture}
 \begin{tikzpicture}[node distance=0.5cm]
    \tikzstyle{no}=[draw,circle, inner sep=1pt, fill]
    \tikzstyle{ed}=[draw,color=black, line width=1pt]

    \draw
    node[no, label=left:$x_1$] (l1) {}
    node[no, below of=l1, label=left:$x_2$] (l2) {}
    node[no, below of=l2, label=left:$x_3$] (l3) {}
    node[no, below of=l3, label=left:$x_4$] (l4) {}
    node[no, below of=l4, label=left:$x_5$] (l5) {}
    node[no, below of=l5, label=left:$x_6$] (l6) {}
    node[no, below of=l6, label=left:$x_7$] (l7) {}
    ;
    \draw
    (l1) +(2cm,0cm) node[no, label=right:$y_1$] (r1) {}
    node[no, below of=r1, label=right:$y_2$] (r2) {}
    node[no, below of=r2, label=right:$y_3$] (r3) {}
    node[no, below of=r3, label=right:$y_4$] (r4) {}
    node[no, below of=r4, label=right:$y_5$] (r5) {}
    node[no, below of=r5, label=right:$y_6$] (r6) {}
    node[no, below of=r6, label=right:$y_7$] (r7) {}
    ;

    \draw
    (l2) edge[ed] (r1)
    (l4) edge[ed] (r2)
    (l6) edge[ed] (r3)
    (l2) edge[ed] (r5)
    (l4) edge[ed] (r6)
    (l6) edge[ed] (r7)
    ;
\end{tikzpicture}
 \begin{tikzpicture}[node distance=0.5cm]
    \tikzstyle{no}=[draw,circle, inner sep=1pt, fill]
    \tikzstyle{ed}=[draw,color=black, line width=1pt]

    \draw
    node[no, label=left:$x_1$] (l1) {}
    node[no, below of=l1, label=left:$x_2$] (l2) {}
    node[no, below of=l2, label=left:$x_3$] (l3) {}
    node[no, below of=l3, label=left:$x_4$] (l4) {}
    node[no, below of=l4, label=left:$x_5$] (l5) {}
    node[no, below of=l5, label=left:$x_6$] (l6) {}
    node[no, below of=l6, label=left:$x_7$] (l7) {}
    ;
    \draw
    (l1) +(2cm,0cm) node[no, label=right:$y_1$] (r1) {}
    node[no, below of=r1, label=right:$y_2$] (r2) {}
    node[no, below of=r2, label=right:$y_3$] (r3) {}
    node[no, below of=r3, label=right:$y_4$] (r4) {}
    node[no, below of=r4, label=right:$y_5$] (r5) {}
    node[no, below of=r5, label=right:$y_6$] (r6) {}
    node[no, below of=r6, label=right:$y_7$] (r7) {}
    ;

    \draw
    (l3) edge[ed] (r1)
    (l6) edge[ed] (r2)
    (l1) edge[ed] (r3)
    (l4) edge[ed] (r4)
    (l7) edge[ed] (r5)
    (l2) edge[ed] (r6)
    (l5) edge[ed] (r7)
    ;
\end{tikzpicture}
 \begin{tikzpicture}[node distance=0.5cm]
    \tikzstyle{no}=[draw,circle, inner sep=1pt, fill]
    \tikzstyle{ed}=[draw,color=black, line width=1pt]

    \draw
    node[no, label=left:$x_1$] (l1) {}
    node[no, below of=l1, label=left:$x_2$] (l2) {}
    node[no, below of=l2, label=left:$x_3$] (l3) {}
    node[no, below of=l3, label=left:$x_4$] (l4) {}
    node[no, below of=l4, label=left:$x_5$] (l5) {}
    node[no, below of=l5, label=left:$x_6$] (l6) {}
    node[no, below of=l6, label=left:$x_7$] (l7) {}
    ;
    \draw
    (l1) +(2cm,0cm) node[no, label=right:$y_1$] (r1) {}
    node[no, below of=r1, label=right:$y_2$] (r2) {}
    node[no, below of=r2, label=right:$y_3$] (r3) {}
    node[no, below of=r3, label=right:$y_4$] (r4) {}
    node[no, below of=r4, label=right:$y_5$] (r5) {}
    node[no, below of=r5, label=right:$y_6$] (r6) {}
    node[no, below of=r6, label=right:$y_7$] (r7) {}
    ;

    \draw
    (l4) edge[ed] (r1)
    (l4) edge[ed] (r3)
    (l4) edge[ed] (r5)
    (l4) edge[ed] (r7)
    ;
\end{tikzpicture}
 \caption{Graphs $B_e$ corresponding to equations
 $x = 1 \cdot y$, $x = 2 \cdot y$, $x = 3 \cdot y$ and
 $x = 4 \cdot y$.}
 \label{fig:z8-matchings}
\end{figure}
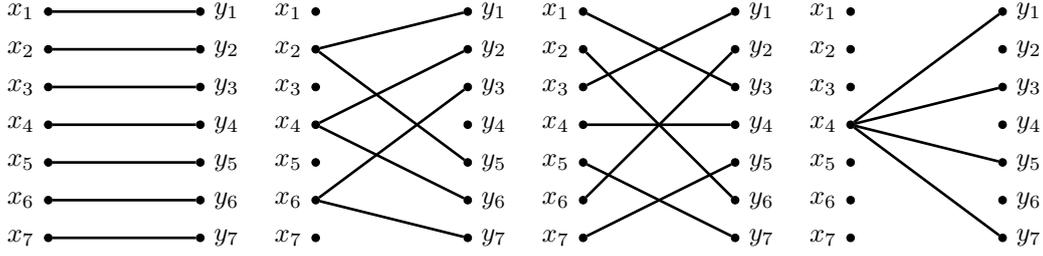

If $R$ is not a field, then
the bundles $B_e$ stop being matchings:
for example, consider 
an equation $e = (x = 2y)$ over $\ZZ_8$
(second graph from the left in Figure~\ref{fig:z8-matchings}). 
Note that both $y_2$ and $y_6$ are adjacent to $x_4$ in $B_e$.
Moreover, if $x=4$ then we require either $y=2$ or $y=6$, hence the 
dependencies cannot be captured by binary edges (even if one were to 
use directed graphs).
Thus, we lose the connection between
the number of bundles intersected by a conformal cut
and the cost of the corresponding assignment.
One idea for solving $\minlin{2}{R}$
for non-fields $R$ is to retain the 
``if and only if'' semantics of an edge in the associated graph
by matching \emph{sets} of values rather than individual values.
More specifically, let us partition $R \setminus \{0\}$ into
classes $C_1, \dots, C_\ell$ and build $G(S)$
with vertices $s$, $t$ and $x_{C_1}, \dots, x_{C_\ell}$ for all $x \in V(S)$.
One can also view the partition as an equivalence relation $\equiv$.
To keep the matching structure for an equation $e$,
we want an edge $x_{C_i} y_{C_j}$ in $B_e$ 
to mean ``$x \in C_i$ if and only if $y \in C_j$''.
For fields, we have used the most refined partition
(every nonzero element is a class of its own).
For other rings, a coarser partition is needed:
e.g. if $R = \ZZ_8$, then 
$2$ and $6$ have to be in the same class (think of
$x = 2y$).
However, simply taking a coarser partition is not sufficient:
indeed, the coarsest partition 
(putting all non-zero elements in the same class)
has the required structure,
but it only distinguishes between zero and nonzero values, 
and is not very useful algorithmically.

To define a useful equivalence relation on $\ZZ_{p^n}$, 
let us view its elements in base $p$.
Let two elements be equivalent if and only if their 
base-$p$ representations have the same number of trailing zeros
and the same least significant nonzero digit.
We denote the classes of this partition by $\EQ_{p^n}$,
and use $\EQ_{p^n}^{\neq 0}$ to refer to the nonzero classes.
For example, $\EQ_{3^2}$ (with elements written in base-$3$) 
has classes
\[
  \{01_3, 11_3, 21_3\}, 
  \{02_3, 12_3, 22_3\},
  \{10_3\}, \{20_3\},
  \{00_3\}.
\]
For another example, $\EQ_{2^3}$ has classes
\[
  \{001_2, 011_2, 101_2, 111_2\},
  \{010_2, 110_2\},
  \{100_2\}, 
  \{000_2\}.
\]
For a non-example,
a coarser partition of $\ZZ_{2^3}$ into three classes
\[
  \{001_2, 011_2, 101_2, 111_2\}, \{010_2, 110_2, 100_2\}, \{000_2\}.
\]
lacks the matching property (as evident from 
the equation $x = 2y$ in Figure~\ref{fig:z8-matchings}).

A key property of $\EQ_{p^n}$ that
allows us to make progress algorithmically 
is that it is \emph{absorbing}:
if two elements $i,j \in \ZZ_{p^n}$ are equivalent, 
then $i - j$ is divisible by $p$.
Given a class assignment $\tau$ that is guaranteed
to agree with an assignment of cost $O(k)$,
we can reduce the problem to $\mintwolinz{p^{n-1}}$.
The reduction is done by replacing every variable
$v$ with $v' = \frac{v - \rep{\tau(v)}}{p}$, where
$\rep{\tau(v)}$ is any element from the class $\tau(v)$.
If an assignment agrees with $\tau$, i.e. 
the value it assigns to $v$ belongs to $\tau(v)$,
then $v - \rep{\tau(v)}$ is divisible by $p$
by the absorption property.
Moreover, this step represents a reversible transformation of 
the solution space
(by setting $v$ equal to $pv' + \rep{\tau(v)}$),
so we do not gain or lose any solutions.

The main question is now how to compute
a class assignment that is guaranteed
to agree with a value assignment of cost $O(k)$.
Let $(S, k)$ be a simple instance and $G = G(S)$
be the corresponding class-assignment graph.
Now we have a one-to-one correspondence
between class assignments $V(S) \to \EQ^{\neq 0}$
and conformal cuts in $G$.
And again, if $(S, k)$ is a yes-instance,
then there is a conformal cut $U$ such that $b(U) \leq k$.
However, now we cannot use branching
in the style of \textsc{Digraph Pair Cut}, 
since some conformal cuts of low cost
correspond to class assignments of high cost.
For an extreme example over $\ZZ_8$,
consider the system of equations
\begin{equation}
  \label{eq:z8-example}
  \tag{$\triangle$}
  S = \{x = 4, 2a = x, 3a = b, 3b = c, 3c = a\}.
\end{equation}
The structure of the equation $x = 2y$
(see Figure~\ref{fig:z8-matchings})
implies that any matching partition of $\ZZ_{8}$
has  $2$ and $6$ in one class $A$ and matches 
it with the class $B$ containing $4$. 
By the construction of $G(S)$,
the vertex $t$ is isolated and 
the connected component $U$ of $s$ contains
vertices $s$, $x_B$, $a_A$, $b_A$ and $c_A$.
Hence, $\delta(U)$ is empty and conformal,
but the cost of $S$ is at least $1$ because
the system is inconsistent (this follows from
considering both possible values for $a$, 
which are $2$ and $6$).
In fact, the cost is exactly $1$ because
it is sufficient to delete $2a = x$.
By making vertex-disjoint copies of $S$,
we can obtain instances of arbitrarily high cost,
while the corresponding graphs
admit conformal cuts of zero size.

We also mention that the problem is not amenable
to LP-branching based on biased graphs~\cite{wahlstrom2017lp}
or the more general important balanced subgraphs
which was the main tool used in~\cite{Dabrowski:etal:soda2023}.
These methods can deal with cycle obstructions
in graphs that have the so-called {\em theta property}:
if there is a chord in a cyclic obstruction,
then at least one of the newly created cycles
must also be an obstruction.
However, our obstructions do not have this property.
Consider the instance 
\[ T = \{a = 1, a = 3b, b = 3c, c = 3d, d = 3e, e = 3a, 2a = x, x = 2c\} \]
of $\mintwolinz{8}$. 
Note that all units in $\ZZ_8$ are in the same class, and call this class $A$.
Let $B$ be the class containing $2$.
Then $G(T)$ contains a cycle on vertices $a_A$, $b_A$, $c_A$, $d_A$, $e_A$
which is an obstruction since it corresponds to 
an unsatisfiable set of equations
$\{a = 1, a = 3b, b = 3c, c = 3d, d = 3e \}$.
Moreover, there is a chord connecting $a_A$ and $c_A$ through $x_B$,
but the subinstances
$\{a = 1, a = 3b, b = 3c, 2a = x, x = 2c\}$ and
$\{a = 1, c = 3d, d = 3e, e = 3a, 2a = x, x = 2c\}$
corresponding to the newly created cycles are satisfiable.
For similar reasons, the powerful method of flow augmentation 
does not appear to directly solve the problem. 

Instead, we use a sophisticated greedy method based on shadow removal.
This technique, introduced by Marx and Razgon~\cite{marx2014fixed}
and refined in~\cite{chitnis2015directed}, provides a way
of exploring small transversals to (often implicit)
families of connected subgraphs.
Let $G$ be a graph, $\cF$ be a family of
connected subgraphs in $G$ that contain the vertex $s$,
and a positive integer $d$ be the parameter.
We are interested in transversals of $\cF$ of size at most $d$,
i.e.\ sets of at most $d$ edges that intersect every subgraph in $\cF$. 
For instance, if we are interested in $st$-cuts, 
then $\cF$ could be the set of all $st$-walks.
In our case, every conformal cut is
a transversal for all $st$-walks and all walks that
contains $s$, $x_{C_1}$ and $x_{C_2}$ for the same vertex $x$
and different classes $C_1$ and $C_2$.
More interestingly,
the conformal cut $\delta_{\sf opt} = \delta(U_{\sf opt})$ 
corresponding to an optimal class assignment is
also a transversal for 
the subgraphs $G[U_{\sf bad}]$ for $U_{\sf bad} \subseteq V(G)$ 
such that $\delta(U_{\sf bad})$ is a conformal cut,
but no satisfying assignment agrees with $U_{\sf bad}$.
The system of equations in~\eqref{eq:z8-example}
and the connected component of $s$ described below it
provide an example of a bad conformal cut.

Intuitively, shadow removal combined with 
delicate branching steps
allows us to take into account unsatisfiability 
occurring ``down the line'',
i.e.\ the obstructions that are not visible because of
the relaxation to class assignment,
but will become apparent in the next iteration.
Formally, we can compute a transversal to the same 
family of subgraphs as those intersected by $\delta_{\sf opt}$,
which are precisely all $sv$-walks with $v \notin U_{\sf opt}$.
As before, since we dismantle the bundles of edges
corresponding to an equation and treat them individually,
the size of the transversal we compute is 
$|\delta_{\sf opt}| \leq 2k$.
When translating back into equations, this means 
that we may have to delete up to $2k$ equations, 
while an optimal assignment can delete $k$.
This is the reason the cost increases.


To avoid doubling the cost at every step
and improve the overall approximation factor to $2$,
we modify the application of shadow removal to sample
a conformal cut $\delta(U)$ with the following property.
Let $Y$ be the set of equations ``intersected'' by $\delta(U)$
and $\tau$ be the class assignment corresponding to $U$.
If $S$ admits an assignment of cost $k$,
then we can guarantee that
$S - Y$ admits an assignment that agrees with $\tau$
and costs at most $k - |Y|/2$.
By committing to $\tau$, rewriting the equations over
a smaller ring and running the algorithm recursively,
we obtain a solution $Z$ to $S - Y$
of size at most $2(k - |Y|/2) = 2k - |Y|$.
Then $Y \cup Z$ is in fact a solution to $S$,
and $|Y \cup Z| \leq 2k$.


\subsection{Hardness Results}
\label{ssec:intro-hardness}

We complement the approximation algorithm
with a number of hardness results (see Section~\ref{sec:hardness}).
We prove the following (1)
$\minlin{r}{R}$ is \W{1}-hard to FPT-approximate within any constant when $r \geq 3$ and $R$ is a finite, commutative, non-trivial
ring,  and (2) there exists finite and commutative rings $R$ such
that
$\minlin{2}{R}$ is \W{1}-hard to FPT-approximate within any constant.
To prove (1), we consider the $\minlin{3}{G}$ problem
over finite Abelian groups.
Hardness for
this problem clearly carries over to finite commutative rings and
longer
equations.
Previous hardness results for 
$\minlin{3}{G}$ include the following:
Theorem~6.1 in \cite{Dabrowski:etal:soda2023} show that
$\minlin{3}{G}$ is \W{1}-hard to solve exactly and
it is known that $\minlin{3}{G}$ is not constant-factor approximable in polynomial time if
 P $\neq$ \NP (by exploiting results in~\cite{haastad2001some}).
We generalize these results by proving
that $\minlin{3}{G}$ is not FPT-approximable within
any constant if \FPT $\neq$ \W{1}.
Our result is based on a reduction from
a fundamental problem in coding
theory: the \textsc{Maximum Likelihood Decoding} problem over 
$\ZZ_p$ with $p$ a prime. 
Here we are given a matrix 
$A \in \ZZ_p^{n \times m}$, a vector $b \in \ZZ_p^m$,
and the goal is to find $x \in \ZZ_p^n$ such that
$Ax = b$ with minimum Hamming weight, i.e.
the one that minimizes $k = |\{i \in [n] : x_i \neq 0\}|$.
The parameter is $k$.
Intuitively, row $i$ in $Ax = b$ is a linear equation
$\sum_{j=1}^{n} a_{ij} x_{j} = b_{j}$, 
where $a_{ij}, b_{j} \in \ZZ_p$ are coefficients
and $x_j$ are variables.
There is a common trick to chop up long equations
into ternary equations: for example, if we have an equation
$x_1 + x_2 + x_3 + x_4 = 1$, we can introduce
auxiliary variables $y_1, y_2, y_3$ and write
\begin{align*}
  x_1 + x_2 + y_1 &= 0, \\
  y_2 + x_3 - y_1 &= 0, \\
 -y_2 + x_4 - y_3 &= 0, \\
  y_3 &= 1.
\end{align*}
Note that when summing up the new equations,
the auxiliary variables cancel out and we obtain
$x_1 + x_2 + x_3 + x_4 = 1$.
Using this trick, we encode the constraints
implied by the row equations of 
$Ax = b$ as crisp ternary and unary equations.
To encode the objective function,
i.e.\ the fact that we are minimizing 
the Hamming weight of $x$,
we add soft equations $x_j = 0$ for all $j \in [n]$.
This way, breaking a soft equation corresponds
to increasing the Hamming weight by~$1$.


We use result (1) for demonstrating that there exist finite commutative rings
such that $\minlin{2}{R}$ is \W{1}-hard to \FPT-approximate within any constant factor.
The basic idea is to express
equations of length~$3$ over $\ZZ_p$ for some suitable prime $p$ using equations of length 2 over $R$. For instance, this is
possible if we choose $R$ to be
the 16-element polynomial ring $\ZZ_2[\rx,\ry]/(\rx^2,\ry^2)$, 
i.e.\ the ring with coefficients from $\ZZ_2$
and
indeterminates $\rx,\ry$ with $\rx^2$ and $\ry^2$ factored out.
To illustrate the idea, consider an equation
$a + b + c = 0$ over $\ZZ_2$.
To express it using binary equations over $R$,
we introduce a fresh variable $v$
and consider three equations 
\begin{enumerate}
  \item $\rx v = \rx\ry b$,
  \item $\ry v = \rx\ry a$, and
  \item $(\rx+\ry) v + \rx\ry c = 0$.
\end{enumerate}
Summing up the first two equations,
we obtain $(\rx + \ry) v = \rx\ry (a + b)$.
Together with the third one,
this implies $\rx\ry (a + b + c) = 0$.
On the other hand,
any assignment that satisfies
$\rx\ry (a + b + c) = 0$
can be extended as $v = \rx a + \ry b$
to satisfy all three binary equations.
With the result for $\ZZ_2[\rx,\ry]/(\rx^2,\ry^2)$ at hand, 
it is not difficult to prove that $\minlin{2}{R}$ is 
\W{1}-hard to \FPT-approximate 
within any constant factor for many other rings $R$.

\subparagraph*{Related versions.}
A short abstract of this paper will be presented at ESA~2025.
There is also an unpublished manuscript, available on arXiv~\cite{Dabrowski:etal:arxiv2025}
about constant-factor FPT approximations for $\minlin{2}{R}$ 
for more general classes of finite commutative rings $R$. 
The existence of a constant-factor FPT-approximations for
$\minlin{2}{\ZZ_m}$ follows from results in that paper, but 
the algorithms in the current paper are more streamlined and the
constant factor is better.


\section{Preliminaries}
\label{sec:prelims}

\subparagraph*{Linear equations.}

An expression $c_1 \cdot x_1+\dots+c_r \cdot x_r=c$ is a {\em (linear) equation over $R$}  
if $c_1,\dots,c_r,c \in R$ and $x_1,\dots,x_r$ are variables with domain $R$.
This equation is {\em homogeneous} if $c=0$.
Let $S$ be a set (or equivalently a system) of equations over $R$.
We let $V(S)$ denote the variables in $S$, and we
say that $S$ is {\em consistent} if there is an assignment
$\varphi : V(S) \rightarrow R$
satisfying all equations in $S$. 
An instance of the computational problem $\lin{r}{R}$
is a system $S$ of equations in at most $r$ variables
over $R$, and the question is whether $S$ is consistent.
Linear equation systems over $\ZZ_m$ are solvable
in polynomial time
and the well-known procedure is outlined, for instance, in~\cite[p. 473]{Arvind:Vijayaraghavan:stacs2005}.
%
%
We now define the computational problem when
we allow some equations in an instance to be
soft (i.e.\ deletable at unit cost) and
crisp (i.e.\ undeletable).

\pbDefP{$\minlin{r}{R}$}
{A (multi)set $S$ of equations over $R$ with at most $r$
variables per equation,
a subset $S^\infty \subseteq S$ of crisp equations
and an integer $k$.}
{$k$.}
{Is there a set $Z \subseteq S \setminus S^\infty$ 
such that $S - Z$ is consistent and 
$|Z| \leq k$?}

We use crisp equations for convenience 
since they can be modelled by $k+1$ copies of 
the same soft equation.
For an assignment $\alpha : V(S) \to R$,
let $\cost_S(\alpha)$ be $\infty$ if $\alpha$
does not satisfy a crisp equation and
the number of unsatisfied soft equations otherwise.
We drop the subscript $S$ when it is clear from context.
We write $\mincost(S)$ to denote the minimum cost
of an assignment to $S$.

\subparagraph*{Graph Theory.}
We assume familiarity with the basics of graph theory, linear and abstract algebra, and combinatorial optimization throughout the article.
The necessary material can be found in, for instance, the textbooks by Diestel~\cite{Diestel:GT}, Artin~\cite{Artin:A} and Schrijver~\cite{Schrijver:PE}, respectively.
We use the following graph-theoretic terminology in what follows.
Let $G$ be an undirected graph.
We write $V(G)$ and $E(G)$ to denote the vertices and edges of $G$, respectively.
If $U \subseteq V(G)$, then the {\em subgraph of $G$ induced
by $U$} is the graph $G'$ with
$V(G')=U$ and $E(G')=\{ \{v, w\} \mid v,w \in U \; {\rm and} \; \{v, w\} \in E(G)\}$. 
We denote this graph by $G[U]$.
If $Z$ is a subset of edges in $G$, we write $G-Z$ to denote the graph~$G'$ with $V(G') = V(G)$ and $E(G') = E(G) \setminus Z$.
For $X,Y \subseteq V(G)$, 
an \emph{$(X,Y)$-cut} is a subset of edges $Z$
such that $G - Z$ does not contain a path with 
one endpoint in $X$ and another in $Y$.
When $X,Y$ are singleton sets $X=\{x\}$ and
$Y=\{y\}$, we simplify the notation and write $xy$-cut instead
of $(X,Y)$-cut. We say that $Z$ is an $(X,Y)$-cut \emph{closest} to
$X$ if there is no $(X,Y)$-cut $Z'$ with $|Z'|\leq |Z|$ such that the
set of vertices reachable from $X$ in $G-Z'$ is a strict subset of the
set of vertices reachable from $X$ in $G-Z$.

\subparagraph*{Parameterized Complexity.}
In parameterized 
algorithmics~\cite{book/DowneyF99,book/FlumG06,book/Niedermeier06}
the runtime of an algorithm is studied with respect to
the input size~$n$ and a parameter $p \in \NN$.
The basic idea is to find a parameter that describes the structure of
the instance such that the combinatorial explosion can be confined to
this parameter.
In this respect, the most favourable complexity class is \FPT
(\emph{fixed-parameter tractable}),
which contains all problems that can be decided by an algorithm
running in $f(p)\cdot n^{O(1)}$ time, where $f$ is a computable
function.
Problems that can be solved within such a time bound are said to be 
\emph{fixed-parameter tractable} (FPT).

We will prove that certain problems are not in $\FPT$
and this requires some extra machinery.
A {\em parameterized problem} is, formally speaking, a subset of $\Sigma^* \times {\mathbb N}$
where $\Sigma$ is the input alphabet. Reductions between parameterized problems need to take
the parameter into account. To this end, we will use {\em parameterized reductions} (or FPT-reductions).
Let $L_1$ and $L_2$ denote parameterized problems with $L_1 \subseteq \Sigma_1^* \times {\mathbb N}$
and $L_2 \subseteq \Sigma_2^* \times {\mathbb N}$. 
A parameterized reduction from $L_1$ to $L_2$ is a
mapping $P: \Sigma_1^* \times {\mathbb N} \rightarrow \Sigma_2^* \times {\mathbb N}$
such that
(1) $(x, k) \in  L_1$ if and only if $P((x, k)) \in L_2$, (2) the mapping can be computed
by an FPT-algorithm with respect to the parameter $k$, and (3) there is a computable function $g : {\mathbb N} \rightarrow {\mathbb N}$ 
such that for all $(x,k) \in L_1$ if $(x', k') = P((x, k))$, then $k' \leq g(k)$.

The class \W{1} contains all problems that are FPT-reducible to \textsc{Independent Set} when parameterized
by the size of the solution, i.e.\ the number of vertices in the independent set.
Showing \W{1}-hardness (by an FPT-reduction) for a problem rules out the existence of a fixed-parameter
algorithm under the standard assumption $\FPT \neq \W{1}$.

\subparagraph*{Parameterized Approximation.}
Let $c \geq 1$ be a constant. 
Formally, a factor-$c$ {approximation
algorithm} takes an instance $(I,k)$ of a minimization problem $\Pi$, 
accepts if $(I, k)$ is a yes-instance and 
rejects if $(I, c \cdot k)$ is a no-instance.
The running time of an \FPT-approximation 
algorithm is bounded by 
$f(k) \cdot \norm{I}^{O(1)}$ where $f: \NN \rightarrow \NN$
is some computable function. 

Note that if we have a factor-$c$ 
approximation algorithm for $\minlin{r}{R}$,
then given a yes-instance $(S, k)$,
we can produce a solution of size $ck$ 
using self-reducibility.
First, if $k = 0$, then $(S, k)$ is
a yes-instance if and only if $S$
is satisfiable.
In this case we can compute
a satisfying assignment in polynomial time (see e.g.~\cite{Arvind:Vijayaraghavan:stacs2005}).
Now, if $k > 0$,
iterate over all subsets $S' \subseteq S$
of soft equations with $|S'| = c$,
which requires polynomial time,
and for each of them check whether
$(S - S', k - 1)$ is a yes-instance,
and if so, recursively compute
an assignment $\alpha$ to $S - S'$
of cost at most $c(k-1)$.
Observe that the cost of $\alpha$ in $S$
is at most $c(k-1) + |S'| \leq c(k-1) + c = ck$.
It remains to show that
if $(S, k)$ is a yes-instance,
$(S - S', k-1)$ is a yes-instance
for some subset $S' \subseteq S$
of at most $c$ soft equations.
Indeed, if $(S, k)$ is a yes-instance,
there exists a subset 
$Z \subseteq S$ of soft equations 
such that $|Z| \leq ck$
and $S - Z$ is satisfiable.
Then for any $S' \subseteq Z$ of size $c$,
we have that $(S - S') - (Z' \setminus S')$ is satisfiable, hence 
$\mincost(S - S') \leq |Z \setminus S'| \leq ck - c = c(k-1)$.

\section{FPT-Approximation Algorithm}
\label{sec:algorithm}

We present an FPT-approximation algorithm for $\mintwolinz{m}$.
Section~\ref{ssec:algorithm-summary} contains the
``outer loop'' of the algorithm:
all the steps of the algorithm are stated
as propositions and lemmas and combined
in the proof of Theorem~\ref{thm:algo}.
The proofs of the more technically involved lemmas
are relegated to Section~\ref{ssec:comp-hom-branch},
which contains iterative compression and homogenization arguments
very similar to those in~\cite{Dabrowski:etal:soda2023},
and Section~\ref{ssec:class-assignments},
which is arguably the most challenging part of the algorithm
employing shadow removal and delicate branching.

\subsection{Algorithm Summary}
\label{ssec:algorithm-summary}

Let $p_1^{n_1} \cdots p_\ell^{n_\ell}$ be the prime factorization of $m \in \ZZ_+$.
It is well known that $\ZZ_m$ is isomorphic to the direct sum
$\bigoplus_{i=1}^{\ell} \ZZ_{p_i^{n_i}}$,  so we can reduce 
the problem to the prime power case.

\begin{proposition}
  \label{prop:product-approx}
  Suppose that the ring $R$ is 
  isomorphic to a direct sum $\bigoplus _{i=1}^{\ell} R_{i}$.
  If $\minlin{2}{R_{i}}$ is \FPT-approximable within a factor $c_i$
  for all $i \in [\ell]$, then $\minlin{2}{R}$ is \FPT-approximable
  within a factor $\sum_{i=1}^\ell c_i$.
\end{proposition}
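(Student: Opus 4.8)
The plan is to exploit the fact that a system of linear equations over a direct sum $R \cong \bigoplus_{i=1}^{\ell} R_i$ decomposes coordinate-wise into $\ell$ independent systems, one over each factor $R_i$. Concretely, fix an isomorphism $R \cong \bigoplus_{i=1}^{\ell} R_i$ and write each element $r \in R$ as a tuple $(r^{(1)}, \dots, r^{(\ell)})$ with $r^{(i)} \in R_i$. Given an instance $(S, S^\infty, k)$ of $\minlin{2}{R}$, I would map each equation $c_1 x_1 + c_2 x_2 = c$ (at most two variables) to the $\ell$ equations $c_1^{(i)} x_1^{(i)} + c_2^{(i)} x_2^{(i)} = c^{(i)}$ over $R_i$, introducing a separate copy $x^{(i)}$ of each variable $x$ for each coordinate $i$. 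An assignment $\alpha : V(S) \to R$ satisfies a given equation if and only if all $\ell$ of its coordinate projections satisfy the corresponding projected equations, since addition and multiplication in a direct sum are coordinate-wise. Hence an equation of $S$ is violated by $\alpha$ precisely when at least one of its $\ell$ coordinate copies is violated by the corresponding coordinate assignment $\alpha^{(i)}$.

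This observation alone does not give a clean reduction, because one deleted equation in $S$ may correspond to deletions in several coordinates, so costs do not simply add. The fix is to treat the coordinates sequentially. Run the factor-$c_1$ algorithm for $\minlin{2}{R_1}$ on the projected instance $S_1$ with parameter $k$. If it reports no solution of size $\le k$, then $(S,k)$ is a no-instance and we reject. Otherwise we obtain (via the self-reducibility discussion in Section~\ref{sec:prelims}) a deletion set $Z_1$ of soft equations of $S_1$ with $|Z_1| \le c_1 k$ such that $S_1 - Z_1$ is consistent; lift $Z_1$ to the corresponding set $\tilde Z_1 \subseteq S \setminus S^\infty$ of original equations. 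Now recurse on $S - \tilde Z_1$, but only over the remaining coordinates $2, \dots, \ell$: any equation that survives in $S - \tilde Z_1$ has a satisfiable first coordinate, so it is enough to make coordinates $2,\dots,\ell$ simultaneously satisfiable. The point is that if $(S,k)$ was a yes-instance witnessed by $Z^\star$ with $|Z^\star| \le k$, then $Z^\star$ restricted to coordinate $1$ is a valid solution to $S_1$ of size $\le k$, so the coordinate-$1$ algorithm's solution has size $\le c_1 k$; and $Z^\star \setminus \tilde Z_1$ is a solution of size $\le k$ to the coordinate-$\{2,\dots,\ell\}$ subproblem of $S - \tilde Z_1$, so the parameter stays $\le k$ at each recursive level.

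Iterating, after processing all $\ell$ coordinates we accumulate $\tilde Z_1 \cup \dots \cup \tilde Z_\ell$ with $|\tilde Z_i| \le c_i k$, and by construction $S - (\tilde Z_1 \cup \dots \cup \tilde Z_\ell)$ is consistent over every coordinate, hence over $R$. This yields a solution of size at most $\sum_{i=1}^\ell c_i k = \left(\sum_{i=1}^\ell c_i\right) k$, and the running time is the sum of $\ell$ FPT running times with parameter bounded by $k$ throughout, so it is FPT. The converse direction — that rejecting is correct — follows immediately since any genuine solution to $(S,k)$ projects to a size-$\le k$ solution of $S_1$, so if the coordinate-$1$ algorithm rejects at parameter $k$, there truly is no solution. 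The main thing to be careful about is the bookkeeping that the parameter does not grow across the $\ell$ rounds: one must argue that a hypothetical optimal solution of $S$, after removing the equations we have committed to deleting, remains a size-$\le k$ solution of the residual instance restricted to the not-yet-processed coordinates. This is where the coordinate-wise structure of the direct sum is essential, and it is the only genuinely non-routine point; everything else is straightforward reindexing.
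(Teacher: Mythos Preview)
Your argument is correct, but you are working harder than necessary. The paper uses the same coordinate decomposition idea but runs the $\ell$ coordinate instances \emph{in parallel} rather than sequentially: solve each $(S_i,k)$ with the factor-$c_i$ algorithm, reject if any one rejects (each $S_i$ is a relaxation of $S$), and otherwise combine the resulting assignments $\alpha_i$ into a single assignment $\alpha$ over $R$. Since $\alpha$ violates an equation of $S$ only if some $\alpha_i$ violates the corresponding projection, the total number of violated equations is at most $\sum_i c_i k$ by the union bound. Your concern that ``costs do not simply add'' is unfounded in this direction: the union of the lifted deletion sets has size at most the sum of their sizes, which is exactly what is needed. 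Consequently the sequential processing and the bookkeeping that the parameter stays $\le k$ across rounds are not required; your approach is sound but the parallel version is shorter and avoids the one point you flagged as ``non-routine''.
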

\begin{proof}
Since $R \cong \bigoplus _{i=1}^{\ell} R_{i}$, there exists a
ring isomorphism $r \mapsto (r_1,\dots,r_\ell)$, 
where $r_i \in R_i$ for all $i \in \{1,\ldots,\ell\}$.
If $e$ is the equation $ax+by=c$ over $R$, then we
let $e_i$ denote the equation $a_i x + b_i y = c_i$ over $R_i$.
By this isomorphism, 
the equation $e$ has a solution if and only if $e_i$ has a solution for
every $i \in \{1,\ldots,\ell\}$.

Let $I=(S,k)$ denote an arbitrary instance of
$\minlin{2}{R}$ with $S=\{e^1,\dots,e^m\}$. Let 
$S_i=(\{e^1_i,\dots,e^m_i\},k)$ and $I_i=(S_i,k)$, $i \in \{1,\ldots,\ell\}$.
Compute $c_i$-approximate solutions to $I_i$, $i \in \{1,\ldots,\ell\}$.
If at least one of these instances has no solution, 
then $I$ has no solution since every such instance
is a relaxation of $I$.
Otherwise, let $\alpha_i : V \to R_i$ be
solutions to $I_i$, $i \in \{1,\dots,\ell\}$,
and define $\alpha(v) \in R$ for all $v \in V$ 
to be the preimage of
$(\alpha_1(v_1), \dots, \alpha_n(v_n))$.
Observe that $\alpha$ violates equation $e$
if and only if there is $i \in \{1,\dots,\ell\}$
such that $\alpha_i$ violates $e_i$.
Hence, $\alpha$ violates at most $k \cdot \sum_{i=1}^\ell c_i$ equations,
and we are done.
\end{proof}

Now, consider the ring $\ZZ_{p^n}$ for 
a prime $p$ and positive integer $n$.
If $n = 1$, then the ring is in fact a field,
and we can use e.g. the algorithm of~\cite{Dabrowski:etal:soda2023}
to solve the \mintwolin{} problem.
Assume $n \geq 2$.
We start with a simplification step.
An equation over a ring $R$ is \emph{simple}
if it is either a binary equation
of the form $u = rv$ for some $r \in R$
or a crisp unary equation
$u = r$ for some $r \in R$.
An instance $S$ of $\lin{2}{R}$ is simple
if every equation in $S$ is simple.
We use iterative compression,
homogenization and branching to assume that
the input instances $S$ are simple.
The details are deferred to 
Section~\ref{ssec:comp-hom-branch},
where we prove the following lemma.

\begin{restatable}{lemma}{lemsimple}
  \label{lem:simple-instances}
  Let $m$ be positive integer and $c$ be a constant. 
  If $\mintwolinz{m}$ restricted to simple instances is 
  \FPT-approximable within a factor $c$,
  then $\mintwolinz{m}$ on general instances is 
  \FPT-approximable within a factor $c$.
\end{restatable}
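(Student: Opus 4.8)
The plan is to transform an arbitrary instance into a simple one using iterative compression, an affine change of variables, and branching, while carefully tracking the approximation ratio. \emph{Iterative compression:} list the soft equations of $S$ as $e_1,\dots,e_t$ and set $S_i := S^\infty \cup \{e_1,\dots,e_i\}$, so $S_0 = S^\infty$ and $S_t = S$. If $S_0$ is inconsistent we reject; otherwise $Z_0 := \emptyset$ is a solution. Given a solution $Z_{i-1}$ of $S_{i-1}$ with $|Z_{i-1}| \le ck$, the set $W := Z_{i-1} \cup \{e_i\}$ is a solution of $S_i$ of size $O(k)$, so it suffices to handle the \emph{compression} version in FPT time: given $(S,k)$ and a known solution $W$ with $|W| = O(k)$, either certify $\mincost(S) > k$ or output a solution of size at most $ck$. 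Using its output as $Z_i$ and rejecting whenever $\mincost(S_i) > k$ yields the desired algorithm for $S$, keeping the factor $c$ at a cost of a factor $t$ in running time.

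\emph{Guessing the modulator and homogenizing.} Branch over all $2^{|W|}$ subsets $W' \subseteq W$: delete $W'$ (committing it to the output), make $W \setminus W'$ crisp, and lower the budget to $k - |W'|$. In the branch with $W' = Z^* \cap W$ for an optimal $Z^*$, the residual instance has minimum cost at most $k - |W'|$, and since $c \ge 1$, prepending $W'$ to any factor-$c$ solution of it produces a solution of $S$ of size at most $|W'| + c(k - |W'|) \le ck$; so it suffices to factor-$c$-approximate the residual instance. There $S - W$ is consistent; fix a satisfying assignment $\sigma$ and apply the cost-preserving substitution $v \mapsto v + \sigma(v)$ to every equation. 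Since $\sigma$ satisfies all crisp equations and all soft equations outside $W$, these all become homogeneous; only the $O(k)$ equations of $W \setminus W'$, now crisp, may remain inhomogeneous.

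\emph{Reaching simple form.} Brute-force over all assignments to the $O(k)$ variables occurring in $W \setminus W'$; in a branch, pin each of them to its guessed value with a crisp unary equation, substitute, discard the branch if some equation of $W \setminus W'$ is violated, and lower the budget by the number of soft equations that have become unconditionally violated (discarding if it turns negative). Finally rewrite every remaining equation in simple form in a cost-preserving way: a unary equation $cx = d$ becomes a fresh variable $y$, a crisp simple equation $y = c \cdot x$, and the equation $y = d$, which is crisp unary if $cx = d$ was crisp and is otherwise a soft unary equation that in turn becomes a fresh $z$, a soft simple equation $z = y$, and a crisp unary equation $z = d$; a homogeneous binary equation $cx = c'y$ is already simple when $c$ or $c'$ is a unit, and otherwise, over $\ZZ_{p^n}$ (the case in which the lemma is applied), writing $c = p^i\gamma$ and $c' = p^j\gamma'$ with $\gamma,\gamma'$ units and $i \le j$, it is replaced by the simple equations $w = p^{j-i}\gamma^{-1}\gamma'\,y$, $t = p^i x$, $t' = p^i w$ and $t = t'$ with $w,t,t'$ fresh, exactly one of which inherits the original's softness, and general $m$ is analogous. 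The result is a simple instance whose minimum cost, in the relevant branch, matches that of the residual instance up to the amount already charged against the budget. Calling the assumed factor-$c$ algorithm for simple instances and mapping the returned soft equations back through the gadgets yields, in the correct branch, a solution of $S$ of size at most $ck$; we return the smallest solution over all branches.

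\emph{Expected main obstacle.} The construction is structurally routine and mirrors the corresponding development in~\cite{Dabrowski:etal:soda2023}; the points that actually need care are the accounting that keeps the final ratio exactly $c$ rather than a larger constant --- relying on $c \ge 1$ and on charging the unconditionally violated soft equations directly to the budget --- and the design of the gadgets that turn inhomogeneous and soft unary equations into simple equations while preserving every assignment's cost exactly, without implicitly presuming the very structure we are trying to establish.
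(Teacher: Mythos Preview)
Your approach is correct and follows the same high-level strategy as the paper: iterative compression to obtain a small modulator, homogenization via an affine shift using a satisfying assignment of the complement, and branching to pin down the modulator's variables. The accounting that keeps the ratio at exactly $c$ is fine.

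Two points of divergence are worth noting. First, your step~2 (branching over subsets $W'\subseteq W$) is redundant with step~5: once you guess values for all variables in $V(W)$, the equations of $W$ that are violated are determined, and those are exactly what should be deleted and charged to the budget. The paper skips your step~2 and branches directly over assignments $\alpha:V(X)\to\ZZ_m$. This is harmless for correctness but inflates the branching.

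Second, and more importantly, your gadget for homogeneous binary equations $c x = c' y$ is needlessly complicated and ring-specific. The paper's gadget is both simpler and ring-agnostic: introduce a fresh variable $z_e$ and replace the equation by the two simple binary equations $z_e = c\cdot x$ and $z_e = c'\cdot y$, making one of them crisp and letting the other inherit the softness of the original. This works over any $\ZZ_m$ (indeed any commutative ring), since ``$z=r\cdot v$'' is simple for arbitrary $r$; there is no need to factor out the $p$-adic valuation or assume $m=p^n$. Your sentence ``general $m$ is analogous'' is therefore an unjustified hand-wave that the simpler gadget renders unnecessary. Your unary gadgets are fine as written.
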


For the reduction step,
we will solve a relaxation of $\mintwolinz{p^n}$:
we will define an equivalence relation $\equiv$ on $\ZZ_{p^n}$
and compute an assignment of equivalence classes
(rather than values) to the variables;
then we will use this class assignment
to rewrite our input as a set of equations over $\ZZ_{p^{n-1}}$.
The partition is obtained by viewing the elements
of $\ZZ_{p^n}$ represented in base-$p$.
Formally, every element $a \in \ZZ_{p^n}$
equals $\sum_{i=0}^{p-1} a_i p^i$,
where the coefficients $a_0,\dots,a_{p-1} \in \ZZ_p$
uniquely define $a$.
Let $\vec{a} = (a_0,\dots,a_{p-1})$,
and for every $a \neq 0$, define
$\ord(a) = \min \{ i : a_i \neq 0 \}$
to be the index of the first nonzero coordinate in $\vec{a}$, and
$\lsu(a) = a_{\ord(a)}$ 
to be the least significant unit in $\vec{a}$.
For completeness, let $\ord(0) = \lsu(0) = 0$.
Let 
\[ a \equiv b \iff \ord(a) = \ord(b) \text{ and } \lsu(a) = \lsu(b). \]
This equivalence relation has two important properties.
First, it is \emph{matching}, i.e.
$\{0\}$ is an equivalence class,
and for every $i,j \in \ZZ_{p^n}$ and $r \in \ZZ_{p^n}$,
\begin{itemize}
  \item if $i \equiv j$ then $ri \equiv rj$,
  \item if $i \not\equiv j$, then either $ri \not\equiv rj$ or $ri=rj=0$.
\end{itemize}
Moreover, it is \emph{absorbing}, meaning that
\begin{itemize}
  \item $i \equiv j \implies p \text{ divides } i - j$ 
  for all $i,j \in \ZZ_{p^n}$.
\end{itemize}

Let $\EQ_{p^n}$ denote the set of equivalence
classes of $\equiv$, and 
$\EQ^{\neq 0}_{p^n} = \EQ_{p^n} \setminus \{\{0\}\}$.
We will drop the subscript when it is clear from the context.
We write $\EQ(r)$ to denote the equivalence class in $\EQ$ 
that contains the element $r$.
The name ``matching'' comes from 
considering bipartite graphs $G^{\neq 0}_r$
defined by binary equations $u = rv$ 
for every $r \in R$ as follows:
let $V(G_r) = \EQ^{\neq 0} \uplus \EQ^{\neq 0}$ 
and let there be an edge between two classes 
$C_1$ on the left and $C_2$ on the right 
if and only if $i = rj$ for some 
$i \in C_1$ and $j \in C_2$.
Then $\equiv$ being matching implies
that $G^{\neq 0}_r$ is a matching for every $r$, 
i.e.\ every vertex has degree at most~$1$.

\begin{example}
  Partition $\EQ_{3^2}$ (with elements written in base-$3$) 
  has classes
\[
  \{01_3, 11_3, 21_3\}, 
  \{02_3, 12_3, 22_3\},
  \{10_3\}, \{20_3\},
  \{00_3\}.
\]
For another example, $\EQ_{2^3}$ has classes
\[
  \{001_2, 011_2, 101_2, 111_2\},
  \{010_2, 110_2\},
  \{100_2\}, 
  \{000_2\}.
\]
For a non-example,
a coarser partition of $\ZZ_{2^3}$ into three classes
\[
  \{001_2, 011_2, 101_2, 111_2\}, \{010_2, 110_2, 100_2\}, \{000_2\}.
\]
lacks the matching property (as is evident from 
the equation $x = 2y$ in Figure~\ref{fig:z8-matchings}). 
\end{example}

The matching property of $\equiv$ is crucial
for the main algorithmic lemma, which we
state below and prove in Section~\ref{ssec:class-assignments}.
A value assignment $\alpha : V(S) \to \ZZ_{p^n}$
\emph{agrees with} a class assignment $\tau : V(S) \to \EQ$
if $\alpha(v) \in \tau(v)$ for all $v \in V(S)$.
A class assignment $\tau$ \emph{respects} an equation $e$
if it admits a satisfying assignment that agrees with $\tau$, 
otherwise we say that $\tau$ \emph{violates} $e$.
Define the cost of a class assignment $\tau$ to be
the number of equations in $S$ that it violates.
Note that every value assignment $\alpha$ 
uniquely defines a class assignment $\tau_{\alpha} : V(S) \to
\EQ_{p^n}$,
so we say that $\alpha$ \emph{strongly violates}
an equation $e$ if $\tau_{\alpha}$ violates $e$.
Clearly, an optimal assignment can violate 
at most $k$ equations in $S$, and we can guess 
the number $q \in \{0,\dots,k\}$ of strongly violated equations.

\begin{lemma}
  \label{lem:get-class-assignment}
  Let $p$ be a prime, $n$ be a positive integer, and let $k$ and $q$
  be integers with $k\geq q$.
  There is a randomized algorithm that takes
  a simple instance $S$ of $\twolinz{p^n}$ 
  and integers $k$ and $q$ as input, and in $O^*(2^{O(k)})$ time
  returns a class assignment $\tau : V(S) \to \EQ_{p^n}$
  such that the following holds.
  Let $Y$ be the set of equations in $S$ violated by $\tau$.
  Then have $|Y| \leq 2q$.
  Moreover, if $S$ admits an assignment of cost at most $k$ that
  strongly violates at most $q$ equations,
  then with probability at least $2^{-O(k+q^2)}$,
  $S-Y$ admits an assignment of cost at most $k - |Y| / 2$
  that agrees with $\tau$.
\end{lemma}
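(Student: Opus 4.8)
The plan is to argue entirely in the class-assignment graph $G = G(S)$ of Section~\ref{ssec:intro-algo}: a source $s$, a sink $t$, and a vertex $x_C$ for every $x \in V(S)$ and every nonzero class $C \in \EQ^{\neq 0}_{p^n}$; an undeletable edge $sx_{\EQ(r)}$ for each crisp equation $x = r \ne 0$, and undeletable edges $x_C t$ for all $C$ when $r = 0$; and a bundle $B_e$ of edges matching the class vertices of the two variables of each binary equation $e$. Because $\equiv$ is matching, each $B_e$ is a graph matching, so a conformal cut $\delta(U)$ (meaning $s \in U$, $t \notin U$, at most one $x_C \in U$ per variable, and no undeletable edge cut) contains at most two edges of every bundle; hence $\abs{\delta(U)} \le 2\,b(U)$, where $b(U)$ counts the bundles met by $\delta(U)$. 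First I would record the dictionary between conformal cuts and class assignments $\tau$, note that conformal cuts automatically respect crisp equations, and prove the one-directional claim we actually need: if $\tau$ respects a binary equation $e$ then $\delta(U)$ misses $B_e$. Consequently every equation violated by $\tau$ is either one whose bundle is met by $\delta(U)$, or a ``phantom'' — an equation $u = rv$ with $\tau(v) = \{0\}$ and $\tau(u)$ outside the image of multiplication by $r$, which is then violated by every assignment respecting the crisp constraints, $\alpha^*$ included.

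Fix an optimal assignment $\alpha^*$ of cost $k$ strongly violating $q^* \le q$ equations, and let $U^*$ be the conformal cut for $\tau_{\alpha^*}$. Then $b(U^*) \le q^*$, so $\abs{\delta(U^*)} \le 2q$, and $\alpha^*$ with its strongly violated equations removed agrees with $\tau_{\alpha^*}$ and has cost $k - q^* \le k - q^*/2$; thus $\tau_{\alpha^*}$ is a legal output, and the whole task is to produce \emph{some} such $\tau$ without access to $\alpha^*$. The vehicle is shadow removal~\cite{marx2014fixed,chitnis2015directed}. I would work with an implicit family $\cF$ of connected subgraphs containing $s$ — the $st$-walks and the $s$--$x_C$--$x_{C'}$-walks (which force any transversal to induce a conformal cut), together with ``witness'' subgraphs certifying that a candidate $U$ hides an obstruction that $\alpha^*$ avoids — set up so that $\delta(U^*)$ is a transversal of $\cF$ of size at most $2q$. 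Sampling and removing shadows as in~\cite{marx2014fixed,chitnis2015directed} and then locating a shadowless transversal by \textsc{Digraph Pair Cut}-style branching~\cite{kratsch2020representative} produces, in $O^*(2^{O(k)})$ time and with probability $2^{-O(k + q^2)}$, a conformal cut $\delta(U)$ with $\abs{\delta(U)} \le \abs{\delta(U^*)} \le 2q$; the $2^{-O(q^2)}$ part of the success probability pays for additionally guessing the $O(q)$-size interaction pattern among the strongly violated equations that makes the output $\tau$ provably compatible with a fixed optimum.

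It remains to verify the two conclusions. For $\abs{Y} \le 2q$, every equation violated by $\tau$ is charged either to an edge of $\delta(U)$ — at most $\abs{\delta(U)} \le 2q$ bundles — or, if it is a phantom, to the crisp chain inside $U$ forcing its offending variable; the compatibility property built into $\cF$ is what lets these phantoms be absorbed without exceeding $2q$. For the cost bound, read $\delta(U)$ back as the equation set $Y$, commit to $\tau$, and exhibit an assignment $\beta$ agreeing with $\tau$ on $S - Y$: set $\beta$ to $0$ off $U$ and propagate class representatives from $s$ along the shadowless structure on $U$. Every equation broken by $\beta$ in $S - Y$ is then charged either to an equation broken by $\alpha^*$ or to an edge of $\delta(U)$; since cut edges occur in pairs per bundle (so $\abs{Y} \le \abs{\delta(U)}$), while $\abs{\delta(U)} \le \abs{\delta(U^*)}$ leaves slack against the $k$ violations of $\alpha^*$, a careful count yields cost at most $k - \abs{Y}/2$.

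I expect the genuinely hard part to be the interlocked pair: (i) pinning down $\cF$ and the extra guesses so that a shadowless transversal yields a $\tau$ compatible with a fixed optimum — not merely one of small cost, which the disjoint-copies examples built from~\eqref{eq:z8-example} show is insufficient for the recursion — and (ii) the charging that converts ``$\abs{\delta(U)} \le 2q$ with cut edges in pairs'' into the residual-budget bound $k - \abs{Y}/2$. The graph construction, the matching/absorbing bookkeeping, and the black-box invocation of shadow removal with its $2^{-O(q^2)}$ success probability should be routine given Sections~\ref{ssec:intro-algo} and~\ref{ssec:algorithm-summary}.
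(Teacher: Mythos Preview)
Your high-level framework---class-assignment graph, conformal cuts, shadow removal---matches the paper, but the central step is missing a concrete mechanism. After shadow removal produces $W$, you propose ``\textsc{Digraph Pair Cut}-style branching'' plus ``guessing an $O(q)$-size interaction pattern'' to land on a cut whose $\tau$ is compatible with the optimum. The paper explicitly rules out DPC branching for non-fields (Section~\ref{ssec:intro-algo}), and your ``interaction pattern'' is never defined. What the paper actually does is branch over which connected components of $G[W]$ get separated from $s$, and the crucial idea you are missing is a \emph{self-satisfiability test}: for each component $C$ that remains reachable, solve a linear system over $\ZZ_{p^n}$ to check whether the equations inside $C$ can be simultaneously satisfied by values in the classes $C$ dictates. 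A reachable, non-self-satisfiable component must contain an equation of $\comp{Z}$---one the optimum deletes without contributing a cut edge---and this is what lets the algorithm branch against a second budget $k-q$, not just $2q$. Without this test you cannot tell good conformal cuts from bad ones: copies of~\eqref{eq:z8-example} give size-$0$ conformal cuts whose $\tau$ has unbounded residual cost, and nothing in your $\cF$ or your branching rules them out.

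Your assignment $\beta$ (``set to $0$ off $U$, propagate class representatives on $U$'') also fails: picking representatives can violate arbitrarily many equations inside $U$ that $\tau$ respects (an odd cycle of $u=3v$ over $\ZZ_8$, for instance). The paper instead starts from the satisfying assignment $\varphi$ of $S-Z$, zeroes variables cut off, and patches each self-satisfiable component with its own local satisfying assignment; this is where the partition $Z=Z_1\cup Z_2$ with $|Y|\le 2|Z_2|$ and the bound $\cost\le |Z_1|\le k-|Y|/2$ come from (Lemma~\ref{lem:list-conformal-cuts}). A smaller point: your ``phantom'' equations are an artifact of an incomplete graph. The paper's $G(S)$ adds edges $y_D t$ whenever class $D$ has no $\pi_e$-preimage, so every equation violated by $\tau$ already has its bundle hit by $\delta(U)$; your attempt to charge phantoms to $\alpha^*$ fails regardless, since $\tau$ comes from your cut, not from $\alpha^*$.
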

We mention that the algorithm in the previous lemma
can be derandomized using standard methods~\cite{chitnis2015directed}.
While technical details are deferred to
Section~\ref{ssec:class-assignments},
we remark that the equivalence class $\{0\}$ 
plays a special role:
since all binary equations in $S$ are homogeneous,
they are satisfied by the all-zero assignment.
Intuitively, this observation
allows us to ``greedily''
assign the value $0$ to the variables,
with the only obstacle being the crisp
unary equations in $S$.
After phrasing the class assignment problem
in terms of cuts in a certain auxiliary graph,
we apply the shadow removal technique.

To explain how the absorbing property of $\equiv$ is used,
we need some definitions.
Choose an arbitrary representative element $\rep{C}$
from every equivalence class $C \in \EQ$.
Consider a simple equation $e$ and a class 
assignment $\tau: V(e) \to \EQ$ to its variables
that respects $e$.
For unary equations $e = (u = r)$,
define $e' = \nxt(e, \tau)$ to be
\[ 
  u' = \frac{r - \rep{\tau(u)}}{p}.
\] 
For binary equations $e = (u = rv)$,
define $e' = \nxt(e, \tau)$ to be
\[
  u' = rv' + \frac{r \rep{\tau(v)} - \rep{\tau(u)}}{p}.
\]
The absorbing property implies that
$\nxt(e, \tau)$ is defined in both cases.
Indeed, if $e = (u = r)$ and $\tau$ respects $e$,
then $r \in \tau(u)$ and $r \equiv \rep{\tau(u)}$,
so $p$ divides $r - \rep{\tau(u)}$.
If $e = (u = rv)$ and $\tau$ respects $e$,
then there is an assignment $\alpha : \{u,v\} \to \ZZ_{p^n}$
such that
$\alpha(u) = r \alpha(v)$,
$\alpha(u) \in \tau(u)$ and
$\alpha(v) \in \tau(v)$.
Equivalently, $\alpha(u) - r\alpha(v) = 0$ and
$p$ divides both $\rep{\tau(u)} - \alpha(u)$ and
$\rep{\tau(v)} - \alpha(v)$,
hence $p$ also divides any linear combination 
of these two values, particularly
\[r(\rep{\tau(v)} - \alpha(v)) - (\rep{\tau(u) - \alpha(u)}) = 
r\rep{\tau(v)} - \rep{\tau(u)} + (\alpha(u) - r\alpha(v)) = 
r\rep{\tau(u)} - \rep{\tau(v)}.\]

\begin{lemma}
  \label{lem:next-level}
  Let $p$ be a prime and $n \in \ZZ_+$.
  Let $e$ be a simple equation over $\ZZ_{p^n}$,
  and $\tau : V(e) \to \EQ$ be a class assignment.
  Then $\tau$ respects $e$ 
  if and only if $\nxt(e, \tau)$ is satisfiable
  over $\ZZ_{p^{n-1}}$.
\end{lemma}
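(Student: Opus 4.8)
The plan is to prove the two directions of the biconditional separately, exploiting the explicit formulas defining $\nxt(e,\tau)$ and the bijective substitution $v = pv' + \rep{\tau(v)}$ between assignments over $\ZZ_{p^n}$ that agree with $\tau$ and arbitrary assignments over $\ZZ_{p^{n-1}}$. The key observation is that this substitution is a well-defined bijection precisely because of the absorbing property, which (as noted just before the statement) also guarantees that $\nxt(e,\tau)$ is well-defined, i.e.\ that the constant terms $\frac{r-\rep{\tau(u)}}{p}$ and $\frac{r\rep{\tau(v)}-\rep{\tau(u)}}{p}$ lie in $\ZZ_{p^n}$; after reduction modulo $p^{n-1}$ they give the coefficients of $\nxt(e,\tau)$.

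First I would treat the unary case $e = (u = r)$. Suppose $\tau$ respects $e$: then $r \in \tau(u)$, so writing $u = pu' + \rep{\tau(u)}$ for the unique $u' \in \ZZ_{p^{n-1}}$ corresponding to the value $r$, we get $pu' = r - \rep{\tau(u)}$, hence $u' = \frac{r-\rep{\tau(u)}}{p}$ works over $\ZZ_{p^{n-1}}$, so $\nxt(e,\tau)$ is satisfiable. Conversely, if $\nxt(e,\tau)$ has a solution $u' \in \ZZ_{p^{n-1}}$, set $u = pu' + \rep{\tau(u)} \in \ZZ_{p^n}$; since $u \equiv \rep{\tau(u)}$ by the absorbing property run in reverse (adding a multiple of $p$ does not change $\ord$ or $\lsu$ — this needs a one-line check using the base-$p$ representation), we have $u \in \tau(u)$, and modulo $p^n$ one verifies $u = r$ because $pu' \equiv r - \rep{\tau(u)} \pmod{p^n}$ follows from $u' \equiv \frac{r-\rep{\tau(u)}}{p} \pmod{p^{n-1}}$ after multiplying by $p$. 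So $e$ admits a satisfying assignment agreeing with $\tau$.

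Next I would handle the binary case $e = (u = rv)$ in the same spirit, substituting $u = pu' + \rep{\tau(u)}$ and $v = pv' + \rep{\tau(v)}$. Expanding $u - rv = 0$ gives $p(u' - rv') + (\rep{\tau(u)} - r\rep{\tau(v)}) = 0$ in $\ZZ_{p^n}$, which — dividing by $p$, legitimate because the bracketed constant is divisible by $p$ whenever $\tau$ respects $e$ (exactly the computation displayed just before the lemma statement) — is equivalent to $u' - rv' - \frac{r\rep{\tau(v)} - \rep{\tau(u)}}{p} = 0$ in $\ZZ_{p^{n-1}}$, i.e.\ to $\nxt(e,\tau)$. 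For the forward direction, a satisfying assignment of $e$ agreeing with $\tau$ yields such $u',v'$ directly; for the backward direction, any solution $(u',v')$ of $\nxt(e,\tau)$ lifts to $(u,v)$ with $u \in \tau(u)$, $v \in \tau(v)$ by the absorbing property, and the chain of equivalences (now read upward) shows $u = rv$ in $\ZZ_{p^n}$.

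The main obstacle, and the point requiring the most care, is the backward direction: one must be sure that divisibility by $p$ of the relevant constant term — which is what licenses passing from an equation over $\ZZ_{p^n}$ to one over $\ZZ_{p^{n-1}}$ — is available even when we only assume $\nxt(e,\tau)$ is satisfiable, rather than assuming $\tau$ respects $e$. The clean way around this is to note that the constant term of $\nxt(e,\tau)$ is \emph{defined} only under the hypothesis that the corresponding constant over $\ZZ_{p^n}$ is divisible by $p$; the absorbing property shows this divisibility holds \emph{unconditionally} (it depends only on $\rep{\tau(u)}, \rep{\tau(v)}, r$, not on respecting $e$), so $\nxt(e,\tau)$ is always a legitimate equation over $\ZZ_{p^{n-1}}$ and the equivalence ``$u - rv = 0$ over $\ZZ_{p^n}$ with $u \in \tau(u), v \in \tau(v)$'' $\Leftrightarrow$ ``$\nxt(e,\tau)$ holds over $\ZZ_{p^{n-1}}$'' goes through verbatim in both directions. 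The remaining routine points are the base-$p$ check that $x \mapsto px + c$ preserves the equivalence class of $c$ and the bijectivity of $v \mapsto pv' + \rep{\tau(v)}$ from $\ZZ_{p^{n-1}}$ onto $\tau(v)$, both of which are immediate from the definitions of $\ord$ and $\lsu$.
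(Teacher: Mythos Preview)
Your approach mirrors the paper's: both use the substitution $v = pv' + \rep{\tau(v)}$ to pass between assignments of $e$ that agree with $\tau$ and assignments of $\nxt(e,\tau)$. The forward direction is correct and essentially identical to the paper's argument.

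The backward direction contains a genuine error. Your claim that ``adding a multiple of $p$ does not change $\ord$ or $\lsu$'' is false: in $\ZZ_8$, take $\rep{\tau(v)} = 4$ and $v' = 1$; then $pv' + \rep{\tau(v)} = 2 + 4 = 6$, and $\ord(6) = 1 \neq 2 = \ord(4)$, so $6 \not\equiv 4$. Hence the lift $\beta(v) = pv' + \rep{\tau(v)}$ does \emph{not} in general land in $\tau(v)$, and the claimed bijection $\ZZ_{p^{n-1}} \to \tau(v)$ cannot exist (for instance $\lvert\{4\}\rvert = 1 \neq 4 = \lvert\ZZ_4\rvert$). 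The paper's own proof makes the same bare assertion that $\beta$ agrees with $\tau$ and is equally incomplete on this point. Your further claim that the divisibility required for $\nxt(e,\tau)$ holds ``unconditionally'' is also false: for $e = (u = 2v)$ over $\ZZ_8$ with $\tau(u) = \tau(v)$ the class of units and both representatives equal to $1$, one gets $r\rep{\tau(v)} - \rep{\tau(u)} = 2 - 1 = 1$, which is odd.

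The lemma is nonetheless true, and the fix is simple. The backward direction is essentially vacuous as stated: $\nxt(e,\tau)$ is defined (per the paragraph preceding the lemma) only when $\tau$ respects $e$, so the hypothesis that $\nxt(e,\tau)$ is satisfiable already presupposes the conclusion. What is actually used downstream in the proof of Theorem~\ref{thm:algo} is only that the lift $\beta$ \emph{satisfies} $e$ whenever $\beta'$ satisfies $\nxt(e,\tau)$, and your algebraic verification does establish this; agreement of $\beta$ with $\tau$ is not needed there.
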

\begin{proof}
  If an assignment $\alpha : V(e) \to \ZZ_{p^n}$ satisfies $e$ 
  and agrees with $\tau$, define $\alpha' : V(e') \to \ZZ_{p^{n-1}}$
  as $\alpha'(x') = \frac{\alpha(x) - \rep{\tau(x)}}{p}$ for all $v \in V(e)$.
  Note that $\alpha'$ is well defined because $\equiv$ is absorbing.
  It is straightforward to verify that $\alpha'$ satisfies $e'$.
  
  On the other hand, if $\beta' : V(e') \to \ZZ_{p^{n-1}}$
  satisfies $e'$, then define
  $\beta : V(e) \to \ZZ_{p^n}$ as 
  $\beta(v) = \rep{\tau(v)} + p \cdot \beta'(v)$ for all $v \in V(e)$.
  Then $\beta$ agrees with $\tau$ 
  and satisfies $e$.
\end{proof}

Now we combine all ingredients to prove the main theorem.

\theoremalgo*
\begin{proof}
  Let $m = p_1^{n_1} \cdots p_\ell^{n_\ell}$ be the prime factorization of $m$.
  Note that $\omega(m) = \ell$, so
  by Proposition~\ref{prop:product-approx},
  it suffices to show that 
  $\mintwolinz{p^{n}}$ is \FPT-approximable within factor $2$
  for every prime $p$ and positive integer $n$.
  We proceed by induction on $n$.
  If $n = 1$, then we can use the algorithm of~\cite{Dabrowski:etal:soda2023}
  for $\mintwolin$ over fields to solve the problem exactly.
  Otherwise, let $(S, k)$ be an instance of $\mintwolinz{p^n}$.
  By Lemma~\ref{lem:simple-instances}, 
  we may assume without loss of generality that $S$ is simple.

  Select a value $q \in \{0,\ldots,k\}$ uniformly at random and
  run the algorithm from Lemma~\ref{lem:get-class-assignment} 
  on $(S, k, q)$ to produce a class assignment 
  $\tau : V(S) \to \EQ_{p^n}$.
  Let $Y$ be the set of equations in $S$ violated by $\tau$.
  If $|Y| > 2q$, then reject $(S, k)$.
  Otherwise, create an instance $S'$ of $\lin{2}{\ZZ_{p^{n-1}}}$
  with $V(S') = \{v' : v \in V(S)\}$ and
  $S' = \{\nxt(e, \tau) : e \in S - Y\}$.
  Set $k' = k - \lceil |Y|/2 \rceil$ and
  pass $(S', k')$ as input to the algorithm for 
  $\mintwolinz{p^{n-1}}$,
  and return the same answer.

  Towards correctness, first assume $(S, k)$ is a yes-instance,
  and let $q$ be the number of equations strongly violated 
  by an optimal assignment to $S$.
  Our guess for $q$ is correct with probability $\frac{1}{k+1}$.
  By Lemma~\ref{lem:get-class-assignment}, 
  with probability $2^{-O(q^2)}$,
  we have $|Y| \leq 2q$ and
  $S - Y$ admits an assignment of 
  cost at most $k - |Y|/2$ that agrees with $\tau$.
  By Lemma~\ref{lem:next-level},
  the instance $S'$ of $\twolinz{p^{n-1}}$ we create 
  has cost at most $k'$, so we accept.

  For the other direction, suppose the algorithm accepts.
  Then $(S', k')$ is a yes-instance of $\mintwolinz{p^{n-1}}$,
  and by Lemma~\ref{lem:next-level},
  $\mincost(S - Y) \leq k - |Y|/2$.
  Let $Z$ be an optimal solution to $S - Y$.  
  Then $Y \cup Z$ is a solution to $S$,
  and $|Y \cup Z| \leq |Y| + |Z| \leq 2(k - |Y|/2) + |Y| \leq 2k$.

  We remark that all steps in the algorithm
  can be derandomized either by exhaustive branching
  or standard methods in~\cite{marx2014fixed}.
\end{proof}

\subsection{Iterative Compression and Homogenization via Branching}
\label{ssec:comp-hom-branch}

Recall that an instance of $\mintwolinz{m}$ is simple
if every equation is crisp, unary and of the form
$u = r$ or binary and of the form $u = rv$,
where $u,v$ are variables and $r \in \ZZ_{m}$.
Using iterative compression and a trick called
homogenization, we will prove the following result,
which allows us to restrict our attention to simple instances.

\lemsimple*
\begin{proof}
  Suppose $\mintwolinz{m}$ is \FPT-approximable within a factor of $c$
  when restricted to simple instances.
  Start with an arbitrary instance $(S, k)$ of the problem.
  We may assume without loss of generality that 
  all unary equations in $S$ are crisp:
  indeed, if this is not the case, we may
  introduce a new variable $w$ in $V(S)$,
  add a crisp equation $w = 0$
  and replace every soft unary equation
  of the form $ax = b$
  with a binary equation $ax = w + b.$
  By iterative compression, assume we have
  access to a set $X \subseteq S$ of size at most $ck+1$
  such that $S - X$ is satisfiable.
  Fix an assignment $\chi$ that satisfies $S - X$.
  Construct a new instance $S'$ of $\twolinz{m}$ as follows.
  Start by adding copies of all variables from $V(S)$ to $V(S')$.
  For every soft/crisp equation $e = (au = bv + c)$,
  where $a,b,c \in \ZZ_{m}$, $a \neq 0$ and $u,v \in V(S)$,
  introduce a new variable $z_e$ in $S'$, and
  add crisp/soft equations $z_e = a u$ and $z_e = b v$.
  Now, for every possible assignment 
  $\alpha : V(X) \to \ZZ_{m}$,
  create an instance $S_{\alpha}$ starting with $S'$
  and adding crisp equations $v = \alpha(v)$ for all $v \in X$.
  Note that every instance $S_{\alpha}$ is simple.
  Thus, we can run the algorithm for $\mintwolinz{m}$
  on simple instances $(S_{\alpha}, k)$ for every $\alpha$,
  accept if any one of them is accepted,
  and reject otherwise.
  Note that the construction of each $S_{\alpha}$ takes
  polynomial time, and there are $2^{O(k)}$ of them,
  so the algorithm runs in \FPT-time.

  To prove correctness, it suffices to show that
  $\mincost(S) = \min_{\alpha} \mincost(S_{\alpha})$.
  First, let $\psi : V(S) \to \ZZ_{p^n}$ be 
  an arbitrary assignment to $S$,
  and fix $\alpha$ such that
  $\alpha(x) = \psi(x) - \chi(x)$ for all $x \in V(X)$.
  We claim that $\mincost(S_{\alpha}) \leq \cost(\psi)$.
  Indeed, the assignment defined as 
  \begin{itemize}
    \item $v \mapsto \psi(v) - \chi(v)$ for all $v \in V(S)$, and
    \item $z_e \mapsto a (\psi(u) - \chi(u))$ for all $e \in S - Z$
  \end{itemize}
  satisfies all equations in $S_{\alpha} \setminus S'$.
  For every $e = (au = bv + c)$ in $S$,
  it satisfies the equation $z_e = au$ in $S'$ by 
  definition; moreover, if
  $\psi$ satisfies $e$, then the assignment 
  also satisfies $z_e = bv$ because
  $a\psi(u) = b\psi(v) + c$ and
  $a\chi(u) = b\chi(v) + c$,
  so taking their difference yields
  $a(\psi(u) - \chi(u)) = b(\psi(v) - \psi(u))$.

  On the other hand, suppose $\phi$ is an assignment to $S_{\alpha}$
  such that $\phi$ agrees with $\alpha$ on $V(X)$.
  We claim that $\mincost(S) \leq \cost(\psi')$.
  Indeed, the assignment $v \mapsto \phi(v) + \chi(v)$
  satisfies every $e = (au = bv + c)$ in $S$ whenever
  $\phi$ satisfies both $z_e = au$ and $z_e = bv$
  because summing up
  $a\phi(u) = b\phi(v)$ with
  $a\chi(u) = b\chi(v) + c$ yields
  $a(\phi(u) + \chi(u)) = b(\phi(v) + \chi(v)) + c$.
\end{proof}

\subsection{Computing Class Assignments}
\label{ssec:class-assignments}

\newcommand{\RZ}{\ZZ_{p^n}}
\newcommand{\homoI}{simple}

This subsection is devoted to a proof of~\Cref{lem:get-class-assignment}. 
To achieve this we introduce the class assignment graph (Section~\ref{sec:class-graph}) and show that
certain cuts in this graph correspond to class assignments (Section~\ref{sec:class-cuts}), which
themselves correspond to solutions of $\minlin{2}{\RZ}$. We then
use shadow removal and branching to compute these cuts
(Section~\ref{sec:goodbye-shadows}). Throughout this section, we
consider the ring $\RZ$ for prime number $p$ and integer $n$.

\subsubsection{The Class Assignment Graph}
\label{sec:class-graph}

\newcommand{\EQM}[2]{\pi_{#1}}
\newcommand{\mundef}{\textsf{Nil}}

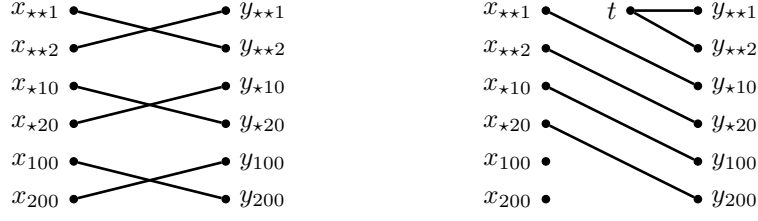
\begin{figure}
 \centering
 \begin{tikzpicture}[node distance=0.5cm]
    \tikzstyle{no}=[draw,circle, inner sep=1pt, fill]
    \tikzstyle{ed}=[draw,color=black, line width=1pt]

    \draw
    node[no, label=left:$x_{\star\star1}$] (l1) {}
    node[no, below of=l1, label=left:$x_{\star\star2}$] (l2) {}
    node[no, below of=l2, label=left:$x_{\star10}$] (l3) {}
    node[no, below of=l3, label=left:$x_{\star20}$] (l4) {}
    node[no, below of=l4, label=left:$x_{100}$] (l5) {}
    node[no, below of=l5, label=left:$x_{200}$] (l6) {}
    ;
    \draw
    (l1) +(2cm,0cm) node[no, label=right:$y_{\star\star1}$] (r1) {}
    node[no, below of=r1, label=right:$y_{\star\star2}$] (r2) {}
    node[no, below of=r2, label=right:$y_{\star10}$] (r3) {}
    node[no, below of=r3, label=right:$y_{\star20}$] (r4) {}
    node[no, below of=r4, label=right:$y_{100}$] (r5) {}
    node[no, below of=r5, label=right:$y_{200}$] (r6) {}
    ;

    \draw
    (l1) edge[ed] (r2)
    (l2) edge[ed] (r1)
    (l3) edge[ed] (r4)
    (l4) edge[ed] (r3)
    (l5) edge[ed] (r6)
    (l6) edge[ed] (r5)
    ;
\end{tikzpicture}
 \hspace{2cm}
 \begin{tikzpicture}[node distance=0.5cm]
    \tikzstyle{no}=[draw,circle, inner sep=1pt, fill]
    \tikzstyle{ed}=[draw,color=black, line width=1pt]

    \draw
    node[no, label=left:$x_{\star\star1}$] (l1) {}
    node[no, below of=l1, label=left:$x_{\star\star2}$] (l2) {}
    node[no, below of=l2, label=left:$x_{\star10}$] (l3) {}
    node[no, below of=l3, label=left:$x_{\star20}$] (l4) {}
    node[no, below of=l4, label=left:$x_{100}$] (l5) {}
    node[no, below of=l5, label=left:$x_{200}$] (l6) {}
    node[no, right=1cm of l1, label=left:$t$] (l0) {}
    ;
    \draw
    (l1) +(2cm,0cm) node[no, label=right:$y_{\star\star1}$] (r1) {}
    node[no, below of=r1, label=right:$y_{\star\star2}$] (r2) {}
    node[no, below of=r2, label=right:$y_{\star10}$] (r3) {}
    node[no, below of=r3, label=right:$y_{\star20}$] (r4) {}
    node[no, below of=r4, label=right:$y_{100}$] (r5) {}
    node[no, below of=r5, label=right:$y_{200}$] (r6) {}
    ;

    \draw
    (l0) edge[ed] (r1)
    (l0) edge[ed] (r2)
    (l1) edge[ed] (r3)
    (l2) edge[ed] (r4)
    (l3) edge[ed] (r5)
    (l4) edge[ed] (r6)
    ;
\end{tikzpicture}
 \caption{Edges in the conformal graph from $\ZZ_{27}$
 corresponding to the equations $x=2y$ and $x=3y$, respectively.
 The notation for equivalence classes is inspired by base-$3$
 encoding of the numbers in $\{0,\dots,26\}$, namely
  ${\star}{\star}1 = \{001_3, 011_3, 021_3, 101_3, \dots, 221_3\}$,
  ${\star}{\star}2 = \{001_3, 011_3, 021_3, 101_3, \dots, 222_3\}$,
  ${\star}10 = \{010_3, 110_3, 210_3\}$,
  ${\star}20 = \{020_3, 120_3, 220_3\}$,
  $100 = \{100_3\}$,
  $200 = \{200_3\}$.
 }
 \label{fig:z27-matchings}
\end{figure}

In what follows, let $(S,k)$ be a \homoI{} instance of
$\minlin{2}{\RZ}$ and let $\equiv$ be the matching and absorbing
equivalence relation on $\RZ$ defined in
\Cref{ssec:algorithm-summary}.
Without loss of generality, we assume that every equation of $S$ is
consistent (otherwise we can remove the equation and decrease $k$ by
$1$)
and non-trivial (otherwise we can remove the equation).
We first use the matching property of $\equiv$ to define the mapping 
$\EQM{e}{i} : \EQ_{p^n}^{\neq 0} \rightarrow \EQ_{p^n}$
between equivalence classes for any equation $e=(ax=y)$ with $a
\in \RZ\setminus\{0\}$, as follows.
For every $C \in \EQ_{p^n}$, we set:
\begin{itemize}
\item $\EQM{e}{i}(C)=0$ if $ar=0$ for every $r \in C$ and 
\item $\EQM{e}{i}(C)=D$ otherwise, where $D$
  is the unique equivalence class such that $e$ maps $C$ to $D$, which
  is uniquely defined because $\equiv$ has the matching property.
\end{itemize}
Moreover, for an equation $e=(x=a)$ with $a \in \RZ$,
we let $\EQM{e}{i}$ be the unique equivalence class $\EQ_{p^n}(a)$ consistent with $e$.

\begin{figure}
 \centering
  \begin{tikzpicture}[node distance=0.5cm]
    \tikzstyle{no}=[draw,circle, inner sep=1pt, fill]
    \tikzstyle{edno}=[inner sep=3pt]
    \tikzstyle{ed}=[draw,color=black, line width=1pt]

    \draw[xscale=2,yscale=1] 
    node[no, label=left:$s$] (s) {}
    (s) +(1cm,-1cm) node[no, label=below:$b_O$] (b1) {}
    (b1) +(1cm,0cm) node[no, label=below:$d_E$] (d2) {}
    (d2) +(1cm,0cm) node[no, label=below:$r_E$] (r2) {}
    (r2) +(1cm,0cm) node[no, label=below:$u_E$] (u2) {}
    (u2) +(1cm,0cm) node[no, label=below:$d_O$] (d1) {}
    (s) +(1cm,1cm) node[no, label=above:$a_O$] (a1) {}
    (a1) +(1cm,0cm) node[no, label=above:$c_E$] (c2) {}
    (c2) +(1cm,0cm) node[no, label=above:$u_O$] (u1) {}
    (u1) +(1cm,0cm) node[no, label=above:$r_O$] (r1) {}
    (r1) +(1cm,0cm) node[no, label=above:$c_O$] (c1) {}

    (c1) +(1cm,-1cm) node[no, label=right:$t$] (t) {}
    ;

    \draw
    (s) -- node[edno, midway, rotate=-26, anchor=north] (sb1) {$b=1$} (b1)
    (b1) -- node[edno, midway, anchor=north] (db) {$2b=d$} (d2)
    (d2) -- node[edno, midway, anchor=north] (rd) {$d=r$} (r2)
    (r2) -- node[edno, midway, anchor=north] (ur) {$r=u$} (u2)
    ;

    \draw
    (s) -- node[edno, midway, rotate=26, anchor=south] (sa1) {$a=1$} (a1)
    (a1) -- node[edno, midway, anchor=south] (ac) {$2a=c$} (c2)
    (c2) -- node[edno, midway, anchor=south] (cu) {$c=2u$} (u1)
    (u1) -- node[edno, midway, anchor=south] (ur) {$u=r$} (r1)
    ;

    \draw
    (t) -- node[edno, midway, rotate=-26, anchor=north] (c1t1)
    {$c=2a$} (c1)
    (t) -- node[edno, midway, rotate=-26, anchor=south] (c1t2) {$c=2u$} (c1)
    ;
    \draw
    (t) -- node[edno, midway, rotate=26, anchor=north] (d1t1)
    {$d=2b$} (d1)
    ;
    \draw (r1) -- node[edno, midway, rotate=-45, anchor=south] (r1d1) {$r=d$} (d1);

  \end{tikzpicture}
 \caption{Let $S$ be the instance of $\lin{2}{\ZZ_4}$ with variables
   $a$, $b$, $c$, $d$, $u$, $r$ and equations $a=1$, $b=1$, $2a=c$,
   $c=2u$, $u=r$, $2b=d$, and $d=r$. The figure illustrate the class
   assignment graph $G=G(S)$. Note that $\equiv$
   has only two non-zero equivalence classes, namely, $E=\{2\}$ and
   $O=\{1,3\}$. Every edge of $G$ is annotated with the equation that
   implies it. A minimum conformal $st$-cut is given by the two edges
   that correspond to the equation $u=r$ and corresponds to the class
   assignment $a=O$, $b=O$, $c=E$, $d=E$, $u=O$, and
   $r=E$. Note that $G$ has only one minimal conformal $st$-cut
   closest to $s$, namely $\{a_Oc_E, b_Od_E\}$. This $st$-cut
   corresponds to the class assignment $a=O$, $b=O$, and
   $c=d=u=r=0$. Therefore, the optimum solution for $S$ only removes
   the equation $u=r$, however, any solution that corresponds to a
   minimum conformal $st$-cut closest to $s$ has to remove the
   equations $2a=c$ and $2b=d$.}
 \label{fig:classassignment}
\end{figure}
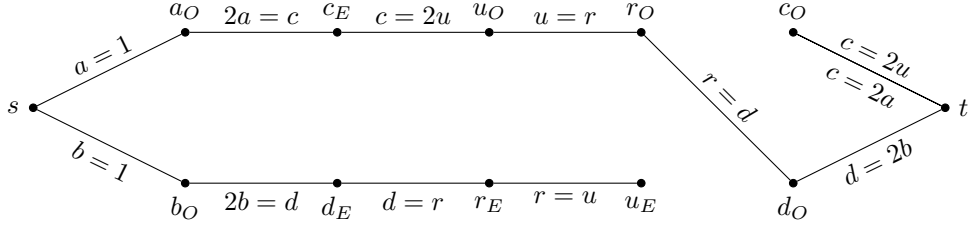

We are now ready to define the \emph{class assignment} graph
$G=G(S)$; see Figure~\ref{fig:classassignment} for an illustration
of such a graph. The graph $G$ has two distinguished vertices $s$ and $t$
together with vertices $x_{C}$ for every $x \in V(S)$ and every non-zero
class $C \in \EQ_{p^n}^{\neq 0}$. Moreover, $G$ contains the following edges for each equation.
\begin{itemize}
\item For an equation $e=(ax=y)$ we do the following. For every $C
  \in \EQ_{p^n}^{\neq 0}$, we add the edge
  $x_Cy_{\EQM{e}{i}(C)}$ if $\EQM{e}{i}(C) \neq 0$.
  For
  every $D \in \EQ_{p^n}^{\neq 0}$ such that $\EQM{e}{i}^{-1}(D)$ is
  undefined, we add the edge $y_Dt$.
  If the equation is crisp, then the added edges are also crisp.
\item For a crisp equation of the form $e=(x = 0)$, we add the crisp
  edge $x_{C}t$ for every class $C \in \EQ_{p^n}^{\neq 0}$.
\item For a crisp equation of the form $e=(x = b)$, where $b\neq 0$ ,
  we add the crisp edge $sx_{\EQM{e}{i}}$.
\end{itemize}
See Figure~\ref{fig:z27-matchings} for an illustration.

Intuitively, every node of $G$ corresponds to a Boolean variable and
every edge $e$ of $G$ corresponds to an ``if and only if'' between the two
Boolean variables connected by $e$. Moreover, every
assignment $\varphi$ of the variables of $S$ naturally corresponds to a
Boolean assignment, denoted by $\BA{\varphi}$
of the vertices in $G$ by setting $s=1$, $t=0$, and:
\begin{itemize}
\item if $\varphi(x)$ belongs to the non-zero class $C$, then we set
  $x_{C}=1$ and $x_{C'}=0$ for every non-zero class $C'$ not
  equal to $C$,
\item if $\varphi(x)=0$, we set $x_{C}=0$ for every non-zero class $C$.
\end{itemize}
We say that an
edge $e$ of $G$ is satisfied by $\varphi$ if $\BA{\varphi}$ satisfies the
``if and only if'' Boolean constraint represented by that edge.
\begin{observation}\label{obs:ass-ba-ass}
  Let $S$ be a \homoI{} instance of $\lin{2}{\RZ}$, let $\varphi$
  be an assignment of $S$ and let $e$ be an equation in
  $S$. If $\varphi$ satisfies $e$, then $\BA{\varphi}$ satisfies all
  edges corresponding to $e$ in $G(S)$.
\end{observation}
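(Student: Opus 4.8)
The plan is to fix an equation $e\in S$ and an assignment $\varphi$ of $S$ that satisfies $e$, and then verify, case by case over the three admissible forms of a simple equation, that every edge placed in $G(S)$ on account of $e$ encodes a biconditional that holds under $\BA{\varphi}$. Throughout I will use that an edge $\{u,w\}$ of $G$ is satisfied by $\varphi$ precisely when $\BA{\varphi}(u)=\BA{\varphi}(w)$, that $\BA{\varphi}(s)=1$ and $\BA{\varphi}(t)=0$, and that for a variable $x$ of $S$ and a nonzero class $C$ we have $\BA{\varphi}(x_C)=1$ if and only if $\varphi(x)\in C$.

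The two unary cases are immediate. If $e=(x=0)$ then $\varphi(x)=0$, so $\BA{\varphi}(x_C)=0$ for every nonzero class $C$, and each edge $x_Ct$ satisfies $\BA{\varphi}(x_C)=0=\BA{\varphi}(t)$. If $e=(x=b)$ with $b\neq 0$ then $\varphi(x)=b$ lies in the class $\EQM{e}{i}=\EQ(b)$, so $\BA{\varphi}(x_{\EQM{e}{i}})=1=\BA{\varphi}(s)$, and the single edge $sx_{\EQM{e}{i}}$ is satisfied.

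The work is in the binary case $e=(ax=y)$ with $a\neq 0$, where $\varphi(y)=a\varphi(x)$ and there are two kinds of edges. For an edge $x_Cy_D$ with $D=\EQM{e}{i}(C)\neq 0$, I want $\varphi(x)\in C\iff\varphi(y)\in D$. The forward implication is immediate from the definition of $\EQM{e}{i}$, since multiplication by $a$ sends every element of $C$ into $D$, so $\varphi(x)\in C$ gives $\varphi(y)=a\varphi(x)\in D$. For the converse I assume $\varphi(y)=a\varphi(x)\in D$; then $a\varphi(x)\neq 0$, hence $\varphi(x)\neq 0$, so $\varphi(x)$ lies in a unique nonzero class $C'$, and I claim $C'=C$. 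If not, the second bullet of the matching property of $\equiv$, applied to $\varphi(x)\in C'$ and any $r\in C$, yields $a\varphi(x)\not\equiv ar$ (the alternative $a\varphi(x)=ar=0$ is excluded since $a\varphi(x)\neq 0$); but $ar\in D$, so $a\varphi(x)\notin D$, a contradiction. For an edge $y_Dt$, which is added exactly when no nonzero class is mapped to $D$ by $\EQM{e}{i}$, I want $\varphi(y)\notin D$. If instead $\varphi(y)=a\varphi(x)\in D$, then $a\varphi(x)\neq 0$, so $\varphi(x)$ lies in some nonzero class $C'$ with $\EQM{e}{i}(C')\neq 0$ and $a\varphi(x)\in\EQM{e}{i}(C')$; therefore $\EQM{e}{i}(C')=D$, contradicting that $D$ has no preimage.

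The only genuine subtlety — the \emph{main obstacle}, modest as it is — lies in the two converse arguments above: one must rule out the degenerate possibility that scaling $\varphi(x)$ by $a$ collapses its class onto $\{0\}$, and this is exactly where the matching property of $\equiv$ (distinct nonzero classes remain separated under multiplication unless both are sent to zero) is indispensable; it is also the structural reason the whole construction uses this particular equivalence relation rather than a coarser one. Everything else is a routine unfolding of the definitions of $\EQM{e}{i}$ and $\BA{\varphi}$.
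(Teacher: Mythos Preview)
Your proof is correct. The paper states this as an observation without proof, and your case analysis over the three forms of simple equations, together with the use of the matching property of $\equiv$ to handle the converse direction for binary edges, is exactly the expected verification.
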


\subsubsection{Cuts in the Class Assignment Graph}
\label{sec:class-cuts}

In this section, we introduce {\em conformal cuts} and show how they
relate to class
assignments and solutions to
$\minlin{2}{\RZ}$ instances. Let $S$ be a \homoI{} instance of
$\lin{2}{\RZ}$ and $G=G(S)$.
An $st$-cut $Y$ in $G$ is \emph{conformal} if
for every variable $x \in V(S)$ at most one vertex $x_{C}$ for some $C
\in \EQ_{p^n}^{\neq 0}$ is
connected to $s$ in $G-Y$. Please refer to
Figure~\ref{fig:classassignment} for an illustration of conformal cuts
in the class assignment graph.
If $Y$ is a conformal $st$-cut in $G$, then we say that a variable $x$
is \emph{decided} with respect to $Y$ if (exactly) one vertex $x_{C}$ is reachable from $s$ in
$G-Y$; and otherwise we say that $x$ is \emph{undecided} with respect to $Y$. Moreover, we denote by $\clasn{Y}$ the assignment of variables
of $S$ to classes in $\EQ_{p^n}$ implied by $Y$, i.e.\ $\clasn{Y}(x)=0$ if
$x$ is undecided and otherwise $\clasn{Y}(x)=C$, where $C$
is the unique non-zero class in $\EQ_{p^n}$ such that $x_C$ is
reachable from $s$ in $G-Y$.
We say that an assignment $\varphi$ of $S$ \emph{agrees with} $Y$
if $\varphi(x)$ is in the class $\clasn{Y}(x)$ for every variable $x$
of $S$.
Note that if some assignment agrees with $Y$, then $Y$ is
conformal. The following auxiliary lemma characterizes which edges of
$G$ are satisfied by an assignment $\varphi$ of $S$ after removing a
set $Y$ of edges from $G$.

\begin{lemma}\label{lem:ze-ba-ass}
  Let $Y$ be a set of edges of $G$ and let $\varphi$ be an assignment
  of $S$. Then, $\BA{\varphi}$ satisfies all edges 
  reachable from $s$ in $G-Y$ if and
  only if $\BA{\varphi}$ sets all Boolean variables reachable from $s$ in
  $G-Y$ to $1$. Similarly, $\BA{\varphi}$ satisfies all edges
  reachable from $t$ in $G-Y$ if and
  only if $\BA{\varphi}$ sets all Boolean variables reachable from $t$ in
  $G-Y$ to $0$.
\end{lemma}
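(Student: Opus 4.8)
The plan is to exploit the single defining feature of the class assignment graph $G$: every edge $\{u,v\}$ encodes the ``if and only if'' between the Boolean variables $u$ and $v$, so $\BA{\varphi}$ satisfies that edge precisely when $\BA{\varphi}(u) = \BA{\varphi}(v)$. Since $G$ is undirected, an edge of $G-Y$ lies in the connected component of $s$ if and only if one (equivalently, both) of its endpoints does; I will read ``edge reachable from $s$'' in this sense, and prove both implications by propagating Boolean values along paths inside the $s$-component of $G-Y$.

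For the easy direction, assume $\BA{\varphi}$ sets every vertex reachable from $s$ in $G-Y$ to $1$. If $\{u,v\}$ is an edge of $G-Y$ reachable from $s$, then both $u$ and $v$ lie in the $s$-component, so $\BA{\varphi}(u) = \BA{\varphi}(v) = 1$; in particular $\BA{\varphi}(u) = \BA{\varphi}(v)$, which is exactly what the ``if and only if'' constraint of that edge demands. Hence every edge reachable from $s$ is satisfied by $\BA{\varphi}$.

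For the converse, assume every edge of $G-Y$ reachable from $s$ is satisfied by $\BA{\varphi}$. By the definition of $\BA{\varphi}$ we have $\BA{\varphi}(s) = 1$. Given any vertex $w$ reachable from $s$ in $G-Y$, fix a path $s = v_0, v_1, \dots, v_\ell = w$ in $G-Y$; each edge $\{v_{i-1}, v_i\}$ lies in the $s$-component and is therefore satisfied, so $\BA{\varphi}(v_{i-1}) = \BA{\varphi}(v_i)$. A trivial induction on $i$, with base case $\BA{\varphi}(v_0) = 1$, yields $\BA{\varphi}(v_i) = 1$ for every $i$, and in particular $\BA{\varphi}(w) = 1$. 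Thus every vertex reachable from $s$ is set to $1$.

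The statement for $t$ is obtained by the symmetric argument: use $\BA{\varphi}(t) = 0$ as the base case and propagate the value $0$ along paths inside the $t$-component of $G-Y$. I do not expect any genuine obstacle here — the lemma is essentially the ``iff'' semantics of the edges combined with connectivity. The only points that need care are fixing the reading of ``edge reachable from $s$'' as ``edge whose endpoints lie in the $s$-component of $G-Y$'', and invoking the boundary conditions $\BA{\varphi}(s) = 1$ and $\BA{\varphi}(t) = 0$ from the definition of $\BA{\varphi}$.
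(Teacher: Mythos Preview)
Your proposal is correct and follows exactly the same approach as the paper: both use that $\BA{\varphi}(s)=1$, $\BA{\varphi}(t)=0$, and that every edge encodes an ``if and only if'' constraint, so values propagate along paths in the component. The paper's proof is a one-sentence sketch of precisely this argument, and you have simply written out the details.
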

\begin{proof}
  This follows because $\BA{\varphi}(s)=1$ and $\BA{\varphi}(t)=0$ for any $\varphi$
  and every edge of $G$ corresponds to an ``if and only if''
  between the variables corresponding to its two endpoints.
\end{proof}

For a set $Z$ of equations of $S$, we let $\ed(Z)$ denote the set of
all edges of $G$ corresponding to an equation of $Z$.
Conversely, for a set $Y$
of edges of $G$, we let $\con(Y)$ denote
all equations of $S$ having a corresponding edge in $Y$. Moreover, if
$Y$ is an $st$-cut in $G$, we let $\sep(Y)$ denote the unique
minimal $st$-cut contained in $Y$ that is \emph{closest} to $s$ in $G$.
Finally, for an optimal
solution $Z$ of $S$, we let $\comp{Z}$
be the set of equations
$Z\setminus \con(\sep(\ed(Z)))$, i.e.\ all equations in $Z$ that do
not have an edge in $\sep(\ed(Z))$.

We are now ready to show the main result of this subsection, which,
informally, states that every set $Z$ of equations for which $S-Z$ is
satisfiable gives rise to a
conformal $st$-cut of size at most $2 \cdot |Z\setminus \comp{Z}|$ that
agrees with some satisfying assignment of $S-Z$.
Thus, we establish
the required connection between solutions of $\minlin{2}{\RZ}$
instances and conformal
$st$-cuts in the class assignment graph.

\begin{lemma} \label{lem:deleted-edges}
  Let $Z$ be a set of equations such that $S-Z$ is satisfiable and let $Y=\ed(Z)$.
  Then,
  $Y'=\sep(Y)$ satisfies:
  \begin{enumerate}
  \item $Y'$ is a conformal $st$-cut.
  \item $|Y'|\leq 2|\con(Y')| = 2|Z \setminus \comp{Z}|$.
  \item There is a satisfying assignment for $S-Z$ that agrees with $Y'$.
  \end{enumerate}
\end{lemma}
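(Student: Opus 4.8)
The plan is to establish the three claims about $Y' = \sep(Y)$ in order, using $Z$'s satisfying assignment $\varphi$ of $S - Z$ as the central object. For claim (1), I would argue that $Y' $ is conformal by contradiction: if two vertices $x_{C_1}$ and $x_{C_2}$ with $C_1 \neq C_2$ were both reachable from $s$ in $G - Y'$, I would show this contradicts the fact that $\sep(Y)$ is a \emph{minimal} $st$-cut closest to $s$ contained in $Y = \ed(Z)$. The key point is that a satisfying assignment $\varphi$ of $S - Z$ induces (via $\BA{\varphi}$ and Observation~\ref{obs:ass-ba-ass}) a consistent labelling of all edges \emph{not} in $Y$, so all edges of $G - Y$ are satisfied by $\BA{\varphi}$; by Lemma~\ref{lem:ze-ba-ass}, every vertex reachable from $s$ in $G - Y$ has Boolean value $1$ under $\BA{\varphi}$, hence at most one $x_C$ per variable $x$ can be reachable from $s$ in $G - Y$ (since $\BA{\varphi}$ sets at most one of them to $1$). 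Since the vertices reachable from $s$ in $G - \sep(Y)$ form a subset of those reachable in $G - Y$ (by the ``closest to $s$'' property, the reachable set only shrinks), conformality is inherited.

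For claim (3), I would take the same assignment $\varphi$ satisfying $S - Z$ and verify it agrees with $Y'$. By definition, $\varphi$ agrees with $Y'$ iff $\varphi(x)$ lies in the class $\clasn{Y'}(x)$ for every $x$. If $x$ is decided with respect to $Y'$, then the unique $x_C$ reachable from $s$ in $G - Y'$ has $\BA{\varphi}(x_C) = 1$ by the argument above applied to $G - \sep(Y)$ (which is still a subgraph whose edges reachable from $s$ are satisfied by $\BA{\varphi}$), so by the definition of $\BA{\varphi}$ we get $\varphi(x) \in C = \clasn{Y'}(x)$. If $x$ is undecided, then $\clasn{Y'}(x) = 0$, and I need $\varphi(x) = 0$; here I would trace the crisp edges: the crisp unary equations $x = b$ put the crisp edge $s x_{\EQM{e}{i}}$ into $G$, and such edges cannot lie in $\ed(Z)$ (they are crisp, hence not in $Z$), so $x_{\EQM{e}{i}}$ remains reachable from $s$ — this is the subtle case and I should check that an undecided $x$ either has no incident crisp ``$=b$'' constraint forcing a nonzero class, or else $\BA{\varphi}$ consistency forces $\varphi(x) = 0$ through the homogeneous binary equations connecting $x$ back to $s$.

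For claim (2), the size bound, I would argue that $Y' = \sep(Y)$ is a minimal $st$-cut, and for each equation $e \in \con(Y')$, at most two edges of $\ed(e)$ can appear in $Y'$. This is the heart of the matter: an equation $e = (ax = y)$ contributes the edges of a matching $\pi_e$ between nonzero classes (plus possibly $y_D t$ edges), and a conformal, minimal, closest-to-$s$ cut can separate $s$ from $t$ using at most two of these matching edges — intuitively, one ``outgoing'' edge at the $x$-side class that $\clasn{Y'}$ picks and one at the $y$-side. I would make this precise by observing that in a minimal cut closest to $s$, every cut edge has its $s$-side endpoint reachable from $s$, and conformality limits each variable to one reachable nonzero class, so for equation $e$ the only cut edges of $\ed(e)$ with reachable $s$-side endpoint are those incident to $x_{\clasn{Y'}(x)}$ or incident to a $y$-class vertex reachable via some other route — bounding this by $2$ requires care but follows the same pattern as the field case sketched in the introduction. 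Then $|Y'| = \sum_{e \in \con(Y')} |\ed(e) \cap Y'| \le 2|\con(Y')|$, and the final equality $|\con(Y')| = |Z \setminus \comp Z|$ is immediate from the definition $\comp Z = Z \setminus \con(\sep(\ed(Z)))$.

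The main obstacle I expect is claim (2): carefully showing that a minimal conformal $st$-cut closest to $s$ uses at most two edges per equation-bundle, especially handling the $y_D t$ edges and the interaction between the ``closest to $s$'' minimality and conformality. The argument sketched for fields in the introduction (``a conformal cut may only contain at most two edges from any bundle'') is the template, but over $\ZZ_{p^n}$ with genuine non-matching behaviour at the value level one must rely precisely on the fact that $\EQM{e}{i}$ \emph{is} a matching on the class level — so the correct statement is phrased entirely in terms of the class graph, where bundles genuinely are matchings, and the value-level pathologies are what the relaxation is designed to postpone. I would state and prove a small standalone lemma: in any minimal $st$-cut $Y'$ closest to $s$, each edge of $Y'$ has its $s$-side endpoint reachable from $s$ in $G - Y'$; combined with conformality this pins each equation's contribution to at most two.
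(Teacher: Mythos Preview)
Your treatment of claims (1) and (2) follows the paper's line, and claim (2) is essentially correct: minimality of $Y'$ ensures every cut edge has one endpoint reachable from $s$, and conformality bounds the reachable endpoints to at most one per variable, so each equation contributes at most two edges to $Y'$. One minor slip in claim (1): you assert that the vertices reachable from $s$ in $G-\sep(Y)$ form a \emph{subset} of those reachable in $G-Y$, because ``the reachable set only shrinks''. This is backwards: since $\sep(Y)\subseteq Y$, removing $\sep(Y)$ leaves \emph{more} edges, so a priori the reachable set can only grow. The correct fact (and what the paper uses) is that the two reachable sets are \emph{equal}, precisely by the closest-to-$s$ property. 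Your conclusion survives, but the justification needs this correction.

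The genuine gap is in claim (3). You propose to show that the \emph{original} satisfying assignment $\varphi$ of $S-Z$ agrees with $Y'$, and in particular that $\varphi(x)=0$ whenever $x$ is undecided with respect to $Y'$. This is false. Over $\ZZ_4$, take variables $u,v$, the crisp equation $u=1$, and the soft equation $v=2u$; set $Z=\{v=2u\}$. Then $\varphi(u)=1$, $\varphi(v)=1$ satisfies $S-Z$. In the class graph the only edge out of $\{s,u_O\}$ is $u_Ov_E$, so $Y'=\sep(\ed(Z))=\{u_Ov_E\}$ and $v$ is undecided, giving $\clasn{Y'}(v)=0$; yet $\varphi(v)=1$. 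Nothing in $S-Z$ forces $\varphi(v)=0$, so your ``trace the crisp edges / $\BA\varphi$-consistency forces $0$'' argument cannot close this case. The paper instead \emph{modifies} $\varphi$ to an assignment $\varphi'$ that sets every undecided variable to $0$, and then carries out a case analysis (unary vs.\ binary, and for binary a split on which of the two variables is undecided) to verify that $\varphi'$ still satisfies every equation of $S-Z$, using homogeneity and the matching structure of $\pi_e$. That modification and verification are the actual content of claim (3).
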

\begin{proof}
  Let $\varphi$ be a satisfying assignment of $S-Z$.
  \Cref{obs:ass-ba-ass} implies that $\BA{\varphi}$ satisfies
  all edges of $G-Y$. Therefore, it follows from
  Lemma~\ref{lem:ze-ba-ass} that $\BA{\varphi}$ sets all vertices
  reachable from $s$ in $G-Y$ to $1$ and all vertices reachable from $t$ in $G-Y$ to $0$.
  Thus, $Y$ is an $st$-cut, because otherwise $t$ would
  have to be set to $1$ by $\BA{\varphi}$ since it would be reachable from
  $s$ in $G-Y$. Therefore, $Y'=\sep(Y)$ exists. Because $Y'$ is
  closest to $s$, it holds that a vertex is reachable from $s$ in
  $G-Y$ if and only if it is reachable from $s$ in $G-Y'$.
  Therefore, if at least two vertices $x_{C}$ and $x_{C'}$ for some
  distinct non-zero classes $C$ and $C'$ are
  reachable from $s$ in $G-Y'$ for some variable $x$, then all of them must be set to $1$ by
  $\BA{\varphi}$, which is not possible due to the definition of
  $\BA{\varphi}$. We conclude that $Y'$ is conformal.

  Towards showing that $|Y'|\leq 2|\con(Y')|$, it suffices to show that $|Y'\cap
  \ed(e)|\leq 2$ for every equation $e \in Z$. Note that because $Y'$
  is a minimal $st$-cut, it holds that one of the endpoints of every
  $y \in Y'$ is reachable from $s$ in $G- Y'$. Therefore, because $Y'$
  is conformal, $Y'$ can contain at most two edges in $\ed(e)=\SB
  x_Cy_{\EQM{e}{i}(C)} \SM C \in \EQ_{p^n}^{\neq 0}\land \EQM{e}{i}(C)\neq
  0\SE\cup \SB y_Dt\SM D \in \EQ_{p^n}^{\neq 0} \land
  \EQM{e}{i}^{-1}(D)\textup{ is undefined}\SE$ for every binary equation $e$ of the
  form $ax=y$. Similarly, $Y'$ can contain at most one edge in $\ed(e)=\SB
  x_Ct \SM C \in \EQ_{p^n}^{\neq 0}\SE$ for every unary equation $e$ of the
  form $x=0$. Finally, $|\ed(e)|=|\{sx_{\EQM{e}{i}}\}|=1$ for every
  unary equation $e$ of the form $x=b$. 
  Therefore, $|Y'| \leq 2|\con(Y')|$, and $Y'=Z \setminus \comp{Z}$ by definition.

  Let $D$ be the set
  of all variables of $S$ such that no vertex $x_{C}$ is
  reachable from $s$ in $G-Y'$. Let $\varphi'$ be the assignment for $S$ such that $\varphi'(x)=0$ if
  $x \in D$ and $\varphi'(x)=\varphi(x)$ otherwise. Clearly, $\varphi'$ agrees
  with $Y'$, because $\varphi$ agrees with all variables not in $D$ and
  all other variables are correctly set to $0$ by $\varphi'$. It
  therefore only remains to show that $\varphi'$
  still satisfies $S-Z$.

  Consider a unary equation $e$ of $S-Z$ on a variable $x$.
  If $x
  \notin D$, then $\varphi'(x)=\varphi(x)$ and $\varphi'$ satisfies
  $e$. So suppose that $x \in D$. If $e$ is of the form $x=0$, then
  $\varphi'$ satisfies $e$. Otherwise, $e$ is of the form $x=b$
  for some $b\neq 0$ and $G-Y'$ contains the edge
  $sx_{\EQM{e}{i}}$. Therefore, $x_{\EQM{e}{i}}$ is reachable from $s$ in
  $G-Y'$ contradicting our assumption that $x \in D$.

  Consider instead a binary equation $e$ of $S-Z$ on
  two variables $x$ and $y$. If $x,y \in D$, then $e$ is clearly
  satisfied by $\varphi'$. Similarly, if $x,y \notin D$, then $e$ is
  also satisfied by $\varphi'$, because $\varphi'(x)=\varphi(x)$ and
  $\varphi'(y)=\varphi(y)$ and $\varphi$ satisfies $e$. So suppose that
  $|\{x,y\}\cap D|=1$ and that $e$ is of the form $ax =y$ for some
  $a\in \RZ$.
  Note that $\ed(e) \subseteq E(G)\setminus
  \ed(Z)$. Suppose first that $x \notin D$ and
  $y \in D$. Because $x \notin D$, there is a vertex
  $x_{C}$ reachable from $s$ in $G-Y'$. If $\EQM{e}{i}(C) \neq 0$,
  then because $x_{C}y_{\EQM{e}{i}(C)} \notin
  Y'$, the vertex $y_{\EQM{e}{i}(C)}$ is reachable from $s$ in $G-Y'$ contradicting our assumption
  that $y \in D$. Otherwise, $\EQM{e}{i}(C)=0$, which together with our assumption that
  $\varphi'(y)=0$ shows that $e$ is satisfied by $\varphi'$.
  
  Suppose now that $x \in D$ and
  $y \notin D$. Because $y \notin D$, there is a vertex
  $y_{D}$ reachable from $s$ in $G-Y'$. Because $e$ is satisfied by $\varphi$, it holds that
  $\EQM{e}{i}^{-1}(D)$ is defined. But then, because $x_{\EQM{e}{i}^{-1}(D)}y_{D} \notin
  Y'$, the vertex $x_{\EQM{e}{i}^{-1}(D)}$ is reachable from $s$ in $G-Y'$ and this contradicts our assumption
  that $x \in D$. 
\end{proof}

\subsubsection{Shadow Removal}
\label{sec:goodbye-shadows}

We continue by showing how the shadow removal technique
(introduced in~\cite{marx2014fixed} and improved
in~\cite{chitnis2015directed}) can be used for computing conformal cuts that correspond to solutions of
a $\minlin{2}{\RZ}$ instance.
We follow~\cite{chitnis2015directed} and begin by importing some definitions,
which we translate from directed graphs to undirected graphs to fit
our setting; to get back to directed graphs one simply has to think of
an undirected graph as the directed graph obtained after replacing each
undirected edge with two directed arcs in both directions.
Let $G$ be an undirected graph.
Let $\cF$ be a set of connected subgraphs of $G$.
A set $T \subseteq V(G)$ is an \emph{$\cF$-transversal} 
if $T$ intersects every subgraph in $\cF$.
Conversely, if $T$ is an $\cF$-transversal,
we say that $\cF$ is \emph{$T$-connected}.

\begin{theorem}[{\cite[Theorem 3.5]{chitnis2015directed}}]
  \label{thm:shadow-cover}
  Let $G$ be an undirected graph, $T \subseteq V(G)$ and $k \in \naturals$.
  There is a randomized algorithm that takes $(G, T, k)$
  as input
  and returns in $O^*(4^k)$ time a set $W \subseteq V(G) \setminus T$
  such that the following holds with probability $2^{-O(k^2)}$.
  For every $T$-connected family of connected subgraphs $\cF$ in $G$,
  if there is an $\cF$-transversal of size at most $k$ in $V(G) \setminus T$,
  then there is an $\cF$-transversal $Y \subseteq V(G) \setminus (W \cup T)$ 
  of size at most $k$ such that every vertex $v \notin W \cup Y$
  is connected to $T$ in $G - Y$.
\end{theorem}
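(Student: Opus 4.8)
The statement is the shadow-covering lemma of Marx and Razgon, in the refined form of Chitnis, Hajiaghayi and Marx, so the plan is to recall how its proof runs rather than to invent one from scratch. Throughout, for $Y \subseteq V(G) \setminus T$ write $R(Y)$ for the set of vertices connected to $T$ in $G-Y$, and call $\mathrm{shadow}(Y) = V(G) \setminus (R(Y) \cup Y)$ the \emph{shadow} of $Y$. What we must produce is a single random set $W$, sampled \emph{without} seeing any family, such that with probability $2^{-O(k^2)}$ the required conclusion holds for \emph{every} admissible $\cF$: some $\cF$-transversal $Y$ of size at most $k$ avoids $W \cup T$ and has $\mathrm{shadow}(Y) \subseteq W$. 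The proof splits into a deterministic structural part, which pins down the \emph{shape} of a good transversal for a fixed $\cF$, and a probabilistic part, which builds $W$ so as to be compatible with every such shape simultaneously.

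For the structural part I would prove that whenever a $T$-connected family $\cF$ of connected subgraphs has an $\cF$-transversal of size at most $k$ in $V(G)\setminus T$, it has one, $Y$, that is moreover \emph{$T$-tight}: every connected component $C$ of $G-Y$ that misses $T$ satisfies $C = R(C,\partial C)$, where $\partial C \subseteq Y$ is an \emph{important} $(C,T)$-separator of size at most $k$. The standard way to obtain this is to take a size-$\le k$ transversal minimising an auxiliary potential (say $\sum_{v \in \mathrm{shadow}(Y)} w(v)$ for a generic weighting, or a lexicographic order on the shadow) and then, whenever some $\partial C$ fails to be an important separator, replace it by the important $(C,T)$-separator dominating it. This substitution does not increase $|Y|$ (an important separator is no larger than any separator it dominates), keeps $Y$ a transversal — here one uses that each $F \in \cF$ is connected and meets $T$, hence meets the $T$-side of the cut, so pushing the cut towards $C$ cannot let $F$ escape $Y$ — and strictly improves the potential, a contradiction. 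The upshot is that $\mathrm{shadow}(Y)$ is a disjoint union of source sides $R(C_i,\partial C_i)$ of important separators $\partial C_i$, whose \emph{union} is contained in the size-$\le k$ set $Y$.

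For the probabilistic part I would use the classical fact that from any fixed source there are at most $4^k$ important separators of size at most $k$, enumerable in $O^*(4^k)$ time. The set $W$ is then built as the $T$-shadow of a \emph{random} small ``guessed'' separating object: informally, repeat $O^*(4^k)$ times a polynomial-time routine that (with constant bias) picks a random vertex together with a uniformly random important separator of size $\le k$ incident to it and unions in its source side, and take the union over all iterations; equivalently, sample a random vertex set by independent per-vertex coin flips with probability tuned so that a fixed set of $\le k$ vertices avoids it with probability $2^{-O(k)}$ while a fixed vertex lands in it with probability $\Omega(1/k)$, and let $W$ be the $T$-shadow of (a suitable closure of) that set. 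For a fixed $\cF$ and its $T$-tight transversal $Y$, the good event is that the at most $k$ boundary vertices of $Y$ are all missed while each source side $R(C_i,\partial C_i)$ is ``seeded'' and hence absorbed into the shadow; since the $C_i$ are vertex-disjoint and the randomness is independent across vertices, these events compose to probability $2^{-O(k^2)}$, and on the good event $Y$ witnesses the conclusion. The bookkeeping that the \emph{same} $W$ works for all admissible $\cF$ at once — the only data about $\cF$ that matters being which size-$\le k$ $T$-tight shadow pattern it forces — is exactly the content of the cited proof.

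The hard part is the structural lemma, and within it the verification that the important-separator substitution preserves the transversal property for \emph{every} member of the unknown family $\cF$, using only $T$-connectivity and connectedness of the members; getting the direction of the exchange right (push cuts towards the shadow, never towards $T$) is the subtle point, and it is precisely what forces the notion of important separator and the $4^k$ bound into the argument. The probabilistic part is then a routine, if delicate, calculation, and the whole algorithm can be derandomised with the usual $(n,O(k))$-perfect hash families / splitters, as the excerpt already remarks.
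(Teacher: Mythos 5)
This statement is not proved in the paper at all: it is imported verbatim, with citation, as Theorem~3.5 of Chitnis et al.~\cite{chitnis2015directed}, and the paper treats it as a black box (only remarking later that it can be derandomized). So there is no in-paper proof to compare against; your proposal can only be judged against the proof in the cited work, which it attempts to reconstruct. Your overall architecture is the right one -- a structural step showing that some size-$\le k$ transversal can be assumed to have its shadow components sitting behind important separators contained in (or dominating parts of) the transversal, followed by a random sampling step whose success probability is $2^{-O(k^2)}$, with the $4^k$ bound on important separators from a fixed source doing the enumeration work -- and you correctly identify the delicate point that the important-separator exchange must preserve the transversal property for every member of the unknown family, using only connectedness and the fact that each member meets $T$.

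However, the probabilistic step as you describe it has a genuine gap. Your ``equivalent'' formulation -- independent per-vertex coin flips with each vertex sampled with probability $\Omega(1/k)$, requiring every shadow component to be ``seeded'' -- is not equivalent to the sampling in \cite{chitnis2015directed} and does not give the claimed bound: the shadow of a size-$\le k$ transversal can have unboundedly many components (a single transversal vertex can border arbitrarily many of them), so the event that every component is seeded has probability exponentially small in a quantity that is not a function of $k$. The actual proof avoids this by sampling \emph{important separators} (or, in the derandomized version, sets built from them), so that each chosen separator covers its entire source side wholesale; the $2^{-O(k^2)}$ arises from needing to select roughly $k$ specific separators, each from a family of size $2^{O(k)}$, compatibly with (i.e.\ without hitting) the hypothetical solution -- not from seeding vertices one by one. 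Your first, ``repeat $O^*(4^k)$ times'' formulation is closer in spirit, but its probability analysis is asserted rather than argued, and the simultaneity over all admissible families $\cF$ (which you explicitly defer to ``the content of the cited proof'') is precisely the part that the union-free, separator-based sampling is designed to deliver. As a reconstruction of the cited argument your sketch is therefore incomplete at its quantitative core, even though the structural half is essentially right.
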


To explain the name, say that a vertex $v \in V(G)$ is \emph{in shadow of $Y$} if $v$ is not
connected to $T$ in $G-Y$. Thus Theorem~\ref{thm:shadow-cover} returns a
\emph{shadow-covering set} that covers the shadow of $Y$ for some $\cF$-transversal $Y$
without covering $Y$ itself, i.e.\ the set $W$ in Theorem~\ref{thm:shadow-cover}. 

Theorem~\ref{thm:shadow-cover} builds on a notion of greedy cuts,
known as \emph{important separators} (see Marx~\cite{Marx2006parameterized}).  
Informally, an important $(S,T)$-separator is an $(S,T)$-cut $X$ that is
as close to $T$ as possible given its size $|X|$. The important separators
in a graph have significant structure, that is frequently useful for
FPT graph separation problems~\cite{Marx2006parameterized,marx2014fixed,cygan2015book}.

In Theorem~\ref{thm:shadow-cover}, let $Y$ be a minimal $\cF$-transversal with $|Y| \leq k$, and
let $Y'$ be the result of pushing $Y$ to be as close to $T$ as possible
while maintaining $|Y'| \leq |Y|$. Then $Y'$ is also an $\cF$-transversal,
and the greedy nature of $Y'$ allows the shadow of $Y'$ to be sampled
using structural properties of the important separators in $G$; see~\cite{chitnis2015directed}.

This notion of greedy cuts is also relevant to $\minlin{2}{\ZZ_{p^n}}$.
Let $S$ be a simple instance with class assignment graph $G$,
$Z \subseteq S$ a solution, $Y'=\sep(\ed(Z))$, and
let $D$ be the set of vertices reachable from $s$ in $G-Y'$.
If we somehow knew the cut $Y'$, then we could assume that 
all variables not represented in $D$ take the value $0$ in a
satisfying assignment of $S-Z$, while variables represented in $D$
need to be carried forward to the next instance (since we only know
their class assignment, not their exact value). 
Thus, if we push $Y'$ to a cut $Y''$ that cuts as close to $s$ as possible,
with $|Y''| \leq |Y'|$, then intuitively $Y''$ is a better cut than $Y'$
as it eliminates more variables.
$Y''$ may represent more equations than $Y'$, if $Y'$ contains
more than one edge per equation, but by Lemma~\ref{lem:deleted-edges} this
overhead is at most a factor of two.
Hence, shadow removal is applicable to $\minlin{2}{\ZZ_{p^n}}$,
up to getting an FPT 2-approximation.

The other aspect, of precisely how knowing the shadow-covering set
$W$ helps, is less obvious. 
To illustrate, let $S$ be a simple instance of $\minlin{2}{\ZZ_4}$, and
assume that $S$ contains an odd cycle $C$ consisting of equations $u=3v$,
together with equations that require a variable of the cycle to be odd.
Then the vertex $v_O$ representing $v \in \{1,3\}$ is connected to $s$ in the class assignment
graph $G$ for every $v \in V(C)$, and the system is infeasible.
There are now two classes of solutions: either $v$ takes an even value
for some $v \in V(C)$, or all variables $v \in V(C)$ take odd
values but some equation $u=3v$ is violated by letting $u=v$.
The former requires separating $v_O$ from $s$ by a conformal cut in $G$,
and the latter would be handled in the instance of $\minlin{2}{\ZZ_2}$
that we create in the next phase of the algorithm. Unfortunately,
if $C$ is large, there is in the former case an unbounded number of
vertices $v_O$ that we might want to separate from $s$.
The structure of the shadow-covering set $W$ simplifies this search,
reducing it to a bounded branching process. 

The following lemma formalizes the idea that cutting closer to $s$ is better.
Informally, it shows that if $Z$ is a solution, i.e.\ a set of equations such that $S-Z$ is satisfiable,
then we can obtain a (not too large) new solution $Z'=(\comp{Z} \cup \con(Y'))$ by replacing
the corresponding conformal minimal $sA$-cut $Y=\sep(\ed(Z))$, where $A$
is the set of vertices in $G$ not reachable from $s$ in $G-Y$, by any
minimal $sA$-cut $Y'$.
\begin{lemma}
  \label{lem:anti-sol}
  Let $S$ be a \homoI{} instance of $\lin{2}{\RZ}$ and $G=G(S)$.
  Moreover, let $Z$ be a set of equations such that $S-Z$ is satisfiable, 
  $Y=\sep(\ed(Z))$, $A$ be the set of all
  vertices in $G$ that are not reachable from $s$ in $G-Y$, and let
  $Y'$ be an $sA$-cut in $G$.
  Then, there is an assignment $\varphi : V(S) \rightarrow \RZ$ of $S$ that satisfies
  $S-Z'$ and agrees with $Y'$, where $Z'=(\comp{Z} \cup \con(Y'))$.
\end{lemma}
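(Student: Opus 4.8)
The plan is to build $\varphi$ by starting from a satisfying assignment of $S-Z$ that agrees with $Y$ (which exists by Lemma~\ref{lem:deleted-edges}), and then ``zeroing out'' exactly those variables that are cut off from $s$ by the new cut $Y'$. Concretely, let $\psi$ be a satisfying assignment of $S-Z$ that agrees with $Y=\sep(\ed(Z))$, as guaranteed by part~(3) of Lemma~\ref{lem:deleted-edges}. Let $A'$ be the set of vertices in $G$ not reachable from $s$ in $G-Y'$; since $Y'$ is an $sA$-cut and $A$ is precisely the set of non-$s$-reachable vertices for $Y$, we have $A \subseteq A'$, so $Y'$ cuts off at least as much as $Y$ does. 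Define $D'$ to be the set of variables $x \in V(S)$ such that no vertex $x_C$ is reachable from $s$ in $G-Y'$, and set $\varphi(x) = 0$ if $x \in D'$ and $\varphi(x) = \psi(x)$ otherwise. By construction $\varphi$ agrees with $Y'$: decided variables keep a value in the class implied by $Y'$ (which coincides with the one implied by $Y$ for variables still reachable, since $Y \subseteq Y'$ as sets of edges need not hold, but reachability only shrinks), and undecided variables are set to $0$.

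It remains to show $\varphi$ satisfies $S - Z' = S - (\comp{Z} \cup \con(Y'))$. First note $\con(Y') \subseteq Z'$, so every equation whose edges touch $Y'$ is removed; and $\comp{Z} = Z \setminus \con(\sep(\ed(Z)))$ is also removed, so the equations we must check are exactly those in $S$ that are neither in $\comp{Z}$ nor in $\con(Y')$. These are of two kinds: equations in $S-Z$, and equations in $\con(\sep(\ed(Z))) = \con(Y)$ that are not in $\con(Y')$. For the latter, observe that for such an equation $e$, all edges $\ed(e)$ avoid $Y'$, so we may treat it exactly as an equation of $S-Z$ with respect to the cut $Y'$ — the only property used in the case analysis below is that $\ed(e) \cap Y' = \emptyset$. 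So it suffices to verify: if $e$ is an equation with $\ed(e) \cap Y' = \emptyset$, then $\varphi$ satisfies $e$. This is precisely the argument in the last three paragraphs of the proof of Lemma~\ref{lem:deleted-edges}, now run with $Y'$ in place of $Y'$ there: for unary $e = (x=b)$ with $b \neq 0$, the crisp edge $sx_{\EQM{e}{i}}$ is present in $G-Y'$, so $x \notin D'$ and $\varphi(x)=\psi(x)$ works; for unary $e=(x=0)$ it is trivially satisfied by $\varphi(x)=0$ when $x \in D'$ and by $\psi$ otherwise; for binary $e=(ax=y)$ the four sub-cases ($x,y \in D'$; $x,y \notin D'$; exactly one in $D'$) go through verbatim using that the relevant edge $x_C y_{\EQM{e}{i}(C)}$ or $x_{\EQM{e}{i}^{-1}(D)} y_D$ is not in $Y'$, hence reachability propagates and forces a contradiction unless the equation is satisfied.

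The main obstacle — and the only point needing care beyond invoking the earlier lemma — is establishing that for an equation $e \in \con(Y) \setminus \con(Y')$ we are genuinely allowed to reuse the $S-Z$ case analysis. The subtlety is that $\psi$ satisfies $S-Z$ but a priori says nothing about $e \in Z$; however, the case analysis in Lemma~\ref{lem:deleted-edges} for a binary equation $e$ only uses that $\psi$ satisfies $e$ in the sub-case $x,y \notin D'$. So I would need the supplementary observation that if $x,y \notin D'$ for such an $e$, then in fact $\psi$ satisfies $e$: since $Y = \sep(\ed(Z))$ is the minimal cut closest to $s$, and $e \in \con(Y)$, every edge of $\ed(e) \cap Y$ has an endpoint reachable from $s$ in $G-Y$; combined with $\psi$ agreeing with $Y$ and $\BA{\psi}$ satisfying all edges reachable from $s$ in $G - \ed(Z)$ (Observation~\ref{obs:ass-ba-ass} and Lemma~\ref{lem:ze-ba-ass}), one deduces that the class assignment on $x,y$ is consistent with $e$, and since $\equiv$ is matching this forces $\psi(x) = a\psi(y)$ whenever both are decided. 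This closes the gap, and the rest is routine.
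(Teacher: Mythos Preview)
Your overall structure matches the paper's: take a satisfying assignment $\psi$ of $S-Z$ that agrees with $Y$, zero out the variables cut off by $Y'$, and run a case analysis on each equation of $S-Z'$. The unary cases and the ``exactly one of $x,y$ decided'' cases are fine.

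The genuine gap is in your handling of the sub-case $e \in \con(Y)\setminus\con(Y')$ with both $x,y \notin D'$. You claim that class consistency together with the matching property ``forces $\psi(x)=a\psi(y)$''. This is false: the matching property is a statement about equivalence \emph{classes}, not about values within a class. For instance, over $\ZZ_4$ with $e=(x=y)$ and $\psi(x)=1$, $\psi(y)=3$, both values lie in the class $\{1,3\}$ and the class assignment respects $e$, yet $\psi$ does not satisfy $e$. Since $e\in Z$, there is no reason $\psi$ should satisfy it, and class-level information cannot recover value-level satisfaction.

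The paper closes this case by a different route: it shows the sub-case cannot occur. Take any edge $h\in\ed(e)\cap Y$ (which exists since $e\in\con(Y)$). By minimality of $Y$, exactly one endpoint of $h$, say $x_C$, is reachable from $s$ in $G-Y$, and the other lies in $A$. Since $e\notin\con(Y')$ we have $h\notin Y'$; thus if $x_C$ were reachable in $G-Y'$, the $A$-endpoint would be too, contradicting that $Y'$ is an $sA$-cut. Hence $x_C$ is unreachable in $G-Y'$. But you assumed some $x_{C_x}$ is reachable in $G-Y'$, hence also in $G-Y$; now $x_C$ and $x_{C_x}$ are two distinct classes both reachable in $G-Y$, contradicting conformality of $Y$. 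So the ``both decided'' case is vacuous, and your attempted direct argument is both unnecessary and incorrect.
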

\begin{proof}
  Lemma~\ref{lem:deleted-edges} implies that $Y$ is conformal and there is a satisfying assignment
  $\varphi$ for $S-Z$ that agrees with $Y$.
  Because $Y'$ is also an $sA$-cut in
  $G$, if no vertex $x_{C}$ is reachable from $s$ in $G-Y$ for
  some variable $x$ of $S$, then
  the same applies in $G-Y'$. Let $D$ be the set of all variables
  $x$ of $S$ such that some vertex $x_{C}$ is reachable from $s$ in $G-Y$
  but that is not the case in $G-Y'$. Let $\varphi'$
  be the assignment obtained from $\varphi$ by setting all variables
  in $D$ to $0$. Then, $\varphi'$ agrees with $Y'$.
  We
  claim that $\varphi'$ also satisfies $S-Z'$, where $Z'=(\comp{Z} \cup
  \con(Y'))$.

  Consider a unary equation $e$ of $S-Z'$ on variable $x$.
  If $x
  \notin D$, then $\varphi'(x)=\varphi(x)$ and therefore $\varphi'$ satisfies
  $e$ (because $e$ is crisp and therefore $e \notin Z$). So suppose that $x \in D$. If $e$ is of the form $x=0$, then
  $\varphi'$ satisfies $e$. Otherwise, $e$ is of the form $x=b$
  for some $b\neq 0$ and $G-Y'$ contains the edge
  $sx_{\EQM{e}{i}}$. Therefore, $x_{\EQM{e}{i}}$ is reachable from $s$ in
  $G-Y'$ contradicting our assumption that $x \in D$.

  Now, consider a binary
  equation $e=(ax=y)$ of $S-Z'$ on variables $x$ and
  $y$ and first consider the case when $e \in Z$. Clearly, if neither
  a vertex $x_C$ nor a vertex $y_C$ is reachable from $s$ in $G-Y'$,
  then $\varphi'(x)=\varphi'(y)=0$, so $e$ is satisfied by
  $\varphi'$. We next show
  that either no vertex $x_C$ or no vertex $y_C$ is reachable from $s$
  in $S-Y'$. Suppose for a contradiction that $x_{C_x}$ and
  $y_{C_y}$ are reachable from $s$ in $S-Y'$.
  Let $h$ be an arbitrary edge in $\ed(e)\cap Y$; such an edge $h$ exists because $e \in Z\setminus Z'$.
  Because $Y$ is a minimal $st$-cut, it follows that exactly one
  endpoint of $h$ is reachable from $s$ in $G-Y$ and either $x_C$ or $y_C$ (endpoint of $h$) for some $C \in
  \EQ_{p^n}$ must be reachable from $s$ in $G-Y$. We assume
  without loss of generality that $x_C$ is reachable from $s$ in $G-Y$. Because $Y'$ is an
  $sA$-cut and $Y'$ does not contain $h$, $x_C$ is not reachable from $s$ in $G-Y'$. But then $C\neq C_x$
  and both $x_C$
  and $x_{C_x}$ are reachable from $s$ in $G-Y$, which contradicts that $Y$ is conformal.
  It remains to consider the case when there is a vertex $x_C$ that is reachable
  from $s$ in $G-Y'$ but no vertex $y_C$ is reachable from $s$ in
  $G-Y'$; the case when there is a vertex $y_C$ reachable from $s$ in
  $G-Y'$ but no vertex $x_C$ reachable from $s$ in $G-Y'$ is analogous.
  Since $Y'\cap \ed(e)=\emptyset$, we obtain that $\EQM{e}{i}(C)=0$
  since otherwise either $t$ or some $y_{C'}$ would be reachable from
  $s$ in $G-Y'$. Because $\varphi'(y)=0$, it follows that $e$ is
  satisfied by $\varphi'$. This completes the proof for the case when $e
  \in Z$.

  Suppose instead that $e \notin Z$. In this case $\varphi$ satisfies $e$ and
  therefore $\varphi'$ also satisfies $e$ unless exactly one of $x$ and
  $y$ is not in $D$. We distinguish the following cases:
  \begin{itemize}
  \item $x \notin D$ and $y \in D$. If there is no vertex $x_C$ that is
    reachable from $s$ in $G-Y'$, then the same holds in $G-Y$ so
    $\varphi'(x)=\varphi(x)=\varphi'(y)=0$, which shows that $\varphi'$
    satisfies $e$. Otherwise, let $x_C$ be reachable from $s$ in
    $G-Y'$. Then, $\EQM{e}{i}(C)=0$ since otherwise either $t$ or
    $y_{\EQM{e}{i}(C)}$ is also reachable from $s$ in $G-Y'$ (because
    $Y'\cap \ed(e)=\emptyset$), which in
    the former case contradicts our assumption that $Y'$ is an
    $st$-cut and which in the latter case contradicts our assumption
    that $y \in D$. Therefore, $\varphi'$ satisfies $e$ (because
    $\varphi'(y)=0$).
  \item $x \in D$ and $y \notin D$. We first show that there is no
    vertex $y_C$ that is reachable from $s$ in $G-Y'$. Suppose there
    is such a vertex $y_C$. Then, $\EQM{e}{i}^{-1}(C)$ is undefined
    since otherwise $x_{\EQM{e}{i}^{-1}(C)}$ is reachable from $s$ in
    $G-Y'$ (because $Y'\cap \ed(e)=\emptyset$), which contradicts our
    assumption that $x \in D$. But then $y_Ct \in E(G-Y')$ and
    $t$ is reachable from $s$ in $G-Y'$, which contradicts
    our assumption that $Y'$ is an $st$-cut. Hence, there is no vertex $y_C$ that is
    reachable from $s$ in $G-Y'$, which implies that the same holds in $G-Y$ so $\varphi'(y)=\varphi(y)=\varphi'(x)=0$, which shows that $\varphi'$
    satisfies $e$.\qedhere 
  \end{itemize}%
\end{proof}

Let $G=G(S)$ and for a set $W \subseteq
V(G)$, let $\delta(W)$ be the set of edges incident to a vertex in $W$
and a vertex in $V(G)\setminus W$.
The forthcoming \Cref{lem:our-shadow-cover} 
provides a version of shadow
removal adopted to our problem. 
Informally, it provides us with a set
$W \subseteq V(G)$ such that we only
have to look for conformal $st$-cuts that are subsets of $\delta(W)$ to
obtain our class assignment; in fact it even shows that for every component $C$ of $G[W]$ either all edges in $\delta(C)$ are part of the cut or no edge of $\delta(C)$ is part of the cut. We will use this fact in
\Cref{lem:list-conformal-cuts} to find a conformal $st$-cut by branching on
which components of $G[W]$ are reachable from $s$.

More formally, if $Z$ is a set of equations such that $S-Z$ is satisfiable and $A$ is the
set of vertices not reachable from $s$ in $G$ minus the conformal
$st$-cut $\sep(\ed(Z))$ (see Lemma~\ref{lem:deleted-edges}), then the
lemma provides us with a set $W \subseteq V(G)$ such that there is a
conformal $sA$-cut $Y'$ within $\delta(W)$ of size at most
$2|Z \setminus \comp{Z}|$ such that there is an assignment $\varphi
: V(S) \rightarrow \RZ$ for the variables in $S$ that satisfies
$S-(\comp{Z} \cup \con(Y'))$ and
agrees with $Y'$. The main idea behind the proof is the application of Theorem~\ref{thm:shadow-cover} to the set of all walks from $s$ to $A$ in $G$ to obtain the set $W$ and to employ Lemma~\ref{lem:anti-sol} to obtain the new solution that corresponds to the minimum $sA$-cut $Y' \subseteq \delta(W)$.

\begin{lemma}
  \label{lem:our-shadow-cover}
  Let $S$ be a \homoI{} instance of $\lin{2}{\RZ}$ and let $G=G(S)$.
  Moreover, let $Z$ be a set of equations such that $S-Z$ is satisfiable, 
  $Y=\sep(\ed(Z))$, let $A$ be the set of all
  vertices in $G$ that are not reachable from $s$ in $G-Y$, and let
  $q=|Z\setminus \comp{Z}|$.
  There is a randomized algorithm that in $\bigohs(4^{2q})$ time takes
  $(G,q)$ as input and returns  a set $W \subseteq V(G) \setminus \{s\}$
  such that the following holds with probability $2^{-O(q^2)}$.
  There is a (minimal) $sA$-cut $Y'$ of size at most $2q$
  satisfying:
  \begin{enumerate}
  \item every vertex $v \notin W$ is connected to $s$ in $G - Y'$,
  \item %
    $Y' \subseteq \delta(W)$,
  \item there is a satisfying assignment for $S-(\comp{Z} \cup \con(Y'))$
    that agrees with $Y'$
  \end{enumerate}
  Moreover, for every component $C$ of $G[W]$ the following holds:
  \begin{enumerate}[resume]
  \item either $Y'\cap\delta(C)=\emptyset$ or $\delta(C)\subseteq Y'$,
  \item if $t \in C$, then $\delta(C) \subseteq Y'$,
  \item if $x_{\alpha},x_{\alpha'} \in C$ for some variable $x$
    and $\alpha\neq \alpha'$, then $\delta(C)
    \subseteq Y'$,
  \item if $C$ contains some $x_\alpha$ for some decided
    variable $x$ w.r.t. $Y'$, then $\delta(C) \subseteq Y'$.
  \end{enumerate}
\end{lemma}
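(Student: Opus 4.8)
The plan is to apply Theorem~\ref{thm:shadow-cover} with a carefully chosen family $\cF$ and then use Lemma~\ref{lem:anti-sol} to extract the promised assignment. First I would set up the edge-to-vertex translation needed to invoke Theorem~\ref{thm:shadow-cover}: that theorem speaks of vertex transversals, whereas we want an edge cut $Y'$. The standard fix is to subdivide every edge of $G$ once, placing a new vertex on it, and to let the new vertices (only) be candidates for the transversal; an $\cF$-transversal in the subdivided graph then corresponds exactly to an edge set in $G$. I would apply Theorem~\ref{thm:shadow-cover} to this subdivided graph with $T = \{s\}$ and parameter $2q$, where $\cF$ is the family of all $s$--$A$ walks in $G$ (equivalently, the connected subgraphs consisting of a walk from $s$ to some vertex of $A$). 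Since $s \notin A$ and every such walk contains $s$, the family is $\{s\}$-connected as required. The existence of a small transversal for this family is exactly the existence of a small $sA$-cut in $G$, and by Lemma~\ref{lem:deleted-edges} the cut $Y = \sep(\ed(Z))$ is a conformal $st$-cut with $|Y| \le 2|\con(Y)| = 2|Z \setminus \comp{Z}| = 2q$; and since $A$ is precisely the set of vertices not reachable from $s$ in $G - Y$, this $Y$ is an $sA$-cut of size at most $2q$. So Theorem~\ref{thm:shadow-cover} applies and, with probability $2^{-O(q^2)}$, returns a set $W \subseteq V(G) \setminus \{s\}$ (after projecting back from the subdivided graph and discarding the subdivision vertices — a minor bookkeeping step) and guarantees the existence of an $sA$-transversal $Y'$ of size at most $2q$ with $Y' \cap (W \cup \{s\}) = \emptyset$ and such that every vertex outside $W \cup Y'$ is connected to $s$ in $G - Y'$.

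Next I would verify properties (1)--(3). Property (1) is the shadow-covering guarantee of Theorem~\ref{thm:shadow-cover}: every vertex $v \notin W$ lying outside $Y'$ is connected to $s$ in $G - Y'$; since the endpoints of edges in $Y'$ incident to $s$ are still $s$-reachable and the theorem guarantees $Y' \cap W = \emptyset$, this gives exactly ``every $v \notin W$ is connected to $s$ in $G - Y'$''. Property (2), $Y' \subseteq \delta(W)$, follows because each edge of $Y'$ has an endpoint that is $s$-reachable in $G - Y'$ (hence not in $W$ by property (1), well, more precisely: by minimality of the cut one endpoint is reachable from $s$, and we will push $Y'$ to be minimal if needed) and, being a cut edge separating $s$ from $A$, its other endpoint lies in the shadow, which $W$ covers; so one endpoint is in $V(G) \setminus W$ and we must argue the other is in $W$. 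The cleanest way is: take $Y'$ to be a \emph{minimal} $sA$-cut inside $V(G) \setminus (W \cup \{s\})$ closest to $s$; then for each $e \in Y'$ one endpoint $u$ is $s$-reachable in $G - Y'$ and the other endpoint $w$ is not, so $w$ is in the shadow of $Y'$ and $w \notin Y'$ (it is a vertex), hence $w \in W$ by Theorem~\ref{thm:shadow-cover}, giving $e \in \delta(W)$. Property (3) is then immediate from Lemma~\ref{lem:anti-sol} applied with this $Y'$: since $Y'$ is an $sA$-cut, that lemma yields an assignment $\varphi : V(S) \to \RZ$ satisfying $S - (\comp{Z} \cup \con(Y'))$ that agrees with $Y'$.

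For the component properties (4)--(7), the key observation is that $Y'$ is a minimal $sA$-cut and $W \supseteq$ the shadow of $Y'$, together with $Y' \subseteq \delta(W)$. For (4): let $C$ be a component of $G[W]$; if some edge of $\delta(C)$ lies in $Y'$, I claim \emph{all} of $\delta(C)$ does. Indeed, since $C \subseteq W$ and $W$ contains the shadow while avoiding $Y'$, no vertex of $C$ is reachable from $s$ in $G - Y'$ (any $s$-reachable vertex outside $Y'$ lies outside $W$ by property (1)); hence every edge leaving $C$ has its $C$-endpoint in the shadow, and a standard ``cut closest to $s$'' argument shows that all these boundary edges must be in $Y'$ — otherwise one could reach a shadow vertex from $s$, contradiction, or one could shrink the $s$-side, contradicting that $Y'$ was chosen closest to $s$. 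Actually the cleanest phrasing: since $C$ is entirely in the shadow, $\delta(C) \subseteq Y'$ whenever $\delta(C) \cap Y' \neq \emptyset$ fails to be upgradable only if some boundary edge goes between two shadow vertices, in which case it need not be in $Y'$ — so I should phrase (4) as: the edges of $\delta(C)$ that go to $s$-reachable vertices are either all in $Y'$ or $C$ is itself $s$-reachable (impossible as $C \subseteq W$); I will need to be careful here. Properties (5)--(7) then follow: if $t \in C$ then $t$ is in the shadow (not $s$-reachable), so $C \subseteq$ shadow and by the component structure $\delta(C) \subseteq Y'$; if $x_\alpha, x_{\alpha'} \in C$ with $\alpha \neq \alpha'$, then at most one is $s$-reachable by conformality, but neither is since $C \subseteq W$, so again $\delta(C) \subseteq Y'$; and if $C$ contains $x_\alpha$ for a decided variable $x$, then $x_\alpha$ is reachable from $s$ in $G - Y'$ by definition of ``decided'', contradicting $C \subseteq W$ unless — wait, ``decided'' means exactly one $x_C$ is $s$-reachable, and that vertex cannot be in $W$; so if $C$ contains \emph{some} $x_\alpha$ and $x$ is decided, that $x_\alpha$ is either the reachable one (impossible, as it would not be in $W$) or a non-reachable one, and then all boundary edges of $C$ are in $Y'$ as in (4).

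The main obstacle I anticipate is property (4), or more precisely getting the component structure of $G[W]$ to interact cleanly with the minimality/closest-to-$s$ properties of $Y'$: one has to argue that a component of the shadow behaves as an ``all or nothing'' unit with respect to the cut, which requires choosing $Y'$ as the unique minimal $sA$-cut closest to $s$ (which exists and is unique by the definition of ``closest'' recalled in the preliminaries) and then showing that if a component $C \subseteq W$ had \emph{some} boundary edge outside $Y'$ leading to an $s$-reachable vertex, one could route $s$-reachability into $C$ and thus into the shadow, a contradiction; while if the boundary edge leads to another shadow vertex it simply is not a cut edge and causes no problem, which is why the statement is ``$Y' \cap \delta(C) = \emptyset$ or $\delta(C) \subseteq Y'$'' rather than the stronger ``$\delta(C) \subseteq Y'$ always''. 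Making this dichotomy airtight — and in particular handling the subdivision-vertex bookkeeping so that ``components of $G[W]$'' in the original graph match up with the transversal guarantees in the subdivided graph — is the technically delicate part, but it is routine once the right choice of $Y'$ is fixed.
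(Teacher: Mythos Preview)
Your overall plan---apply Theorem~\ref{thm:shadow-cover} to the family of $s$--$A$ walks and then invoke Lemma~\ref{lem:anti-sol}---matches the paper's. But your argument for property~(4) rests on a false premise. You repeatedly assert that $C \subseteq W$ forces $C$ into the shadow, i.e.\ that no vertex of $C$ is $s$-reachable in $G-Y'$, and you justify this by property~(1). That is the wrong direction: property~(1) (equivalently, the guarantee of Theorem~\ref{thm:shadow-cover}) says that vertices \emph{outside} $W$ are $s$-reachable; it does \emph{not} say that vertices inside $W$ are unreachable. The set $W$ \emph{covers} the shadow but may well contain $s$-reachable vertices. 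So the line ``one could route $s$-reachability into $C$ and thus into the shadow, a contradiction'' is no contradiction at all. The paper's argument for~(4) runs differently: every vertex of $N(C)$ lies in $V(G)\setminus W$ (since $C$ is a component of $G[W]$), hence is $s$-reachable by~(1); and $Y'$ contains no edge interior to $C$ by~(2). Thus if even one edge of $\delta(C)$ is missing from $Y'$, \emph{all} of $C$ becomes $s$-reachable, and then any edge $f\in\delta(C)\cap Y'$ has both endpoints $s$-reachable in $G-Y'$, so $Y'\setminus\{f\}$ is still an $sA$-cut, contradicting minimality of $Y'$. Minimality is doing the work here, not the (false) claim that $W$ equals the shadow. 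Your arguments for (5)--(7) inherit the same error, since they all appeal to ``$C$ is in the shadow''; once (4) is proved correctly, (5)--(7) follow from~(4) together with the fact that $Y'$ is a conformal $st$-cut (which comes from Lemma~\ref{lem:anti-sol}).

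There is a second, more technical gap. Plain edge subdivision does not force the transversal returned by Theorem~\ref{thm:shadow-cover} to consist only of subdivision vertices: the theorem promises \emph{some} transversal $Y'\subseteq V(G')\setminus T$ of size at most $2q$, and nothing stops it from containing an original vertex $v$ of $G$, which would correspond to deleting \emph{all} edges incident to $v$---possibly far more than $2q$. You cannot simply ``let the new vertices (only) be candidates''; the theorem has no such knob. The paper handles this by blowing up every original vertex $a$ into a clique $K(a)$ of $2q+1$ vertices and attaching each edge-vertex $z_{ab}$ to all of $K(a)$ and $K(b)$; then any minimal transversal of size at most $2q$ must avoid the cliques and consist only of edge-vertices $z_e$, giving a clean bijection with edge cuts of size at most $2q$ in $G$. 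This is not mere bookkeeping; without it the translation back to an edge cut $Y'\subseteq\delta(W)$ of size at most $2q$ breaks down.
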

\begin{proof}
  In order to apply Theorem~\ref{thm:shadow-cover}, we first transform $G$ into
  an undirected graph $G'$ as follows. 
  For every vertex $a \in V(G)$, we create a clique on $2q+1$
  vertices $K(a) = \{a^1, \dots, a^{2q+1}\}$ in $G'$.
  For every edge $ab \in E(G)$, we introduce an auxiliary vertex $z_{ab}$ and connect it to $a^i$ and $b^i$
  for all $1 \leq i \leq 2q+1$. Informally, the
  construction of $G'$ is required to ensure
  that every $sA$-cut in $G$ corresponds to an equally sized
  transversal in $G'$.

  We now apply Theorem~\ref{thm:shadow-cover} to the tuple
  $(G',T,2q)$, where $T=K(s)$. Let $\cF$ be the set of all walks in $G'$ from
  $K(s)$ to $\bigcup_{a \in A}K(a)$; clearly $\cF$ is $T$-connected.

  We first show that that if $X \subseteq E(G)$ is an $sA$-cut in $G$, then the
  set $X'=\SB z_e \SM e \in X\SE$ is a $\cF$-transversal in $G'$. This
  is because any path from $K(s)$ to $\bigcup_{a \in A}K(a)$ in
  $G'-X'$ corresponds to a path from $s$ to $A$ in $G$. Similarly, if
  $X'\subseteq V(G')$ is an $\cF$-transversal in $G'$ of size at most
  $2q$, then the set $X=\SB e \SM z_e \in X'\SE$ is an $sA$-cut in
  $G$. To see this, first observe that $|X'|\leq 2q$ implies
  that $X'$ cannot contain all $2q+1$ vertices in $K(a)$ for any
  vertex $a \in V(G)$ and 
  therefore any path from $s$ to $A$ in $G-X$ corresponds to a path
  from $K(s)$ to $\bigcup_{a \in A}K(a)$ in $G'-X'$.

  Let $W' \subseteq V(G')\setminus T$ be the set of vertices of $G'$
  satisfying the properties stated in Theorem~\ref{thm:shadow-cover},
  i.e.\ if there is a $\cF$-transversal of size at most $2q$ in $V(G') \setminus K(s)$,
  then there is an $\cF$-transversal $R$ in $V(G') \setminus (W' \cup K(s))$ 
  of size at most $2q$ such that every vertex $v \notin W \cup R$
  is connected to $K(s)$ in $G' - R$. We claim that the set $W=\SB a \in
  V(G) \SM K(a)\subseteq W'\SE$ satisfies the properties given in the
  statement of the lemma.
  First note that without loss of generality we may assume that $z_e \in W'$ for every
  edge $e=uv \in E(G)$ such that $u,v \in W$; this is because $z_e$
  can never be reachable from $K(s)$ if $K(u)$ and $K(v)$ are both contained
  in $W'$.
  
  Because $W' \subseteq V(G')\setminus K(s)$, we obtain that $W
  \subseteq V(G)\setminus \{s\}$. Moreover, because $Y$ is an
  $sA$-cut of size at most $2|Z\setminus \comp{Z}|\leq 2q$ in $G$ (using~\Cref{lem:deleted-edges}), we obtain that the set $R=\SB z_e
  \SM e \in Y\SE$ is a $\cF$-transversal of size at most $2q$ in
  $V(G')\setminus K(s)$. Therefore, by Theorem~\ref{thm:shadow-cover}
  there is a $\cF$-transversal $R' \subseteq V(G') \setminus (W' \cup K(s))$ 
  of size at most $2q$ such that every vertex $v \notin W' \cup R'$
  is connected to $K(s)$ in $G' - R'$. Note that without loss of generality we may
  assume that $R'$ is a minimal $\cF$-traversal; otherwise we can
  remove unnecessary vertices from $R'$ to make it minimal without
  changing the properties of $R'$. Therefore, 
  $R' \subseteq \SB z_e \SM e \in E(G)\SE$ also holds.
  Let $Y'$ be the set $\SB e \in E(G) \SM z_e \in
  R'\SE$. Because $R'$ is a $\cF$-transversal of size at most $2q$ in
  $G'$,
  it holds that $Y'$ is an $sA$-cut of size at most $2q$ in
  $G$. Moreover, because $R' \subseteq V(G') \setminus (W' \cup K(s))$
  and, as we observed above, $W'$ contains $z_e$ for every edge
  $e=uv$ with $u,v\in W$,
  we obtain that
  $Y' \subseteq E(G)\setminus E(G[W])$. Since every vertex $v \notin
  W'\cup R'$ is connected to $K(s)$ in
  $G'-R'$, we see that every vertex $u \notin W$ is connected to $s$
  in $G-Y'$ showing \textbf{1.} Moreover, because $Y'$ is a minimal $sA$-cut we
  also obtain $Y'\subseteq \delta(W)$, which shows \textbf{2.}
  Since \textbf{3.} follows directly from Lemma~\ref{lem:anti-sol}, it only
  remains to show \textbf{4.}--\textbf{7.}.

  Let $C$ be a component of $G[W]$. Because $Y'\cap E(C)=\emptyset$
  and every vertex in $N(C)$, where $N(C)$ denotes the set of all
  neighbours of $C$ outside of $C$,  is reachable from $s$ in
  $G-Y'$, it follows that if $\delta(C) \setminus Y'\neq
  \emptyset$, then all vertices in $C$ are reachable from $s$ in
  $G-Y'$. Therefore, because $Y'$ is an inclusion-wise minimal
  $sA$-cut in $G$, we can assume that either
  $Y'\cap\delta(C)=\emptyset$ or $\delta(C)\subseteq Y'$;
  otherwise removing the edges in $\delta(C)$ from $Y'$ gives again
  an $sA$-cut showing that $Y'$ is not inclusion-wise minimal. This shows \textbf{4.}
  Therefore, if $t \in C$ then $\delta(C)\subseteq Y'$
  since otherwise $Y'$ is would not be an $sA$-cut in
  $G$, and this shows \textbf{5.} Item \textbf{6.} can now be shown similarly: if a component $C$ contains two distinct vertices $x_C$
  and $x_{C'}$ for some variable $x$ of $S$, then $\delta(C)\subseteq Y'$
  because otherwise $Y'$ is not conformal. Moreover, the same
  applies if $C$ contains some $x_C$ for some decided
  variable $x$ w.r.t. $Y'$, which shows \textbf{7.}.
\end{proof}

\newcommand{\trip}{\mathcal{T}}
\begin{algorithm}[htb]
  \caption{Method for branching.} \label{alg:branch}
  \small
  \begin{algorithmic}[1]
    \INPUT a simple $\minlin{2}{\ZZ_{p^n}}$-instance $S$,
    integers $k \geq q \geq 0$
    \OUTPUT the set $\cY$ of conformal cuts
    \Function{\textsc{branch}}{$S$, $k$, $q$}
    \State $G \gets G(S)$
    \State Call Lemma~\ref{lem:our-shadow-cover} on $(G,q)$,
    producing $W \subseteq V(G) \setminus \{s\}$

    \If {$\delta(W)$ is not conformal}
    \State \Return $\emptyset$
    \EndIf
    \State $\CCC_0 \gets$ connected components of $G[W]$
    \State Let $\CCC^- \subseteq \CCC_0$ contain every component $C$ \label{algline:ccc-}
    such that
    $t \in C$, or
    there exist $x \in V(S)$ and distinct $\alpha, \beta \in
      \Gamma^{\neq 0}$ such that $x_\alpha, x_\beta \in (V(G)
      \setminus W) \cup C$
    \State $V_0 \gets \bigcup_{C \in \CCC^-} C$
    \State \Return \Call{branchUndecided}{$\CCC_0 \setminus \CCC^-$, $k-q$, $2q-|\delta(V_0)|$,
      $V(G) \setminus W$, $V_0$}
    \EndFunction
  \end{algorithmic}
\end{algorithm}

\begin{algorithm}[htb]
  \caption{Method for branching on undecided variables.} \label{alg:branchundecided}
  \small
  \begin{algorithmic}[1]
    \INPUT the set $\CCC$ of remaining components, the
    remaining budgets $k'$ and $b$, disjoint vertex sets $V_1, V_0 \subseteq V(G)$
    \Function{\textsc{branchUndecided}}{$\CCC$, $k'$, $b$, $V_1$, $V_0$}
    \If {$b < 0$}
    \State \Return $\emptyset$
    \EndIf
    \If {there exist $C \in \CCC$, $x \in V(S)$ and distinct $\alpha, \beta \in \Gamma^{\neq 0}$
      such that $x_\alpha, x_\beta \in V_1 \cup C$} \label{algline:undecided-single}
    \State \Return \Call{branchUndecided}{$\CCC \setminus \{C\}$, $k'$, $b-|\delta(C)|$, $V_1$, $V_0 \cup C$}
    \EndIf
    \If {there exist $C_1, C_2 \in \CCC$, $x \in V(S)$, $\alpha, \beta \in \Gamma^{\neq 0}$ with $\alpha \neq \beta$
    and $x_\alpha, x_\beta \in C_1 \cup C_2$} \label{algline:undecided-pair}
    \State $\cY \gets $ \Call{branchUndecided}{$\CCC \setminus \{C_1\}$, $k'$, $b-|\delta(C_1)|$, $V_1$, $V_0 \cup C_1$}
    \State $\cY \gets \cY \; \cup  $ \Call{branchUndecided}{$\CCC \setminus \{C_1,C_2\}$, $k'$, $b-|\delta(C_2)|$, $V_1 \cup C_1$, $V_0 \cup C_2$}
    \State \Return $\cY$
    \EndIf
    \State $\CCC_U \gets \,$ \Call{getUnSatisfied}{$\CCC$} \label{algline:undecided-post}
    \State \Return \Call{branchUnSatisfied}{$\CCC_U$, $k'$, $b$, $V_0$}
%
    \EndFunction
  \end{algorithmic}
\end{algorithm}

\begin{algorithm}[htb]
  \caption{Method for branching on unsatisfied components.} \label{alg:branchunsat}
  \small
  \begin{algorithmic}[1]
    \INPUT the set $\CCC_U$ of unsatisfied components, the
    remaining budgets $k'$ and $b$, a set $V_0 \subseteq V(G)$
    \Function{\textsc{branchUnSatisfied}}{$\CCC_U$, $k'$, $b$, $V_0$}
    \If {$k'<0$ or $b<0$}
    \State \Return $\emptyset$
    \EndIf
    \If {$\CCC_U = \emptyset$}
    \State \Return $\{\delta(V_0)\}$
    \EndIf
    \State Let $C \in \CCC_U$
    \State $\cY \gets $ \Call{branchUnSatisfied}{$\CCC_U \setminus \{C\}$, $k'-1$, $b$, $V_0$}
    \State $\cY \gets \cY \; \cup \; $ \Call{branchUnSatisfied}{$\CCC_U \setminus \{C\}$, $k'$, $b-\delta(C)$, $V_0 \cup C$}
    \State \Return $\cY$
%
    \EndFunction
  \end{algorithmic}
\end{algorithm}

The following lemma now uses the set $W \subseteq V(G)$ computed
in~\Cref{lem:our-shadow-cover} to compute a set of at most
$2^{\bigoh(k)}$
conformal cuts $\cY$ which contains a cut $Y$
with the following property.
If $S$ has a solution $Z$ of size at most $k$ such that $|Z\setminus \comp{Z}|=q$, 
then there is a subset $Z_2 \subseteq (Z \setminus \comp{Z})$ with 
for which $Y$ serves as a $2$-approximation:
more precisely, $|Y| \leq 2|Z_2|$,
$(Z \setminus Z_2) \cup Y$ is a solution to $S$, and
$S - ((Z \setminus Z_2) \cup Y)$ admits a satisfying assignment
that agrees with $Y$.
Observe that the cost of $S - Y$ is at most $|Z \setminus Z_2| \leq k - |Y|/2$.
Now~\Cref{lem:get-class-assignment} is an immediate consequence
of~\Cref{lem:list-conformal-cuts}, i.e.\ instead of returning
the set $\cY$ of conformal cuts, we choose one conformal cut $Y \in
\cY$ uniformly at random and output the class assignment corresponding
to $Y$.
The main idea
behind computing $\cY$ is to use the fact that we only need to
consider conformal cuts that are within $\delta(W)$ and this
allows us to branch on which components of $G[W]$ are
reachable from $s$ (see also Property \textbf{4.} in~\Cref{lem:our-shadow-cover}).

\begin{lemma}
  \label{lem:list-conformal-cuts}
  Let $S$ be a simple instance of
  $\lin{2}{\RZ}$ and let $k$ and $q$ with $k\geq q$ be integers.
  There is a randomized algorithm that 
  takes $(S,k,q)$ 
  as input and returns 
  in
  $\bigohs(2^{\bigoh(k)})$ time
  a set $\cY$ of at most
  $2^{\bigoh(k)}$ conformal cuts (each of size at most $2q$),
  such that if $S$ has a solution $Z$ of size at most $k$ such that $q=|Z\setminus \comp{Z}|$
  then with probability at least $2^{-O(k+q^2)}$
  there is a cut $Y \in \cY$ with the following property:
  there is a partition $Z=Z_1 \cup Z_2$ where $\comp{Z} \subseteq Z_1$ and $|Y| \leq 2|Z_2|$,
  and an assignment that satisfies $S-(Z_1 \cup \con(Y))$ and agrees with $Y$.
\end{lemma}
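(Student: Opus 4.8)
The plan is to prove the lemma by analysing \textsc{branch} (Algorithm~\ref{alg:branch}). It calls \Cref{lem:our-shadow-cover} on $(G,q)$ with $G=G(S)$ to obtain a set $W\subseteq V(G)\setminus\{s\}$, discards the run if $\delta(W)$ is not conformal, and otherwise enumerates conformal cuts of the form $\delta(V_0)$ where $V_0$ is a union of connected components of $G[W]$. Property~4 of \Cref{lem:our-shadow-cover} guarantees that the optimal-ish cut we chase treats each component of $G[W]$ all-or-nothing, so this is the right family of candidates. The branching in Algorithms~\ref{alg:branchundecided}--\ref{alg:branchunsat} decides, component by component, which side it lies on, with three kinds of obligation: a component is \emph{forced} to the $t$-side if it contains $t$ or if keeping it on the $s$-side would make two vertices $x_\alpha,x_\beta$ of one variable reachable (this is $\CCC^-$ in \textsc{branch} and the first two cases of \textsc{branchUndecided}); an \emph{unsatisfied} component, detected by \textsc{getUnSatisfied}, forces the induced class assignment to violate an equation not caught by the previous rule, and for it we branch between cutting $\delta(C)$ at the expense of the cut budget $b$ and keeping it at the expense of one unit of the solution budget $k'$; all remaining components are kept on the $s$-side.

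The size and time bounds are routine: every branching step strictly decreases $k'+b$, and initially $k'\le k-q$ and $b\le 2q$, so the (binary) search tree has depth $\bigoh(k)$ and at most $2^{\bigoh(k)}$ leaves, and together with the $\bigohs(4^{2q})$ cost of \Cref{lem:our-shadow-cover} this yields running time $\bigohs(2^{\bigoh(k)})$ and $|\cY|\le 2^{\bigoh(k)}$. Each returned cut $\delta(V_0)$ is an $st$-cut because the component of $t$ is forced into $V_0$ (if $t\notin W$ then $\delta(W)$ is not an $st$-cut, hence not conformal, and the run is discarded); it has size at most $2q$ because the cut budget is never overrun and the sets $\delta(C)$ for distinct components of $G[W]$ are pairwise disjoint; and it is conformal because the conformality cases force into $V_0$ every component that would leave two vertices $x_\alpha,x_\beta$ of a variable reachable from $s$.

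For correctness, fix a solution $Z$ with $|Z|\le k$ and $q=|Z\setminus\comp{Z}|$, set $Y_Z=\sep(\ed(Z))$, and let $A$ be the set of vertices not reachable from $s$ in $G-Y_Z$. By \Cref{lem:deleted-edges}, $Y_Z$ is a conformal $st$-cut of size at most $2q$, so \Cref{lem:our-shadow-cover} applies, and with probability $2^{-\bigoh(q^2)}$ (hence at least $2^{-\bigoh(k+q^2)}$) it produces $W$ and a minimal $sA$-cut $Y'$ of size at most $2q$ obeying Properties~1--7. Property~1 forces $A\subseteq W$; Properties~2 and~4 give $Y'=\delta(V_0^*)$ for a union $V_0^*$ of components of $G[W]$; and since $Y'$ is an $sA$-cut its reachable set $V(G)\setminus V_0^*$ is contained in the reachable set of $Y_Z$, whence $A\subseteq V_0^*$ and $Y'$ is conformal. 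Any component $C$ of $V_0^*$ meeting $A$ is forced (if it contains $t$ or a same-variable conflict) or unsatisfied: if $C$ is not forced, then by minimality of $Y'$ the edges $\delta(C)$ cannot be dropped from $Y'$, so keeping $C$ on the $s$-side would reach a vertex of $A$, which — using Property~3, i.e.\ the satisfying assignment of $S-(\comp{Z}\cup\con(Y'))$ agreeing with $Y'$ — means $C$ is detected as unsatisfied. Therefore some branch of \textsc{branch} places into $V_0$ exactly the forced and conformality components together with every unsatisfied component contained in $V_0^*$; let $Y=\delta(V_0)$ be the cut it returns. Along that branch $V_0\subseteq V_0^*$, so $|Y|\le|Y'|\le 2q$ and the cut budget is not exceeded, and $\delta(V_0)$ separates $s$ from every vertex of $A$, so it is an $sA$-cut. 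Applying \Cref{lem:anti-sol} with $Y$ in the role of $Y'$ gives an assignment satisfying $S-(\comp{Z}\cup\con(Y))$ and agreeing with $Y$. Taking $Z_1=\comp{Z}$ and $Z_2=Z\setminus\comp{Z}$ we obtain a partition of $Z$ with $\comp{Z}\subseteq Z_1$, $|Y|\le 2q=2|Z_2|$, and $S-(Z_1\cup\con(Y))=S-(\comp{Z}\cup\con(Y))$ satisfied by an assignment agreeing with $Y$, as required.

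The hard part is making the third paragraph rigorous. Concretely: \textsc{getUnSatisfied} must be defined so that it detects exactly the unsatisfiability that surfaces after rewriting $S-\con(Y)$ over $\ZZ_{p^{n-1}}$ via $\nxt$, which rests on \Cref{lem:next-level} and the matching and absorbing properties of $\equiv$; one must prove that every component of $V_0^*$ meeting $A$ that is not conformality-forced is genuinely detected as unsatisfied, so that the algorithm is prepared to cut it off; and — the real accounting difficulty — one must show that along the target branch the solution budget $k'=k-q$ is never exhausted, i.e.\ that the unsatisfied components which $Y'$ leaves on the $s$-side inject into the at most $k-q$ equations of $\comp{Z}$. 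This last charging is what links the cut picture of this subsection back to the equation-deletion cost that \Cref{lem:get-class-assignment} ultimately needs.
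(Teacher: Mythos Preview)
Your overall plan is sound and matches the paper's structure, and the running-time and conformality arguments are fine. However, the correctness argument in the third paragraph contains a genuine error, not merely an unfilled gap.

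You assert that ``any component $C$ of $V_0^*$ meeting $A$ that is not conformality-forced is detected as unsatisfied,'' and then conclude that the target branch produces a cut $Y=\delta(V_0)$ which is itself an $sA$-cut, so that \Cref{lem:anti-sol} can be applied to $Y$. This assertion is false. Self-satisfiability of $C$ is a purely local property of the equations represented inside $G[C]$; it has nothing to do with whether $C$ meets $A$. A component $C\subseteq V_0^*$ may perfectly well be self-satisfiable and non-forced (for instance, take $C$ to be a single vertex $x_\alpha$, or a path of consistent equations, that happens to lie in $A$ because the optimal cut $Y_Z$ separates it from $s$). Such a $C$ is never placed into $V_0$ by \textsc{branchUndecided} or \textsc{branchUnSatisfied}, so $V_0\subsetneq V_0^*$, vertices of $A$ remain reachable from $s$ in $G-Y$, and $Y$ is \emph{not} an $sA$-cut. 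Consequently \Cref{lem:anti-sol} does not apply to $Y$: that lemma only lets you \emph{shrink} the reachable set (setting newly-separated variables to~$0$), whereas here you need to \emph{enlarge} it.

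The paper's proof does not attempt to show $Y$ is an $sA$-cut. Instead it takes the assignment $\varphi$ from Property~3 (which agrees with $Y'$, not with $Y$) and explicitly patches it on each self-satisfiable component $C\in V_0^*\setminus V_0$: since $C$ is self-satisfiable, there is a local assignment $\varphi_C$ with values in the prescribed classes that satisfies all equations inside $C$, and one splices $\varphi_C$ into $\varphi$ on the variables of $C$ that $\varphi$ had set to~$0$. The resulting $\varphi'$ agrees with $Y$ and any newly violated equation lies in $\con(Y)$ (the only edges touched are those in $\delta(C)\subseteq Y$ and edges of $C$ already violated by $\varphi$). This patching step is the missing idea; your final paragraph flags the assertion as something ``one must prove,'' but it cannot be proved because it is false, and the approach must be changed to the direct construction of $\varphi'$.
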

\begin{proof}
  The algorithm is shown as Algorithm~\ref{alg:branch}--\ref{alg:branchunsat}.
  It returns the set $\cY$ of conformal cuts and
  is initially called with \textsc{branch}($S$, $k$, $q$)
  where $k$ and $q$ are as in the statement of this lemma.
  It starts by computing the set $W\subseteq V(G)\setminus
  \{s\}$ from $(G,q)$ using~\Cref{lem:our-shadow-cover} such
  that with probability $2^{-O(q^2)}$ there is a (minimal)
  $sA$-cut $Y'$ of size at most $2q$ satisfying \textbf{1.}--\textbf{7.}.
  Given
  $W$ and the fact that $Y'$ has to satisfy \textbf{1.}--\textbf{7.},
  the algorithm now branches over all possibilities for $Y'$
  by
  guessing which components of $G[W]$ will be reachable from $s$ in
  $G-Y'$. In particular, using \textbf{2.} and \textbf{4.}, it follows that
  $Y'=\delta(\CCC)$, where $\CCC$ is the set of all components of $G[W]$
  that are not reachable from $s$ in $G-Y'$. Moreover, because of \textbf{3.},
  $Y'$ is a conformal $sA$-cut in $G$. If $Y' \subseteq \delta(W)$
  then $\delta(W)$ is also conformal.
  Therefore, if $\delta(W)$ is not conformal, the algorithm can correctly reject the choice of $W$ (and return $\cY=\emptyset$). In the following we assume that $\delta(W)$ is conformal,
  i.e.\ $t \in W$ and for every variable $x$ of $S$ at most one vertex
  $x_\alpha$ is not in $W$.
  We say that $x$ is \emph{$W$-decided} if there is such a class and
  otherwise we say that $x$ is \emph{$W$-undecided}.
  Let $\CCC_0$ be the set
  of all components of $G[W]$ and let $\CCC^-$ be the subset of $\CCC_0$
  consisting of all components $C$ such that either:
  \begin{itemize}
  \item $t \in C$,
  \item $C$ contains $x_\alpha$ and $x_{\alpha'}$ for some variable
    $x$ of $S$ and two distinct classes $\alpha$ and $\alpha'$, or
  \item $C$ contains $x_\alpha$ for some $W$-decided variable $x$ of $S$.
  \end{itemize}
  Because $Y'$ has to be a conformal $sA$-cut, no
  component in $\CCC^-$ can be reachable from $s$ in $G-Y'$ so $\delta(\CCC^-)\subseteq Y'$, where $\delta(\CCC^-)=\bigcup_{C \in
    \CCC^-}\delta(C)$. Consequently, if $|\delta(\CCC^-)|>2q$, then
  the algorithm can correctly reject the choice of $W$ by returning
  $\cY=\emptyset$. Otherwise,
  every component $C$ in $\CCC_1=\CCC_0\setminus \CCC^-$ satisfies:
  \begin{itemize}
  \item $t \notin C$,
  \item $C$ does not contain $x_\alpha$ for any $W$-decided variable $x$
    of $S$,
  \item $C$ contains at most one vertex $x_\alpha$ for every $W$-undecided
    variable $x$ of $S$.
  \end{itemize}
  Next, the algorithm branches on which components in
  $\CCC_1$ are separated from $s$ in a solution. This
  part of the algorithm, which is given in Algorithm~\ref{alg:branchundecided},
  returns the set $\cY$ of conformal cuts and
  is called with \textsc{branchUndecided}($\CCC_1$, $k'=k-q$, $b=2q-|\delta(\CCC^-)|$, $V_1=V(G) \setminus W$, $V_0=\bigcup \CCC^-$).
  It is a recursive algorithm, where the first argument is a set of pending components of $G[W]$
  (where the choice whether to separate them from $s$ has not been taken yet),
  the second and third are the remaining budgets from the split $k=k'+q$ of $k$ that we were called with,
  and the last two are sets $V_1, V_0 \subseteq V(G)$ of vertices that we have committed to not separating, respectively separating from $s$ in the current branch. 
  The algorithm branches until $\delta(V_0)$ is a conformal cut, at which point
  it hands over to \textsc{branchUnSatisfied}, given in Algorithm~\ref{alg:branchunsat},
  for the last step of the branching (which is where the argument $k'$ is required).
  The main branching step is in line~\ref{algline:undecided-pair}: it selects two components
  $C_1$, $C_2$ from the pending set of components which conflict with each other,
  in the sense that there is a variable $x$ and distinct classes $\alpha, \beta \in \Gamma^{\neq 0}$
  such that $x_\alpha \in C_1$ and $x_\beta \in C_2$, and branches on
  (1) separating $C_1$ from $s$, or (2) not separating $C_1$, in which case
  it must separate $C_2$ from $s$. In the second branch, it can happen
  that there are further components $C' \in \CCC$ that conflict with $C_1$, 
  which are then forced to be deleted in the recursive call
  by the check on line~\ref{algline:undecided-single}.
  We make two quick observations about the algorithm's behaviour. 
  
  \begin{claim} \label{claim:alg-gives-conformal}
    If a call of \textsc{branchUndecided} reaches line~\ref{algline:undecided-post},
    with arguments $(\CCC, k', b, V_1, V_0$),
    then $\delta(V_0)$ is a conformal cut of cost $|\delta(V_0)|=2q-b$.
  \end{claim}
  \begin{claimproof}
    We claim the following invariants of all calls to \textsc{branchUndecided}:
    $t \in V_0$, $s \in V_1$, and $\delta(V_1)$ is a conformal cut. 
    This is true in the initial call from \textsc{branch},
    since initially $V_1=V(G) \setminus W$ and these properties has been verified
    in \textsc{branch}. Furthermore, we only place a component $C$ in
    $V_1$ under the check on line~\ref{algline:undecided-pair},
    and at this point it has been verified that $C$ does not conflict with $V_1$
    in line~\ref{algline:undecided-single}. Thus the invariant holds.
    Now assume for a contradiction that we reach line~\ref{algline:undecided-post}
    but $\delta(V_1)$ is not a conformal cut. Since $t \in V_1$
    and $s \in V_0$, this must imply that there are two vertices
    $x_\alpha$, $x_\beta$ for some variable $x \in V(S)$ 
    which are not in $V_0$. At most one of them is in $V_1$ by the invariant.
    But then either (w.l.o.g.) $x_\alpha \in V_1$ and $x_\beta \in C$
    for some $C \in \CCC$, in which case line~\ref{algline:undecided-single} applies,
    of $x_\alpha \in C_1$, $x_\beta \in C_2$ for some $C_1, C_2 \in \CCC$,
    in which case line~\ref{algline:undecided-pair} applies.
    Thus $\delta(V_0)$ is a conformal cut.
    The final statement follows since $\delta(V_0)$ is the disjoint union of $\delta(C)$
    for all components $C$ placed in $V_0$, and the way the algorithm traces the $b$-variable. 
  \end{claimproof}

  \begin{claim} \label{claim:alg-finds-V0}
    Let $\CCC' \subseteq \CCC$ be a set of components such that $Y'=\delta(\CCC')$
    is a conformal cut with $|Y'|\leq 2q$. Then there is a call made to \textsc{branchUndecided}
    which reaches line~\ref{algline:undecided-post}
    with values $V_0, V_1$ such that $\delta(V_0) \subseteq Y'$ and $\delta(V_1) \cap Y' = \emptyset$.
  \end{claim}
  \begin{claimproof}
    Let $V_Y=\bigcup \CCC'$ be the set of vertices separated from $s$ by $Y'$.
    We claim, stepping recursively through all calls to \textsc{branchUndecided},
    that we can maintain the invariants $V_0 \subseteq V_Y$, $V_1 \cap V_Y=\emptyset$
    and $b \geq 2q-|Y'|$ in at least one branch of the calling tree. For the initial call,
    the components $\CCC^-$ from line~\ref{algline:ccc-} of \textsc{branch} must satisfy $\CCC^- \subseteq \CCC'$
    since $Y'$ is conformal, so $V_0 \subseteq V_Y$ holds in the root call to \textsc{branchUndecided}.
    Furthermore $V_1=V(G) \setminus W$ in this call is disjoint from $V_Y$.
    Consider then some instance of \textsc{branchUndecided} in which these invariants hold,
    and assume first that the check on line~\ref{algline:undecided-single} applies
    for some $C \in \CCC$. Then $C$ conflicts with $V_1$, and since $V_1 \cap V_Y = \emptyset$ by assumption,
    we must have $C \subseteq V_Y$. Hence the invariant is maintained.
    Next, assume that line~\ref{algline:undecided-pair} applies for some pair $C_1, C_2 \in \CCC$.
    If $C_1 \subseteq V_Y$, then the invariant is maintained in the first recursive call made.
    Otherwise $C_1 \cap V_Y = \emptyset$, and $C_2 \subseteq V_Y$ since $Y'$ is conformal,
    hence the invariant is maintained in the second call. 
    The invariant for the value of $b$ follows from the value of $V_0$. 
    Hence, there is a call where line~\ref{algline:undecided-post} of \textsc{branchUndecided}
    is reached with the invariant intact. 
  \end{claimproof}

  Next, we consider the work of the algorithm after the point where $\delta(V_0)$ is a conformal cut, 
  where it decides whether to make further cuts in the instance or not.
  For this step, say that a component $C \in \CCC$ is \emph{self-satisfiable} if the
  equations of $S$ corresponding to edges of $G[C]$ 
  can be satisfied using an assignment that assigns every variable $x$
  with vertex $x_\alpha$ in $C$ to a value in $\alpha$; note that
  because $C \in \CCC_1$ every variable $x$ with a vertex $x_\alpha$
  in $C$ has exactly one such vertex $x_\alpha$ in $C$ and therefore
  $\alpha$ is uniquely defined. Let \textsc{GetUnSatisfied$(\CCC)$} on
  line~\ref{algline:undecided-post} return the set $\CCC_U$ of all components $C$ from $\CCC$
  that are not self-satisfiable. To check whether $C \in \CCC$ is self-satisfiable,
  we check $\lin{2}{\ZZ_{p^n}}$ on the instance $S_C$ consisting of all equations
  from $S$ between variables represented in $C$, together with,
  for every vertex $x_\alpha$ in $C$, a constraint forcing $x \in \alpha$.
  This can be done as follows. Let $\alpha$ contain those elements from $\ZZ_{p^n}$
  which can be written as $ap^b + cp^{b+1}$ for $a \in [p-1]$, $b \in [n]$ and $c \in \ZZ$.
  Every class $\alpha$ can be written this way. 
  Then $x \in \alpha$ corresponds to the unary constraint $p^{n-b-1}x=ap^{n-1}$.
  Satisfiability of the resulting system can be checked in polynomial time;
  see, e.g.~\cite[p. 473]{Arvind:Vijayaraghavan:stacs2005}). 
  Hence \textsc{GetUnSatisfied} runs in polynomial time.
  Finally, \textsc{branchUnSatisfied} simply branches over all components of $\CCC_U$
  as being separated from $s$ or not, while tracking the budgets $k'$ and $b$.
  (In particular, if $C \in \CCC_U$ and $\delta(C) \not \subseteq Y$, 
  then every solution corresponding to a conformal cut $Y \subseteq \delta(W)$
  must delete an equation represented in $C$.)
  
  This completes the description. Let us consider correctness. The output $\cY$ of \textsc{branch}
  contains only cuts $Y$ with $|Y| \leq 2q$ by design. Furthermore, any such cut is conformal
  by Claim~\ref{claim:alg-gives-conformal}, since cuts $Y$ are produced only in \textsc{branchUnSatisfied}
  after line~\ref{algline:undecided-post} of \textsc{branchUndecided} has been passed.
  Let $Z$ be a solution to $S$ of cost at most $k$ and with $|Z\setminus \comp{Z}|=q$.
  We need to show that assuming $W$ was constructed correctly,
  the output of \textsc{branch}$(S, k, q)$ contains a cut $Y$
  such that $S-(\comp{Z} \cup \con(Y))$ has a satisfying assignment that agrees with $Y$.
  Let $Y' \subseteq \delta(W)$ be the cut guaranteed by Lemma~\ref{lem:our-shadow-cover}.
  Let $V_Y$ be the vertices separated from $s$ by $Y'$.
  By Claim~\ref{claim:alg-finds-V0}, there is a branch in which line~\ref{algline:undecided-post}
  of \textsc{branchUndecided} is reached with $V_0 \subseteq V_Y$,
  and by Claim~\ref{claim:alg-gives-conformal} any output $Y$ from this point onwards is a conformal cut. 
  Furthermore, for every $C \in \CCC_U$ 
  either $Y'$ separates $C$ from $s$ or else $\comp{Z}$ contains an equation from within $C$. 
  It follows that we can step through \textsc{branchUnSatisfied} such that 
  we get a cut $Y=\delta(V_0)$ such that $V_0 \cap (\bigcup \CCC_U)=V_Y \cap (\bigcup \CCC_U)$
  while keeping $k', b \geq 0$.
  Let $Y$ be such a cut. We have $Y \subseteq Y'$, and if $Y \neq Y'$, then the difference
  is accounted for by additional self-satisfiable components $C$ separated from $s$ by $Y'$. 
  We define an assignment $\varphi'$
  as follows. We begin with $\varphi'(x)=\varphi(x)$ for any vertex $x_\alpha$ reachable
  from $s$ in $G-Y'$, and $\varphi'(x)=0$ for any variable $x$ with no vertex reachable from $s$ in $G-Y$.
  For the rest, let $C$ be a self-satisfiable component reachable from $s$ in $G-Y$
  and let $\varphi_C$ be a satisfying assignment to $C$ that agrees with $Y$. 
  Let $C_Z \subseteq C$ be the vertices in $C$ whose variables are zero in $\varphi$. 
  We let $\varphi'$ match $\varphi$ on $C \setminus C_Z$ and match $\varphi_C$ on $C_Z$. 
  Then $\varphi'$ is an assignment that agrees with $Y$. Let $Z'$ be the equations
  violated by $\varphi'$. Then $Z' \setminus Z \subseteq \con(Y)$: the only ways we have modified
  $\varphi$ is to set some variables to $0$ (due to Lemma~\ref{lem:our-shadow-cover}
  pushing $Z$ to cut closer to $s$), which is handled by $Y$,
  and to change $C_Z$ away from $0$, but this only affects edges in $\delta(C_Z)$
  which are already violated by $\varphi$. 
  Let $Z_1 \subseteq Z$ be all equations in $Z$ not handled by $Y$, i.e.\ 
  all equations with at least one edge for which both endpoints are reachable from $s$ in $G-Y$.
  Then $Z_1 \cup \con(Y)$ is a solution, and $|Y| \leq 2|Z \setminus Z_1| \leq 2q$.

  The running time consists of $O^*(4^{2q})$ time for the call to Lemma~\ref{lem:our-shadow-cover},
  followed by a branching procedure which takes polynomial time for every branching step,
  branches into at most two directions each time, and has depth at most $k'+2q \leq 2k$. 
  Hence the time is bounded as $O^*(2^{O(k)})$. The claim $|cY| = 2^{O(k)}$ follows from this.
  Finally, the only randomness of the algorithm is the call to Lemma~\ref{lem:our-shadow-cover},
  hence success probability is $2^{O(q^2)}$ by this lemma.
\end{proof}

\section{Hardness of \FPT-Approximation}
\label{sec:hardness}

We complement the approximation algorithm
with hardness results:
we prove (1) that for finite, commutative, non-trivial
rings $R$,
$\minlin{r}{R}$ is \W{1}-hard to FPT-approximate within any constant when $r \geq 3$
and (2) the existence of finite commutative rings $R$ such that
$\minlin{2}{R}$ is \W{1}-hard to FPT-approximate within any constant.

Let $G$ denote an arbitrary Abelian group.
An expression $x_1+\dots+x_r=c$ is an {\em equation over} $G$ if
$c \in G$ and $x_1,\dots,x_r$ are either variables or inverted variables with domain $G$.
We say that it is an {\em $r$-variable equation} if it contains at most $r$
distinct variables. We sometimes consider the natural group-based variants of
the $\lin{3}{R}$ and $\minlin{r}{R}$ problems in what follows.
We begin by proving the following result:
$\minlin{3}{G}$ is \W{1}-hard to FPT-approximate within any constant when $G$ is a
non-trivial Abelian group. This directly implies that $\minlin{3}{R}$ is 
\W{1}-hard to FPT-approximate within any constant when $R$ is a non-trivial ring
since $(R, +)$ is an Abelian group.
This result strengthens two previously known hardness results:
(I) $\minlin{3}{G}$ is \W{1}-hard and (II) $\minlin{3}{G}$ is \NP-hard to
approximate within any constant.
Result (I) follows immediately from
Theorem 6.1 in \cite{Dabrowski:etal:soda2023}.
Result (II) can be derived as follows.
Håstad~\cite{haastad2001some} proves that for every $\eps > 0$, it is \NP-hard to distinguish instances of
$\lin{3}{G}$
(where $G$ is a non-trivial Abelian group), that are $(1/|G| + \eps)$-satisfiable from those that
are $(1-\eps)$-satisfiable.
Assume that $\minlin{3}{G}$ is $c$-approximable in polynomial time for some constant $c$. 
Pick $\eps > 0$ such that $(1-c \cdot \eps) \geq (1/|G| + \eps)$. Take an instance $S$ of $\lin{3}{G}$, set $k = \eps \cdot |S|$, run the factor-$c$ approximation algorithm on $S$, and deduce
the correct answer from the output of the algorithm. 
The starting point for our reduction is the following problem
and the corresponding hardness result.
For a vector $x \in \FF^{n}$ in a field $\FF$,
the \emph{Hamming norm} $||x||_0$ is 
the number of nonzero components in $x$.
In the \textsc{Maximum Likelihood Decoding Problem over $\FF_p$} (\MLD{}),
we are given a matrix $A \in \FF^{n \times m}_p$ and a vector $b \in \FF^m_p$,
and the goal is to
find $x \in \FF_p^m$ of minimum Hamming weight such that $Ax = b$.

\begin{theorem}[Theorem~5.1~in~\cite{bhattacharyya2021parameterized}]
  \label{thm:gapmld-is-hard}
  For every prime $p$, \MLD{} parameterized by 
  the minimum Hamming weight of the solution 
  is \W{1}-hard to approximate within any constant factor.
\end{theorem}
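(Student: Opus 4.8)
The plan is to prove, for every constant $\gamma>1$, \W{1}-hardness of the promise version of \MLD{} that asks to distinguish instances whose minimum solution-weight is $\le k$ from those whose minimum solution-weight is $>\gamma k$, parameterized by $k$; this immediately gives inapproximability within every constant factor. Recall that an \MLD{} instance is an affine $\FF_p$-system $Ax=b$ together with the objective of minimising $\norm{x}_0$ (equivalently, by writing $x=x_0+y$ with $x_0$ a fixed solution of $Ax_0=b$ and $y\in\ker A$, this is the Nearest Codeword Problem). So it suffices to construct such a system whose least-weight solution has weight exactly $k$ on ``yes'' inputs and more than $\gamma k$ on ``no'' inputs.

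For the source, I would take a \W{1}-hard \textsc{Label Cover} instance (a.k.a.\ \textsc{MaxCover}) with perfect completeness and arbitrarily small constant soundness: left groups $U_1,\dots,U_k$ (variables, each to receive one label), right groups $W_1,\dots,W_M$ (constraints), projection maps, and the promise that either some labelling satisfies all $M$ constraints, or every labelling satisfies at most $\varepsilon M$ of them, with $k$ as the parameter. Such hardness is available from the parameterized PCP machinery -- for instance by starting from Lin's constant-factor \W{1}-inapproximability of $k$-\textsc{Clique} and applying standard gap-amplifying compositions -- and $\varepsilon$ can be made smaller than any prescribed constant.

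The heart of the argument is a gadget turning this gap label-cover instance into an affine $\FF_p$-system. I would use one block of coordinates per group. The $U_i$-block is a $p$-ary indicator that selects one label of $U_i$, constrained by affine equations -- e.g.\ ``the coordinates of the block sum to $1$'' -- to be a legal indicator and hence nonzero. The $W_j$-block holds ``differences'' of the labels that the chosen $U$-indicators project onto $W_j$, with the right-hand side arranged so that this block is forced to be zero exactly when constraint $W_j$ is satisfied by those labels, and forced to carry at least one nonzero coordinate otherwise; the system is always feasible. Then every solution has weight at least $k$ (one per $U_i$-block), a labelling satisfying all constraints lifts to a solution of weight exactly $k$, and a solution that ``encodes'' a labelling violating $t$ constraints has weight at least $k+t$. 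Combined with soundness (and the rigidity property discussed below), the least weight is $k$ on yes-inputs and at least $k+(1-\varepsilon)M$ on no-inputs; duplicating the constraints if necessary makes $(1-\varepsilon)M/k$ exceed $\gamma$. The construction runs in $f(k)\cdot n^{O(1)}$ time and produces parameter $k'=O(k)$, so it is a valid FPT-reduction, and ranging over all $\gamma$ proves the theorem.

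The step I expect to be the main obstacle is \emph{rigidity}: the affine system is a single global object, and I must rule out cheap ``accidental'' solutions that do not encode a legal labelling -- for example using illegal (higher-weight) indicator patterns, or exploiting linear dependencies among the constraints to be slightly off on many $W_j$-blocks ``for free'' -- rather than paying a nonzero coordinate per violated constraint. The standard remedy is to concatenate each block with a good error-correcting code, so that any deviation from a legal pattern costs an $\Omega(1)$ \emph{fraction} of its block, and to argue separately that legal $U_i$-indicators cannot be spoofed more cheaply than by an honest labelling; this robustness analysis is where the real work lies. A secondary caveat is provenance: the constant-soundness \W{1}-hard label-cover source is itself non-elementary and rests on recent parameterized-PCP-type results.
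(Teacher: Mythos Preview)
The paper does not prove this theorem at all: it is imported verbatim as Theorem~5.1 of Bhattacharyya et al.\ and used as a black box starting point for the reductions in Section~\ref{sec:hardness}. So there is no ``paper's own proof'' to compare against---your proposal is a sketch of a proof for a result the authors simply cite.

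For what it is worth, your outline is in the right spirit for how the cited result is actually obtained (gap \textsc{Label Cover}/\textsc{MaxCover} from parameterized PCP-type machinery, then an algebraic encoding into an affine system over $\FF_p$), and you correctly identify rigidity of the encoding as the crux. But none of this belongs in the present paper, which only needs the statement.
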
 


Before we start the reduction, let us recall some facts about Abelian groups.
A \emph{cyclic} group is generated by a single element and 
every finite cyclic group $C_n$ of order $n$ is isomorphic to the additive group of ${\mathbb Z}_n$.
The \emph{direct sum} of two groups
$\GG_1 = (G_1, +_1)$ and $\GG_2 = (G_2, +_2)$ 
is a group $\GG_1 \oplus \GG_2$ with pairs 
$\{ (g_1, g_2) \mid g_1 \in G_1, g_2 \in G_2 \}$ as elements,
and with group operation applied componentwise,
i.e.\ $(g_1, g_2) + (g'_1, g'_2) = (g_1 +_1 g'_1, g_2 +_2 g'_2)$.
Cyclic groups are the building blocks of
more complex Abelian groups: the fundamental theorem of finite Abelian groups asserts that
every finite Abelian group is a direct sum of cyclic groups whose orders are prime powers.
Hence,
we start by proving hardness of $\minlin{3}{C_p}$, where $p$ is prime.
For further convenience, we define a restricted class of instances 
and prove hardness for this case.

\begin{definition}
  An instance of $\minlin{3}{G}$
  is \emph{almost homogeneous} if every equation in it is of the form
  \begin{itemize}
    \item $x + y + z = 0$ and crisp, or
    \item $x = a$ for some $a \in G$ and crisp, or
    \item $x = 0$.
  \end{itemize}
\end{definition}

\begin{lemma} \label{lem:min3lin-special}
  For every prime $p$,
  $\minlin{3}{C_p}$
  is \W{1}-hard to approximate within any constant factor 
  even when restricted to almost homogeneous instances.
\end{lemma}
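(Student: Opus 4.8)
The plan is to reduce from \MLD{} over $\FF_p$, which by Theorem~\ref{thm:gapmld-is-hard} is \W{1}-hard to approximate within any constant factor when parameterized by the minimum Hamming weight of a solution. Given an instance $(A,b)$ with $A \in \FF_p^{n\times m}$ and $b \in \FF_p^m$, I would build an almost homogeneous instance $S$ of $\minlin{3}{C_p}$ (identifying $C_p$ with the additive group of $\ZZ_p$) whose minimum cost equals $\min\SB \norm{x}_0 \SM Ax=b\SE$, while preserving the parameter exactly.

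First I would eliminate coefficients. For each entry $a_{ij}\notin\{0,1\}$ of $A$, introduce a fresh variable $w_{ij}$ together with a gadget of crisp homogeneous ternary equations forcing $w_{ij}=a_{ij}x_j$: since $p$ is a fixed prime, $a_{ij}$ is a constant, so $a_{ij}x_j$ can be realised as an iterated sum $x_j+\dots+x_j$ via a chain of $O(p)$ auxiliary variables and crisp equations $t_{\ell-1}+x_j-t_\ell=0$ (inverted variables are permitted in our equations; if one insists on strictly all‑positive ternary equations, a sign flip $t'=-t$ is itself enforced by $t+t'+o=0$ with the crisp unary equation $o=0$). After this step each row $i$ of $Ax=b$ becomes an equation $\sum_j \pm u^{(i)}_j = b_i$ with all coefficients $\pm 1$, where $u^{(i)}_j$ is $x_j$ or $w_{ij}$. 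I would then chop up each such row into crisp ternary equations by the standard trick recalled in the introduction, using fresh partial‑sum variables, and express the constant $b_i$ as a single crisp unary equation on the last partial‑sum variable. Finally, for every original variable $x_j$ with $j\in[m]$, add a \emph{soft} equation $x_j=0$; these are the only soft equations of $S$. By construction $S$ is almost homogeneous and is produced in polynomial time.

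For correctness I would show $\mincost(S)=\min\SB\norm{x}_0\SM Ax=b\SE$. Since crisp equations cannot be deleted, a solution $Z$ to $S$ is a subset of the soft equations $\{x_j=0\}$ with $S-Z$ consistent. Summing the chopping equations of row $i$ cancels all auxiliary variables and yields exactly $\sum_j a_{ij}x_j=b_i$, and each coefficient gadget functionally determines $w_{ij}$ from $x_j$; hence an assignment satisfies all crisp equations of $S$ if and only if its restriction $x=(x_1,\dots,x_m)$ satisfies $Ax=b$, in which case it extends (uniquely on the auxiliary variables) to a satisfying assignment of all crisp equations. Thus $S-Z$ is consistent iff $Ax=b$ has a solution supported within $\{j:(x_j=0)\in Z\}$, so any crisp‑respecting assignment violates precisely $\norm{x}_0$ soft equations and $\mincost(S)=\min\SB\norm{x}_0\SM Ax=b\SE$. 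Consequently, weight $\le k$ maps to $\mincost(S)\le k$ and weight $>ck$ maps to $\mincost(S)>ck$; since the parameter is preserved, a factor‑$c$ \FPT‑approximation algorithm for $\minlin{3}{C_p}$ on almost homogeneous instances would decide the gap version of \MLD{}, contradicting Theorem~\ref{thm:gapmld-is-hard}.

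The main thing to get right is the coefficient‑multiplication gadget together with the chopping: one must verify that satisfying the crisp part of $S$ is \emph{exactly} equivalent to the original variables satisfying $Ax=b$ — i.e.\ that the auxiliary variables create no spurious satisfying assignments and are always extendable — and that every equation produced genuinely fits the almost‑homogeneous template, in particular that non‑unit coefficients and the constants $b_i$ have been fully pushed out of the ternary equations. The remaining points (polynomial running time, exact preservation of the parameter) are routine, and because the reduction transfers the gap of Theorem~\ref{thm:gapmld-is-hard} verbatim it yields not merely \W{1}-hardness of exact solvability but hardness of constant‑factor \FPT‑approximation.
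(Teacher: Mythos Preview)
Your proposal is correct and follows essentially the same approach as the paper: reduce from \MLD{} over $\FF_p$, encode each row of $Ax=b$ by crisp homogeneous ternary equations via the standard chopping trick (using a constant variable with a crisp unary equation to supply $b_i$ and to pad), and make the only soft equations be $x_j=0$ so that $\mincost(S)$ equals the minimum Hamming weight. The paper merges your coefficient-elimination and chopping steps into one by directly repeating each $x_j$ exactly $a_{ij}$ times in the row sum before chopping, and it handles negation with explicit inverse variables $\bar u_\ell$ forced by $u_\ell+\bar u_\ell+z_0=0$ just as you sketch; these are cosmetic differences, and both yield a polynomial-time, parameter-preserving, gap-preserving reduction.
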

\begin{proof}
  Let $(A, b)$ be an instance of \MLD{} and let $k$ be the parameter.
  As usual, we let $a_{ij}$ denote the entry in row $i$ and column $j$ of $A$.
  Construct an almost homogeneous instance of $(S, k)$ of 
  $\minlin{3}{C_p}$ as follows.
  Start by introducing primary variables $x_1, \dots, x_n$ to $V(S)$,
  and adding soft constraints $x_i = 0$ for all $i \in [n]$.
  For every $a \in \range{0}{p-1}$,
  introduce auxiliary variables $z_a$ and
  add crisp equations $z_a = a$ to $S$.
  Consider a row equation $\sum_{j=1}^{n} a_{ij} x_{j} = b_{i}$.
  Since we are working over the field $\ZZ_p$,
  this equation can be written as
  \begin{equation} \label{eq:row-equation-additive}
    \underbrace{x_{1} + \dots + x_{1}}_{a_{i1} \textrm{ times}} 
    + \dots +
    \underbrace{x_{n} + \dots + x_{n}}_{a_{in} \textrm{ times}}
     - b_{i} = 0 \mod p.
  \end{equation}
  Let $L = \sum_{j=1}^{n} a_{ij}$ and
  define $s_1, \dots, s_L$,
  where $s_1 = \dots = s_{a_{i1}} = x_1$,
  $s_{a_{i1}+1} = \dots = s_{a_{i2}} = x_2$,
  and so on.
  Note that $\sum_{\ell=1}^{L} s_\ell = \sum_{j=1}^{n} a_{ij} x_j$.
  To express~\eqref{eq:row-equation-additive} 
  as a system of homogeneous $3$-variable equations,
  we introduce auxiliary variables
  $u_1, \bar{u}_1 \dots, u_L, \bar{u}_L$, and
  add crisp equations
  $u_\ell + \bar{u}_{\ell} + z_{0} = 0$ for all $\ell$.
  Note that these constraints force $u_\ell$ and $\bar{u}_{\ell}$ 
  to be additive inverses modulo $p$.
  Add crisp equations
  \begin{equation} \label{eq:chop}
    \begin{aligned}
      z_0 + s_1 + u_1 &= 0 \mod p, \\
      \bar{u}_{\ell-1} + s_{\ell} + u_{\ell} &= 0 \mod p \quad \text{ for } \ell \in \range{2}{L-1}, \\
      \bar{u}_{L-1} + s_{L} + z_{-b_i} &= 0 \mod p.
    \end{aligned}
  \end{equation}
  Observe that the sum of the equations above, after cancellation of auxiliary variables,
  implies $\sum_{\ell=1}^{L} s_\ell + z_{-b_i} = \sum_{i=1}^{n} a_{ij} x_i - b_i = 0$.
  Applying the same reduction to all rows of $A$,
  we construct $(S, k)$ in polynomial time.

  For correctness, first assume that $(A, b)$ is a yes-instance,
  i.e.\ there exist $x_1, \dots, x_n \in \ZZ_p$
  that satisfy row equations $\sum_{j=1}^{n} a_{ij} x_{j} = b_{i}$ for all $i \in [n]$
  and at most $k$ values in $(x_1, \dots, x_n)$ are nonzero.
  Observe that it can be extended to auxiliary variables $u_i$, $\bar{u}_i$
  in a unique way.
  The obtained assignment satisfies all crisp and all but $k$ soft equations $x_i = 0$.

  For the other direction, suppose $(S, k)$
  admits an assignment $\alpha$ that 
  violates at most $c \cdot k$ constraints
  for some constant $c$.
  Define vector $\alpha(x) = (\alpha(x_1), \dots, \alpha(x_n))$.
  Since all soft constraints in $S$ are of the form $x_i = 0$, 
  we have $||\alpha(x)||_0 \leq c \cdot k$.
  By construction of $S$, we have $A \alpha(x) = b$.
  Hence, if $\minlin{3}{C_p}$ admits an $O^*(f(k))$-time $c$-approximation,
  then \MLD{} also admits an $O^*(f(k))$-time $c$-approximation as well.
\end{proof}

\begin{theorem} \label{thm:gap-min3lin(group)-is-hard}
  For every nontrivial Abelian group $G$,
  $\minlin{3}{G}$ is \W{1}-hard
  to approximate within any constant factor.
\end{theorem}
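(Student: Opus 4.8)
The idea is to bootstrap from Lemma~\ref{lem:min3lin-special} via two reductions, each of which preserves both the parameter and the approximation ratio exactly: first lift hardness from $C_p$ to a cyclic group $C_{p^e}$ of prime-power order, and then, using the fundamental theorem of finite Abelian groups, transfer hardness from a cyclic group of prime-power order to an arbitrary nontrivial (finite) Abelian group $G$ by passing to a direct summand. Since each reduction maps yes-instances to yes-instances and keeps the optimum cost unchanged, a factor-$c$ FPT-approximation for the target would yield one for the source; together with Lemma~\ref{lem:min3lin-special} and $\FPT \neq \W{1}$ this gives the theorem.

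\textbf{From $C_p$ to $C_{p^e}$.} Fix a prime power $q = p^e$ with $e \geq 2$ (the case $e=1$ is Lemma~\ref{lem:min3lin-special}). Here one exploits that $C_q$ is cyclic: its set of $p$-torsion elements is \emph{exactly} the unique subgroup $H \leq C_q$ of order $p$, and there is a canonical isomorphism $\phi \colon C_p \to H$ (namely $c \mapsto c\cdot p^{e-1}$). Given an almost homogeneous instance $(S,k)$ of $\minlin{3}{C_p}$, I build $(S',k)$ over $C_q$ by: (i) for every variable $x$ of $S$, adding a block of crisp homogeneous equations, each with at most three distinct variables, expressing that $x$ has order dividing $p$ (i.e.\ $x + \dots + x = 0$ with $p$ summands) --- a single equation suffices when $p \leq 3$, and for $p \geq 5$ one uses the ``chopping'' of a long sum into ternary equations already employed in Lemma~\ref{lem:min3lin-special}; (ii) keeping every crisp equation $x+y+z=0$ and every soft equation $x=0$ verbatim; and (iii) replacing each crisp equation $x=c$, $c \in C_p$, by the crisp equation $x=\phi(c)$. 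A solution $\beta$ of $S$ lifts to a solution of $S'$ of the same cost via $x \mapsto \phi(\beta(x))$, propagated to the block auxiliaries. Conversely, any finite-cost solution $\gamma$ of $S'$ satisfies all crisp equations, hence all the torsion blocks, so $\gamma$ maps every original variable into $H$; then $\phi^{-1} \circ \gamma$ restricted to $V(S)$ solves $S$ with the same cost, using that $\phi^{-1}$ is a homomorphism on $H$ to handle the equations $x+y+z=0$. Hence $\minlin{3}{C_q}$ is $\W{1}$-hard to approximate within any constant factor.

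\textbf{From $C_q$ to $G$.} Write $G \cong C_{q_1} \oplus \dots \oplus C_{q_t}$ with each $q_i$ a prime power; nontriviality gives $t \geq 1$ and $q_1 \geq 2$, and $C_{q_1}$ is a direct summand, $G \cong C_{q_1} \oplus G'$. Let $\iota \colon C_{q_1} \hookrightarrow G$ and $\pi \colon G \twoheadrightarrow C_{q_1}$ be the inclusion into and projection onto the first summand, so $\pi \circ \iota = \mathrm{id}$. From an instance $(S,k)$ of $\minlin{3}{C_{q_1}}$, form $(S',k)$ over $G$ by replacing every constant $c$ with $\iota(c)$ and leaving the equation types and the soft/crisp status unchanged. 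A solution $\beta$ of $S$ gives the solution $\iota\circ\beta$ of $S'$ of the same cost. Conversely, for a solution $\gamma$ of $S'$, the assignment $\pi\circ\gamma$ satisfies every equation $x+y+z=0$ and every equation $x=\iota(c)$ that $\gamma$ satisfies (since $\pi$ is a homomorphism and $\pi\circ\iota=\mathrm{id}$), and every soft equation $x=0$ that $\gamma$ satisfies; hence $\cost(\pi\circ\gamma) \leq \cost(\gamma)$. So a factor-$c$ approximation for $\minlin{3}{G}$ yields one for $\minlin{3}{C_{q_1}}$, which proves the statement when $G$ is finite (the case relevant to our ring applications, including Theorem~\ref{thm:hard}).

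\textbf{The main obstacle.} I expect the prime-power step to be the crux. It is the only point where the cyclic structure of the target is essential, and one must verify carefully that the crisp torsion block pins each variable to the subgroup $H$ and imposes nothing else --- which is precisely what fails for a non-cyclic target such as $C_p \oplus C_p$, where $p\cdot x = 0$ holds for \emph{every} element; this is why the final step routes through a cyclic \emph{direct summand} rather than an arbitrary cyclic subgroup. The remaining work is routine, but it is worth double-checking that all introduced equations have at most three distinct variables and that the reductions run in polynomial time (each torsion block has $O(p)=O(1)$ equations), so that they are genuine parameterized reductions.
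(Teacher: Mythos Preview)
Your proposal is correct and rests on the same core idea as the paper's proof: embed $C_p$ as the order-$p$ subgroup of a cyclic direct summand $C_{p^\ell}$ of $G$, then verify that costs are preserved in both directions. The execution differs slightly. You split the argument into two reductions --- first $C_p \to C_{p^e}$ by adding crisp torsion constraints $px=0$ (chopped into ternary pieces), then $C_{p^e} \to G$ via the inclusion/projection of a direct summand --- whereas the paper does it in a single reduction $C_p \to G$: it introduces, for each variable $v$, a pair $v',v''$ with crisp equations encoding $v'' = p^{\ell-1} v'$, uses the $v''$ variables in the translated ternary and unary equations, and in the backward direction projects $v'$ onto the first summand. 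Your torsion gadget forces the original variable into $p^{e-1} C_{p^e}$ directly; the paper's multiplication gadget achieves the same effect on the auxiliary $v''$. Your two-step decomposition is a bit more modular and makes the role of the direct-summand projection explicit, while the paper's single reduction is more compact; neither buys anything substantive over the other.
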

\begin{proof}
Assume without loss of generality that
$G \cong \bigoplus_{i=1}^{t} C_{p_i^{\ell_i}}$ and set
$p = p_1$ and $\ell = \ell_1$.
We present an \FPT-reduction from $\minlin{3}{C_p}$
to $\minlin{3}{G}$. 
Due to Lemma~\ref{lem:min3lin-special}, 
it is sufficient to consider an almost homogeneous 
instance $(S, k)$ of $\minlin{3}{C_p}$.
Create an instance $(S', k)$ of $\minlin{3}{G}$ as follows.
Let $V(S')$ contain variable $v', v''$ for every $v \in V(S)$.
Add crisp equations equivalent to
\begin{equation} \label{eq:multiplicity}
  \underbrace{v' + \dots + v'}_{p^{\ell-1}} = v''    
\end{equation}
using the same reduction from many summands to three as
in Equation~\eqref{eq:chop} in the proof of Lemma~\ref{lem:min3lin-special}.
For every equation of the form $x + y + z = 0$ in $S$,
add $x'' + y'' + z'' = 0$ to $S'$.
For every equation of the form $x = a$ in $S$,
add $x'' = (p^{\ell-1} a, 0, \dots, 0)$ in $S'$.
This completes the construction.

For one direction,
consider an arbitrary assignment $\alpha : V(S) \to C_p$
and define $\alpha' : V(S') \to G$
by setting $\alpha(v')$ equal to the element
$(\alpha(v), 0, \dots, 0)$
and $\alpha(v'')$ to the element 
$(p^{\ell-1} \alpha(v), 0, \dots, 0)$.
Note that $\alpha'$ satisfies equations~\ref{eq:multiplicity}
for all $v', v'' \in V(S')$.
Furthermore, if
$\alpha(x) + \alpha(y) + \alpha(z) = 0 \bmod p$, then
$p^{\ell-1} (\alpha(x) + \alpha(y) + \alpha(z)) = 0 \bmod p^\ell$
and
$\alpha'(x'') + \alpha'(y'') + \alpha'(z'') = 
(0, \dots, 0) = 0$.
Finally, if $\alpha(x) = a$, then
$\alpha'(x'') = (p^{\ell-1} a, 0, \dots, 0)$.
We conclude that $\alpha'$ violates
at most as many equations as $\alpha$.

For the other direction,
consider an arbitrary assignment $\beta' : V(S') \to G$
and define $\beta : V(S) \to C_{p^l}$
by setting $\beta(v)$ to the projection
of $\beta'(v')$ onto the first component.
By the crisp equations~\eqref{eq:multiplicity}, 
the first component of $\beta(v'')$ equals $p^{\ell-1} \beta(v)$.
Thus, if $\beta(x'') = (p^{\ell-1} a, 0, \dots, 0)$, then $\beta(x) = a$.
Moreover, if $\beta(x'') + \beta(y'') + \beta(z'') = 0$,
then the equation holds in the first component,
hence $p^{\ell-1} (\beta(x) + \beta(y) + \beta(z)) = 0 \bmod p^\ell$
and $\beta(x) + \beta(y) + \beta(z) \bmod p$.
Thus, $\beta'$ violates at most as many equations as $\beta$.
\end{proof}

We obtain the following
due to the additive group of every ring being Abelian.

\begin{theorem} 
 \label{thm:min3lin-hard}
 Let $R$ be a non-trivial finite ring.
 $\minlin{r}{R}$ is \W{1}-hard to 
 \FPT-approximate within any constant factor when $r \geq 3$.
\end{theorem}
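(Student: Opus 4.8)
The plan is to derive this theorem as an essentially immediate corollary of Theorem~\ref{thm:gap-min3lin(group)-is-hard}. The key observation is that for a non-trivial finite ring $R$, the additive structure $(R,+)$ is a non-trivial finite Abelian group, so Theorem~\ref{thm:gap-min3lin(group)-is-hard} tells us that $\minlin{3}{(R,+)}$ is \W{1}-hard to approximate within any constant factor.

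First I would spell out the reduction from $\minlin{3}{(R,+)}$ to $\minlin{3}{R}$. An instance of the group problem consists of equations of the form $x_1 + x_2 + x_3 = c$ where $c \in R$ and each $x_i$ is a variable or an inverted variable with domain $R$. Each such equation is directly a linear equation over the ring $R$: group addition is ring addition, an inverted variable $-x_i$ is just the ring element $(-1)\cdot x_i$, and the right-hand side $c$ is a ring constant. We keep the same variable set, the same partition into crisp and soft equations, and the same budget $k$, so this is an identity-style parameterized reduction: the parameter is unchanged and yes-/no-instances (and indeed the optimal cost, hence the gap) are preserved exactly.

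Next, for $r \geq 3$, every length-$3$ equation is trivially a length-at-most-$r$ equation, so an instance of $\minlin{3}{R}$ is verbatim an instance of $\minlin{r}{R}$; again this preserves the parameter and the cost gap. Composing the two reductions, any constant-factor \FPT-approximation algorithm for $\minlin{r}{R}$ would yield one for $\minlin{3}{(R,+)}$, contradicting Theorem~\ref{thm:gap-min3lin(group)-is-hard} under $\FPT \neq \W{1}$.

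There is essentially no hard step here; the only point requiring a moment's care is checking that the reduction from the group formulation faithfully handles inverted variables and constants, but since $R$ is a ring the additive inverse is available and the translation is transparent. I would therefore present the proof in a couple of sentences: note that $(R,+)$ is a non-trivial finite Abelian group, invoke Theorem~\ref{thm:gap-min3lin(group)-is-hard}, observe that group equations of length $3$ are ring equations of length $3$, and observe that length-$3$ equations are length-$\leq r$ equations for all $r \geq 3$.

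\begin{proof}
  Since $R$ is a non-trivial finite ring, $(R,+)$ is a non-trivial finite Abelian group, so by Theorem~\ref{thm:gap-min3lin(group)-is-hard} the problem $\minlin{3}{(R,+)}$ is \W{1}-hard to approximate within any constant factor. Every equation $x_1 + x_2 + x_3 = c$ over the group $(R,+)$ (where some of the $x_i$ may be inverted) is, reading group addition as ring addition and an inverted variable $-x_i$ as $(-1)\cdot x_i$, a linear equation over $R$ with at most three distinct variables; this translation preserves the partition into crisp and soft equations and the budget $k$, and hence preserves the optimal cost. Thus $\minlin{3}{(R,+)}$ reduces to $\minlin{3}{R}$ without changing the parameter or the approximation gap. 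Finally, for any $r \geq 3$, any length-$3$ equation is in particular an equation with at most $r$ variables, so $\minlin{3}{R}$ is a special case of $\minlin{r}{R}$. Composing these observations, a constant-factor \FPT-approximation for $\minlin{r}{R}$ would give one for $\minlin{3}{(R,+)}$, contradicting Theorem~\ref{thm:gap-min3lin(group)-is-hard}.
\end{proof}
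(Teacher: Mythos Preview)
Your proposal is correct and matches the paper's approach exactly: the paper simply remarks that the result follows from Theorem~\ref{thm:gap-min3lin(group)-is-hard} ``due to the additive group of every ring being Abelian,'' and your write-up just unpacks this one-liner with the obvious observations about inverted variables and the trivial inclusion $\minlin{3}{R} \subseteq \minlin{r}{R}$ for $r \geq 3$.
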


We use Theorem~\ref{thm:min3lin-hard} to demonstrate that there exist finite commutative rings
such that $\minlin{2}{R}$ is \W{1}-hard to \FPT-approximate within any constant factor.
Consider the polynomial ring $\ZZ_p[\rx_1,\dots,\rx_m]/(\rx_1^2,\dots,\rx_m^2)$, 
i.e.\ the ring with coefficients from the field $\ZZ_p$ for some prime $p$
and
indeterminates $\rx_1,\dots,\rx_m$ with $\rx_1^2,\dots,\rx_m^2$ factored out.
An element $r \in R$ is thus a sum 
$r_{\sf unit}+\sum_{i=1}^m r_{x_i} \rx_i+\sum_{1 \leq i < j \leq m} r_{x_{ij}}\rx_i\rx_j$, 
where all coefficients $r_{\sf unit}, r_{x_1}, \dots$ are from $\ZZ_p$.

\begin{theorem}
  \label{thm:min2lin-hard}
  $\minlin{2}{R}$
  is \W{1}-hard to \FPT-approximate within any constant factor
  when $R=\ZZ_p[\rx_1,\dots,\rx_m]/(\rx_1^2,\dots,\rx_m^2)$, $p$ is prime, and $m \geq 2$.
\end{theorem}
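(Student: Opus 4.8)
The plan is to give an FPT-reduction from $\minlin{3}{C_p}$ restricted to almost homogeneous instances, which is \W{1}-hard to approximate within any constant factor by Lemma~\ref{lem:min3lin-special}. I identify the prime field $\ZZ_p$ with the subring of constants of $R$, so that $C_p$ is the additive subgroup of $(R,+)$ generated by $1$, and I let $\pi : R \to \ZZ_p$ denote the ring homomorphism sending every indeterminate to $0$ (equivalently, $\pi(r)$ is the constant term of $r$). Two elementary facts about $R$ drive the argument: (i) $\rx_1 \cdot \rx_1 = \rx_2 \cdot \rx_2 = 0$ while $\rx_1\rx_2 \neq 0$, and $c\cdot\rx_1\rx_2 = 0$ for $c\in\ZZ_p$ only when $c\equiv 0\pmod p$; and (ii) $\rx_1\rx_2 \cdot r = 0$ implies $\pi(r) = 0$ --- indeed, writing $r = \sum_{U\subseteq[m]} r_U\prod_{i\in U}\rx_i$ we get $\rx_1\rx_2 r = \sum_{U\subseteq\{3,\dots,m\}} r_U \prod_{i\in\{1,2\}\cup U}\rx_i$, a $\ZZ_p$-combination of distinct squarefree monomials, so it vanishes only if every $r_U$ does, in particular $r_\emptyset = \pi(r) = 0$.

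Given an almost homogeneous instance $(S,k)$ of $\minlin{3}{C_p}$, I construct $(S',k)$ of $\minlin{2}{R}$ as follows: keep all variables of $S$; copy every crisp equation $x=a$ and every soft equation $x=0$ verbatim into $S'$ (each is a valid length-$1$, hence length-$2$, equation over $R$), preserving crisp/soft status; and for every crisp equation $x+y+z=0$ introduce a fresh variable $v$ and add the three crisp binary equations
\begin{align*}
  \rx_1 v &= \rx_1\rx_2\,y, & \rx_2 v &= \rx_1\rx_2\,x, & (\rx_1+\rx_2)v + \rx_1\rx_2\,z &= 0.
\end{align*}
This takes polynomial time and keeps the parameter, so it is an FPT-reduction provided I show $\mincost(S')=\mincost(S)$ (values in $\naturals\cup\{\infty\}$); since $\mincost$ is then preserved exactly, a factor-$c$ \FPT-approximation for $\minlin{2}{R}$ would yield one for $\minlin{3}{C_p}$ on almost homogeneous instances, contradicting Lemma~\ref{lem:min3lin-special}.

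To see $\mincost(S')\le\mincost(S)$, extend any $\alpha:V(S)\to\ZZ_p$ to $\alpha':V(S')\to R$ by $\alpha'(v)=\alpha(v)$ for $v\in V(S)$ and $\alpha'(v)=\rx_1\alpha(x)+\rx_2\alpha(y)$ for the fresh variable of $x+y+z=0$; using $\rx_i^2=0$ one checks $\rx_1\alpha'(v)=\rx_1\rx_2\alpha(y)$, $\rx_2\alpha'(v)=\rx_1\rx_2\alpha(x)$ unconditionally and $(\rx_1+\rx_2)\alpha'(v)=\rx_1\rx_2(\alpha(x)+\alpha(y))=-\rx_1\rx_2\alpha(z)$ whenever $\alpha$ satisfies $x+y+z=0$, so all three gadget equations (and all unary crisp equations) hold, while a soft equation $x=0$ is violated by $\alpha'$ exactly when it is violated by $\alpha$ (by fact (i)). Conversely, for $\mincost(S)\le\mincost(S')$, take $\alpha':V(S')\to R$ of finite cost and set $\alpha(v)=\pi(\alpha'(v))$ for $v\in V(S)$; summing the first two gadget equations and adding the third gives $\rx_1\rx_2(\alpha'(x)+\alpha'(y)+\alpha'(z))=0$, hence $\alpha(x)+\alpha(y)+\alpha(z)=0$ by fact (ii) and additivity of $\pi$, and $\alpha'(x)=a$ gives $\alpha(x)=a$, so $\alpha$ satisfies every crisp equation of $S$; moreover $x=0$ violated by $\alpha$ forces $\pi(\alpha'(x))\neq0$, hence $\alpha'(x)\neq0$, so $x=0$ is also violated by $\alpha'$. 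Thus $\cost_{S'}(\alpha')=\cost_S(\alpha)$ in the first case and $\cost_S(\alpha)\le\cost_{S'}(\alpha')$ in the second, and taking optima yields $\mincost(S)=\mincost(S')$.

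The one delicate point is fact (ii) when $m>2$: it is tempting, but false, to claim $\rx_1\rx_2 r=0$ iff $r$ has zero constant term --- for instance $\rx_1\rx_2\rx_3\neq0$ --- so the reduction must rely only on the implication $\rx_1\rx_2 r=0\Rightarrow\pi(r)=0$, which does hold because the coefficient of the monomial $\rx_1\rx_2$ inside $\rx_1\rx_2 r$ is precisely the constant term of $r$; the indeterminates $\rx_3,\dots,\rx_m$ are otherwise inert. The remaining verifications are routine manipulations with $\rx_i^2=0$, and the hardness of $\minlin{3}{C_p}$ already supplies the approximation gap.
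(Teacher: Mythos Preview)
Your proof is correct and follows essentially the same route as the paper: an FPT-reduction from almost homogeneous $\minlin{3}{C_p}$ (Lemma~\ref{lem:min3lin-special}), using the identical three-equation gadget $\rx_1 v = \rx_1\rx_2\,y$, $\rx_2 v = \rx_1\rx_2\,x$, $(\rx_1+\rx_2)v + \rx_1\rx_2\,z = 0$ to encode $x+y+z=0$, with the same forward extension $v \mapsto \rx_1\alpha(x)+\rx_2\alpha(y)$ and the same backward projection to the constant term. The paper only spells out the case $p=2$, $m=2$ and asserts the general case is straightforward; you actually carry out the general case, and your explicit verification that $\rx_1\rx_2\,r = 0 \Rightarrow \pi(r)=0$ (rather than an ``iff'') is exactly the care needed when $m>2$.
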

\begin{proof}[Proof for $p=2$, $m=2$.]
  We present a proof for the minimal case $p=2$ and $m=2$,
  and remark that it generalizes to an arbitrary prime $p$ and $m \geq 2$
  in a straightforward way.
  Consider an almost homogeneous instance $(S, k)$ of $\minlin{3}{\ZZ_2}$.
  We create an instance $(S_R, k)$ of $\minlin{2}{R}$
  with the same parameter value as follows.
  Start by adding variable $x_R$ to $V(S_R)$ for every $x \in V(S)$.
  For every (crisp) equation of the form $x = i$ in $S$, where $i \in \ZZ_2$,
  add a (crisp) equation 
  \begin{equation}
    \label{eq:const}
    x_R = i    
  \end{equation} 
  to $S_R$.
  For every equation of the form 
  $a + b + c = 0$ in $S$,
  create a new variable $v_R$ in $V(S_R)$ and 
  add the following crisp equations to $S_R$:
  \begin{equation}
    \label{eq:ternary}
    \rx    \cdot v_R =  \rx\ry \cdot b_R, \qquad
    \ry    \cdot v_R =  \rx\ry \cdot a_R, \qquad
    \rx\ry \cdot v_R = -\rx\ry \cdot c_R.
  \end{equation}
  This completes the construction.

  Towards correctness, first consider an assignment
  $\alpha : V(S) \to \ZZ_2$ and define
  $\alpha' : V(S_R) \to R$ as $\alpha'(x_R) = \alpha(x)$ 
  for all $x \in V(R)$;
  for every variable $v_R$ in $V(S_R)$,
  let $a + b + c = 0$ be the corresponding equations,
  and set $\alpha'(v_R) = \rx \cdot \alpha(a_R) + \ry \cdot \alpha(b_R)$.
  We claim that $\cost(\alpha', S_R) \leq \cost(\alpha, S)$.
  Clearly, if $\alpha$ satisfies an equation
  of the form $x = i$, then $\alpha'$ satisfies
  the corresponding constraint $x_R = i$ in $S'$.
  Moreover, if $\alpha$ satisfies an equation
  of the form $a + b + c = 0$ in $S$, 
  then we claim that $\alpha'$ satisfies all three equations 
  in~\eqref{eq:ternary}.
  Indeed,
  \begin{align*}
    \rx \cdot v_R &= \rx \cdot (\rx \cdot \alpha(a) + \ry \cdot \alpha(b)) = \rx\ry \cdot \alpha(b), \\
    \ry \cdot v_R &= \ry \cdot (\rx \cdot \alpha(a) + \ry \cdot \alpha(b)) = \rx\ry \cdot \alpha(a), \text{ and} \\
    (\rx + \ry) \cdot v_R &= \rx\ry \cdot (\alpha(a) + \alpha(b)) = \rx\ry \cdot (-\alpha(c)) = -\rx\ry \cdot \alpha(c).
  \end{align*}
  
  For the opposite direction, let 
  $\beta : V(S_R) \to R$ be an assignment to $S_R$.
  Recall that every element $r$ in $R$
  is of the form $r_{\sf unit} + r_x \rx + r_y \ry + r_{xy} \rx\ry$,
  and consider the assignment 
  $\beta' : V(S) \to \ZZ_2$ defined as
  $\beta'(x) = \beta(x_R)_{\sf unit}$.
  We claim that $\cost(\beta',S) \leq \cost(\beta,S_R)$.
  Indeed, if $\beta$ satisfies a unary equation $x_R = i$,
  then $\beta'$ satisfies the corresponding equation 
  $x = i$ because $i \in \ZZ_2$.
  Now assume $\beta$ satisfies equations~\eqref{eq:ternary}.
  We claim that $\beta'$ satisfies $a + b + c = 0$.
  Observe that for every $r \in R$,
  we have $\rx\ry \cdot r = \rx\ry \cdot r_{\sf unit}$.
  Hence, it suffices to show that $\beta$ satisfies
  $\rx\ry \cdot (a_R + b_R + c_R) = 0$.
  By summing up the first two equations in~\eqref{eq:ternary},
  we deduce that $\beta$ satisfies
  \[
    (\rx + \ry) \cdot v_R = \rx\ry \cdot (a_R + b_R).
  \]
  Combined with the third equation in~\eqref{eq:ternary},
  we get that $\beta$ satisfies
  \[
    \rx\ry \cdot (a_R + b_R) = -\rx\ry \cdot c_R \implies \rx\ry \cdot (a_R + b_R + c_R) = 0.
  \]
\end{proof}

\section{Concluding remarks}
\label{sec:conclusion}

Our main algorithmic result is that $\mintwolinz{p^n}$ 
for every prime $p$ and integer $n$
is \FPT-approximable within factor $2$ 
for every prime $p$ and $n \geq 2$.
The exact parameterized complexity of 
this problem was left open in~\cite{Dabrowski:etal:soda2023},
and remains an intriguing question.
In case the answer is positive,
this would immediately improve
the approximation factor 
for $\mintwolinz{m}$ to $\omega(m)$ by Proposition~\ref{prop:product-approx}.
Since $\ZZ_m$ is just a direct sum of $\omega(m)$ different rings $\ZZ_{p_i^{n_i}}$,
so that each equation in $\ZZ_m$ decomposes into $\omega(m)$ independent equations over these component rings,
such an approximation factor seems hard to beat. 
It is evident from the example illustrated in 
Figure~\ref{fig:classassignment}
that the greedy approach which dismantles
bundles of edges corresponding to an equation
inevitably leads to sub-optimality.


\bibliography{references}

\end{document}